\documentclass[pdflatex,sn-mathphys-num]{sn-jnl}

\usepackage{xcolor}
\usepackage{graphicx}
\usepackage{multirow}
\usepackage{amsmath,amssymb,amsfonts}
\usepackage{amsthm}
\usepackage{mathrsfs}
\usepackage[title]{appendix}
\usepackage{textcomp}
\usepackage{manyfoot}
\usepackage{booktabs}
\usepackage{algorithm}
\usepackage{algorithmicx}
\usepackage{algpseudocode}
\usepackage{listings}
\usepackage{makecell}
\usepackage{array}
\usepackage{caption}
\theoremstyle{thmstyleone}
\newtheorem{theorem}{Theorem}

\newtheorem{lemma}[theorem]{Lemma}

\theoremstyle{thmstyletwo}

\theoremstyle{thmstylethree}

\newtheorem{mechanism}{Mechanism}
\begin{document}

\title[Article Title]{Mechanism Design for Facility Location Games Under a Prelocated Facility
}

\author[1]{\fnm{Genjie} \sur{Qin}}

\author[1]{\fnm{Qizhi} \sur{Fang}}

\author*[1,2]{\fnm{Wenjing} \sur{Liu}}\email{liuwj@ouc.edu.cn}

\affil*[1]{\orgdiv{School of Mathematical Sciences}, \orgname{Ocean University of China}, \orgaddress{\street{No. 238, Songling Road}, \city{Qingdao}, \postcode{266100}, \state{Shandong Province}, \country{China}}}

\affil[2]{\orgdiv{Laboratory of Marine Mathematics}, \orgname{Ocean University of China}, \orgaddress{\street{No. 238, Songling Road}, \city{Qindao}, \postcode{266100}, \state{Shandong Province}, \country{China}}}


    \abstract{We study the problem of locating a new homogeneous facility under a prelocated facility. Here, a set of $n$ agents is located on a real line or a circle, each of whom has her location as private information, and her cost is the (expected) distance from her location to the nearest facility. Our goal is to design mechanisms which can approximately minimize the maximum cost or the social cost while eliciting agents' private information truthfully (i.e., strategy-proof).
    
    Based on real-life scenarios, we consider the problem in two settings: the general setting where each agent can be located at both sides of the prelocated facility, and the special setting where all the agents are located at the same side of the prelocated facility. For agents on a line, in the general setting, we design the best possible deterministic strategy-proof mechanism with $2$-approximation and provide a lower bound of $1.5-\epsilon\textbf{ }(\epsilon>0)$ for any randomized strategy-proof mechanism under the maximum cost objective. For the social cost, we obtain an upper bound of $n$ for deterministic strategy-proof mechanisms and lower bounds of $1.5$ and $1.0425$ for any deterministic strategy-proof mechanism and any randomized strategy-proof mechanism, respectively. In the special setting, we further provide a randomized strategy-proof $5/3$-approximation mechanism for the maximum cost and a deterministic strategy-proof $(n-1)$-approximation mechanism for the social cost. For agents on a circle, we provide a deterministic strategy-proof 2-approximation mechanism under the maximum cost objective.

    We further introduce a performance measure for mechanisms in facility location problems with prelocated facilities, called the improvement ratio, which is defined as the worst ratio between the improvement achieved by a mechanism and the best possible improvement after adding a new facility. For the maximum cost, we derive tight bounds of 1.5 for deterministic mechanisms in the general setting, and an upper bound of $\frac{15}{13}$ for randomized mechanisms in the special setting. For the social cost, we provide an upper bound of $\frac{n}{2}$ in the special setting.}


\keywords{Facility location game, Approximation, Prelocated facility,  Algorithmic mechanism design}

\maketitle

\section{Introduction}\label{sec1}

Facility location problem is an important problem in the field of optimization. Its main goal is to select the best locations to place facilities under given constraints to optimize certain objectives. Such problems usually involve engineering, operations research, or urban planning, such as medical facility location, logistics center location, or communication base station location.

It is also the prototypical problem used by Procaccia and Tennenholtz  \cite{bib1} when they introduced their highly successful agenda on approximate mechanism design without money in 2009. In the basic setting of this problem, a social planner is tasked with locating a facility or two facilities on a real line based on the location profile reported by agents to minimize the maximum cost or the social cost, and each agent's cost is the distance between her location and the nearest facility. In order to benefit herself, the agent may misreport her location to manipulate the outcome. Consequently, the social planner needs to design a mechanism that outputs an (approximately) optimal outcome while guaranteeing all the agents report truthfully. Since then, this problem has been extensively studied in the literature of theoretical computer science and artificial intelligence.

Most previous work focuses on the locating problem where prelocated facilities are not taken into account. However, in many practical applications, prelocated resources (such as facilities) should not be wasted. For instance, due to population growth, the government needs to build a new primary school to ensure adequate educational coverage in a city. Naturally, the government must determine the location of the new school based on not only where the school-age children live but also the locations of the prelocated primary schools.

In addition, it is also ubiquitous in the real-world that all agents are located at the same side of the prelocated facility. For example, residents in coastal cities usually have a great demand for seafood, but fishermen tend to sell their seafood at the seaside, which makes it inconvenient for the residents who live far from the seaside to buy seafood. Therefore, the government decides to build a new seafood market based on the location information of the residents. For another example, the building manager in a skyscraper usually only sets up life service facilities (such as takeaway cabinets) on the ground floor at the beginning. As the number of residents increases, the demand for facilities increases. To facilitate the residents, the manager considers to add a new service facility on a certain floor based on the locations of the residents who need the service.

In order to capture such scenarios, we study how to locate a new homogeneous facility on the basis of a prelocated facility in the following two settings:
\begin{itemize}
    \item General setting: Each agent can be located at both sides of the prelocated facility.
    \item Special setting: All the agents are located at the same side of the prelocated facility.
\end{itemize}
Our objective is to design mechanisms that can elicit agents' private information truthfully and approximately optimize certain social objectives. In addition, we further introduced the improvement ratio, a performance measure specifically designed for facility location problems with prelocated facilities. It is defined as the worst-case ratio between the improvement achieved by a given mechanism and the optimal improvement attainable by placing a new facility. Even if a mechanism does not achieve the optimal outcome, a low improvement ratio indicates that it makes near-optimal use of the new facility, offering a more practical perspective on mechanism design.

\subsection{Related Work}

Our work is grounded on a string of research for approximate mechanism design without money, which was initiated by Procaccia and Tennenholtz \cite{bib1}. They provided upper and lower bounds on the achievable approximation ratio of strategy-proof mechanisms for one facility and two-facility problems on the real line under two types of social objectives. For the one facility game, they gave the best possible deterministic and randomized mechanisms under each social objective. For the two-facility game under the maximum cost objective, they provided a best possible deterministic mechanism and an upper bound of $5/3$ and a lower bound of $1.5$ for randomized strategy-proof mechanisms. In addition, they gave an upper bound of $n-2$ and a lower bound of $1.5$ for deterministic strategy-proof mechanisms under the social cost objective. Later, Lu et al. \cite{bib3} improved these results by a randomized  strategy-proof $4$-approximation mechanism in general metric spaces and a lower bound of $\frac{n-1}{2}$ for any deterministic strategy-proof mechanism on a line. Fotakis and Tzamos \cite{bib4} raised the lower bound to $n-2$, which ultimately eliminated the gap between
the upper bound and the lower bound on deterministic strategy-proof mechanisms in two-facility games. Alon et al. \cite{bib2} considered the problem of locating a facility on networks. 

Much work further considers the situation where the agents have various preferences for facilities. Cheng et al. \cite{bib21} considered an obnoxious facility game on the network where all agents try to be as far away from the facility as possible. Ye et al. \cite{bib20} studied the problem of locating an obnoxious facility on a real line under the objecives of maximizing the sum of squares of distances (maxSOS) and maximizing the sum of distances (maxSum). Serafino and Ventre \cite{bib18,bib5} considered building two heterogeneous facilities, where the cost of each agent is the sum of the distances between her location and the facilities she is interested in. Yuan et al. \cite{bib6} studied the two-facility games with min-variant and max-variant, where each agent's cost is the distance between her location and the closer facility or the further facility respectively. Li et al. \cite{bib7} improved the upper bound for the min-variant. Anastasiadis and Deligkas \cite{bib8} considered the situation where each agent might want to be close to a facility, be away from a facility, or be indifferent about its presence. Fong et al. \cite{bib9} studied the facility location game with fractional preference where the preference of each agent for facilities is represented by a number between 0 and 1. So far, there are numerous variations of the facility location game, such as other types of cost functions \cite{bib14,bib15}, constraints on facility locations and facilities' capacity \cite{bib12,bib10,bib13,bib11}, etc. Interested readers may refer to a detailed survey by Chan et al. \cite{bib22}.

As far as we know, all the previous work except \cite{bib16,bib17,bib19} does not take prelocated facilities into account. Recently, Chan and Wang \cite{bib16} considered modifying the structure of regions by adding a costless shortcut edge based on a prelocated facility on the real line. Chan et al. \cite{bib17} further studied the problem of adding a non-zero cost shortcut or two costless shortcuts. Qin et al. \cite{bib19} extended the model in \cite{bib16} by building a bridge to connect two separated regions. Compared with \cite{bib16,bib17,bib19}, we study how to locate a new homogeneous facility based on a prelocated facility, which can be considered as another natural perspective of utilizing the prelocated facility.  It is worth mentioning that our model is equivalent to the model in \cite{bib16} when one endpoint of the costless shortcut edge in \cite{bib16} is fixed at the prelocated facility. Therefore, all the mechanisms in our work can be used directly in \cite{bib16} and the mechanism in \cite{bib16} that outputs one endpoint exactly at the prelocated facility can also apply to our model.

\subsection{Our Results}

In this paper, we study the problem of locating a new homogeneous facility based on a prelocated facility on a real line or a circle. We derive upper and lower bounds on the achievable approximation ratio for strategy-proof mechanisms in two settings. The general setting where agents may located at either side of the prelocated facility, and the special setting where all agents are located at one side only. The results are summarized in Table \ref{tab1}, where UB and LB represent the upper bound and the lower bound, respectively.


\begin{table}[h]
\centering
\caption{Bounds on Approximation Ratios On the Line and circle}
\label{tab1}
\begin{tabular*}{\textwidth}{@{\extracolsep\fill}cccccc}
\toprule
\multirow{2.5}{*}{Domain} & \multirow{2.5}{*}{Setting} & \multicolumn{2}{c}{Maximum Cost} & \multicolumn{2}{c}{Social Cost} \\\cmidrule{3-4}\cmidrule{5-6}
& & Deterministic & Randomized & Deterministic & Randomized \\
\midrule
\multirow{2}{*}{Line} & \multirow{2}{*}{General} 
& UB: 2     & UB: 2                 & UB: $n$           & UB: 6 \cite{bib1} \\
                      &        
& LB: 2     & LB: $1.5-\epsilon$    & LB: 1.5           & LB: 1.0425 \\
\midrule
\multirow{2}{*}{Line} & \multirow{2}{*}{Special} 
& UB: 2     & UB: 5/3               & UB: $n-1$         & UB: 6 \cite{bib1} \\
                      &        
& LB: 2     & LB: $1.5-\epsilon$    & LB: 1.0425        & LB: 1.0425 \\
\midrule
\multirow{2}{*}{circle} & \multirow{2}{*}{General} 
& UB: 2     & UB: 2                 & UB: $n$           & UB: 6 \cite{bib1} \\
                       &        
& LB: 2     & LB: $1.5-\epsilon$    & LB: 1.5           & LB: 1.0425 \\
\botrule
\end{tabular*}
\end{table}
We also introduce the improvement ratio, a performance measure specially for mechanisms in facility location problems with prelocated facilities. We establish theoretical bounds on the improvement ratio of strategy-proof mechanisms under both the maximum and social cost objectives, in both the general and special settings. The results are summarized in Table \ref{tab2}.
\begin{table}[htp]\label{tab2}
\centering
\caption{\centering Bounds on Improvement Rations On the Line}\label{tab2}
\begin{tabular*}{\textwidth}{@{\extracolsep\fill}ccccc}
\toprule%
& \multicolumn{2}{@{}c@{}}{General Setting} & \multicolumn{2}{@{}c@{}}{Special Setting} \\\cmidrule{2-3}\cmidrule{4-5}%
Social Objective & Deterministic & Randomized & Deterministic & Randomized  \\
\midrule
Maximum & UB: 1.5 & UB: 1.5 & UB: 1.5 & UB: 15/13 \\
cost & LB: 1.5 & LB: 1.01 & LB: 1.5 &  LB: 1.01\\
\midrule
Social cost & \textbackslash  & \textbackslash  & UB: n/2  &\textbackslash \\

\botrule
\end{tabular*}
\end{table}

\noindent\textbf{Remarks.} In Table \ref{tab1}, under the social cost objective, we only obtain a non-constant upper bound for deterministic strategy-proof mechanisms in both settings. However, Mechanism 3 in \cite{bib16} provides an upper bound of 6 for randomized strategy-proof mechanisms in our model since this mechanism always locates one endpoint of the costless shortcut at the prelocated facility. On the other hand, Mechanism \ref{mec1} in our work can be used in \cite{bib16} to improve their upper bound from 3 to 2 under the maximum cost objective. Further, the proof of Theorem \ref{thm9} in this paper can also be used in \cite{bib16} to improve their lower bound of any randomized strategy-proof mechanism from 1.02 to 1.0425 under the social cost objective.

\section{Model}\label{sec2}

Suppose a facility has already been located on the real line, and its location is publicly known. Without loss of generality, assume that the facility is located at $y_{0}=0$. There is a set of $n$ agents \(N=\{1,2,...,n\}\) who need to receive service from the facility. For each agent $i\in N$, she has a private location $x_{i}\in \mathbb{R}$ and let $\mathbf{x}=(x_1,x_2,\ldots,x_n)$ be a location profile or an instance. We want to construct a new homogeneous facility on the real line and every agent can be served by the nearest one.

A deterministic mechanism is a function $f : \mathbb{R}^{n}\rightarrow\mathbb{R}$ which maps a location profile to a facility location. If $f( \mathbf{x})=y$, the cost of agent $i\in N$ is $cost(x_{i},f( \mathbf{x}))=\min\left\{|x_{i}|,|x_{i}-y|\right\}$. A randomized mechanism is a function $f$ which maps a location profile to a probability distribution over facility locations. Formally, \( f \) can be defined as \( f: \mathbb{R}^n \rightarrow \Delta(\mathbb{R}) \), where \( \Delta(\mathbb{R}) \) represents the set of probability distributions over \( \mathbb{R} \) and the cost of agent $i\in N$ is $cost(x_{i},f( \mathbf{x}))=\mathbb{E}_{Y\sim f( \mathbf{x})}\min\left\{|x_{i}|,|x_{i}-Y|\right\}$.

A mechanism is strategy-proof if no agent can benefit from misreporting her location regardless of the locations reported by the others. Formally, $\forall \mathbf{x}\in \mathbb{R}^{n}, \forall i\in N, \forall x_{i}'\in\mathbb{R}, cost(x_{i},f( \mathbf{x}))\leq cost\left(x_{i},f(x_{i}', \mathbf{x}_{-i})\right).$ Here, $ \mathbf{x}_{-i}$ represents the location vector of the set of agents  $N\!\setminus\!\{i\}$.

The maximum cost and social cost of a mechanism $f$ with respect to a location profile \( \mathbf{x}\) are defined as the maximum cost among all $n$ agents and the total cost of all $n$ agents, respectively. For deterministic mechanisms $f$,
$
MC( \mathbf{x},f)=\max_{i\in N}cost(x_{i},f( \mathbf{x})) $ and $ SC( \mathbf{x},f)=\sum_{i\in N}cost(x_{i},f( \mathbf{x})).
$
While for randomized mechanisms $f$, $MC( \mathbf{x},f)=\mathbb{E}_{Y\sim f( \mathbf{x})}[MC(\textbf{x},Y)]$ and $SC( \mathbf{x},f)=\mathbb{E}_{Y\sim f( \mathbf{x})}[SC(\textbf{x},Y)].$

Let the optimal solution for an instance $ \mathbf{x}$ under the maximum cost objective or the social cost objective be denoted as $OPT_{MC}(\mathbf{x})$ and $OPT_{SC}(\mathbf{x})$, respectively. The subscript is omitted without confusion. A strategy-proof mechanism \( f \) has an \textbf{approximation ratio} of $\gamma\textbf{ }(\geq 1)$ under the maximum cost objective, if
\begin{equation*}
\gamma=\sup_{\mathbf{x}\in\mathbb{R}^{n}}\frac{MC(\mathbf{x},f)}{MC(\mathbf{x},OPT_{MC})}.
\end{equation*}
The approximation ratio is defined similarly under the social cost objective. 

In this paper, we will focus on anonymous\footnote{A mechanism is anonymous if its outcome depends only on the agents' locations, not on their identities.} strategy-proof mechanisms with good approximation ratios under the social objectives of minimizing the maximum cost and the social cost. Henceforth, for simplicity, the instance $\mathbf{x}$ is assumed to satisfy $x_{1}\leq x_{2}\leq...\leq x_{n}$ without further comment.\\
\textbf{Notations.} For \( \forall \mathbf{x} \in \mathbb{R}^{n} \), \( x_{1} \) and \( x_{n} \) are the two endpoints. Define \( L(\mathbf{x}) \) and \( S(\mathbf{x}) \) as the locations of the endpoints that are farther and closer to $y_{0}$, respectively. Define
\begin{equation*}
    l( \mathbf{x})=\left\{
\begin{array}{cc}
  \min\left\{x_{i}|x_{i}>\frac{L( \mathbf{x})}{3}\right\}, & \text{if } L( \mathbf{x})\geq0, \\
  \max\left\{x_{i}|x_{i}<\frac{L( \mathbf{x})}{3}\right\}, & \text{if } L( \mathbf{x})<0.
\end{array}
\right.
\end{equation*}
\begin{equation*}
    b( \mathbf{x})=\left\{
\begin{array}{cc}
  \max\left\{x_{i}|x_{i}\leq\frac{L( \mathbf{x})}{3}\right\}, & \text{if } L( \mathbf{x})\geq0, \\
  \min\left\{x_{i}|x_{i}\geq\frac{L( \mathbf{x})}{3}\right\}, & \text{if } L( \mathbf{x})<0.
\end{array}
\right.
\end{equation*}

\section{Line}
In this section, we study the approximation performance of strategy-proof mechanisms for maximum cost and the social cost objectives. We focus on the real line, analyzing both general and special settings.

\subsection{General Setting}\label{sec3}
This section discusses how to locate a new (homogeneous) facility on the assumption that agents may be located on either side of the prelocated facility. We will study deterministic and randomized strategy-proof mechanisms under the objectives of minimizing the maximum cost or the social cost respectively.

\subsubsection{Maximum Cost}\label{subsec2}

For the maximum cost, we first present the optimal solution and the optimal value. Then we design a deterministic strategy-proof $2$-approximation mechanism which is also the best possible deterministic mechanism. We also prove a lower bound of $1.5-\epsilon$ for any randomized strategy-proof mechanism.

\begin{theorem}\label{thm1}
For $\forall  \mathbf{x} \in \mathbb{R}^{n}$, the optimal solution for minimizing the maximum cost is given by
$$OPT( \mathbf{x})=\frac{l( \mathbf{x})+L( \mathbf{x})}{2}.$$
The optimal value is
$$
MC( \mathbf{x},OPT)=\left\{
\begin{array}{cc}
  \max\left\{|S( \mathbf{x})|,|b( \mathbf{x})|,\frac{|L( \mathbf{x})-l( \mathbf{x})|}{2}\right\}, & \text{if }S( \mathbf{x})\cdot L( \mathbf{x})<0, \\
  \max\left\{|b( \mathbf{x})|,\frac{\left|L( \mathbf{x})-l( \mathbf{x})\right|}{2}\right\}, & \text{if }S( \mathbf{x})\cdot L( \mathbf{x})\geq0.
\end{array}
\right.
$$
\end{theorem}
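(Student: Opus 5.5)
By symmetry (reflecting $\mathbf{x}\mapsto-\mathbf{x}$ swaps the two branches in the definitions of $l$ and $b$) we assume $L(\mathbf{x})\ge 0$, i.e.\ $L=x_n$ and $|x_1|\le x_n$. The plan is to reduce to a one-dimensional problem. Agents on the negative side (if any) are served at least as well by $y_0=0$ as by any facility $y\ge 0$, so their contribution to the maximum cost is at least $|x_1|=|S(\mathbf{x})|$. Placing the new facility on the negative side cannot help. It can only reduce costs of agents whose distance to $0$ is at most $|x_1|\le x_n$, while $x_n$'s cost stays $x_n$, which is the largest possible value. So we restrict to $y\ge 0$. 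We then also note that when $S(\mathbf{x})\cdot L(\mathbf{x})<0$ the term $|S(\mathbf{x})|$ is a genuine lower bound, whereas when $S\cdot L\ge0$ all agents are nonnegative and this term is absent or dominated by $|b|$.

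For $y\ge0$ the positive agents split into those served by $0$, which are the agents with $x_i\le y/2$, and those served by $y$. The cost of the first group is its largest member, and the cost of the second group is $\max\{y-\min,\ x_n-y\}$ over that group. Hence the optimum is obtained by choosing a threshold index $k$: agents $x_1,\dots,x_k$ go to $0$ with cost $\max\{|x_1|,x_k\}$, and agents $x_{k+1},\dots,x_n$ go to $y=(x_{k+1}+x_n)/2$ with cost $(x_n-x_{k+1})/2$. We must then check that this $y$ actually induces the assumed assignment, i.e.\ $x_k\le y/2$. This may fail, but in that case the true costs are only smaller, so the optimum still equals $\min_k \max\{|x_1|,x_k^+,(x_n-x_{k+1})/2\}$.

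The key step is to show that the minimizing threshold is the one splitting at $L/3$. As $k$ increases, the first term $x_k$ increases and the second term $(x_n-x_{k+1})/2$ decreases, and the two cross exactly at $x=L/3$, since $x = (L-x)/2 \iff x=L/3$. If the threshold is moved left of $b$, then $l'\le b\le L/3$ gives a second term $(L-l')/2\ge L/3\ge b$, so the value is at least $\max\{b,(L-l)/2\}$. If it is moved right of $l$, the first term is at least $l> L/3>(L-l)/2$, again no better. Therefore the split $b\,|\,l$ is optimal, $OPT=(l+L)/2$, and the value is $\max\{|S|,|b|,(L-l)/2\}$ in the mixed-sign case and $\max\{|b|,(L-l)/2\}$ otherwise. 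We must also confirm consistency at this split: each $x_i\le b$ has cost $\min\{x_i,|x_i-y|\}\le |b|$, and each $x_i\ge l$ satisfies $|x_i-y|\le (L-l)/2$. Both give upper bounds, so no exact assignment check is needed.

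The main obstacle is this exchange argument, in particular handling the edge cases rigorously. These include the case where no agent lies in $(L/3,L]$ other than $L$ itself (so $l$ could equal $L$), the case of ties at exactly $L/3$, the case where $b$ might be negative (where $|b|\le|S|$ must be shown to be absorbed), and lower-bounding an arbitrary $y$ rather than only midpoints. For the last point, we argue directly: for any $y\ge0$, either some agent $\ge l$ is served by $0$, giving cost $\ge l>L/3\ge (L-l)/2$ and also $\ge b$; or all agents $\ge l$ are served by $y$, giving cost $\ge (L-l)/2$. In the second case, either $b$ is served by $0$ (cost $\ge|b|$), or $b$ is also served by $y$, which forces cost $\ge (L-b)/2\ge L/3\ge b$ and $(L-b)/2\ge (L-l)/2$.
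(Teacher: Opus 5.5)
Your proof is correct. At its core it follows the paper's approach: lower-bound the maximum cost through the agents at $S$, $b$, $l$, $L$ relative to the cut $L/3$, and check that $(l+L)/2$ attains the claimed value. The differences lie in how the lower bound is organized and how complete it is.

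\begin{itemize}
\item The paper splits on the position of the optimal facility and writes out only the case $OPT(\mathbf{x})\ge 2L(\mathbf{x})/3$, where the agent at $b$ is automatically served by $y_0$. It leaves the cases $0\le y<2L/3$ and $y<0$ implicit. These do follow, because agent $L$ then pays more than $L/3$, which dominates $|b|$ and $(L-l)/2$.
\item You dispose of $y<0$ explicitly: agent $L$ then pays $L=\max_i|x_i|$, which no placement exceeds. For every $y\ge 0$ you then split on whether the agents at $l,L$ and at $b$ are served by $0$ or by $y$. Your sub-case where $b$ is served by the new facility, giving cost at least $(L-b)/2\ge L/3\ge b$, is exactly the piece the paper's write-up skips.
\item Your threshold-index reduction and exchange argument additionally explain why $L/3$ is the right cut: it is where $x$ and $(L-x)/2$ cross. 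Once you have the direct argument for arbitrary $y$, however, they are logically redundant.
\end{itemize}

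There are a few small slips in the auxiliary parts:
\begin{itemize}
\item In the same-sign case, the relaxation $\min_k\max\{|x_1|,x_k^+,(x_n-x_{k+1})/2\}$ must drop $|x_1|$ and allow $k=0$, i.e.\ all agents served by the new facility. As displayed, it returns $1$ instead of $1/2$ for $\mathbf{x}=(1,2)$.
\item In the upper-bound check, a negative agent $x_i\le b$ has cost $|x_i|\le|S|$, not $\le|b|$. This is harmless, since $|S|$ appears in the formula.
\item Like the paper, you do not say what happens when no agent lies at or below $L/3$. Then $b$ is undefined, and you should state that its term is dropped (equivalently, set to $0$).
\end{itemize}
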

\begin{proof}
Consider the case of $S(\mathbf{x})\cdot L(\mathbf{x})<0$. Without loss of generality, let $S(\mathbf{x})\leq0\leq L(\mathbf{x})$. Obviously,  $MC(\mathbf{x},OPT)\geq|S(\mathbf{x})|$. If $OPT(\mathbf{x})\geq\frac{2L(\mathbf{x})}{3}$, then the cost of agent on $b(\mathbf{x})$ is $b(\mathbf{x})$. Furthermore, the maximum cost of the agents on $l(\mathbf{x})$ and $L(\mathbf{x})$ is at least $\frac{L(\mathbf{x})-l(\mathbf{x})}{2}$. Combining the above, we have $MC(\mathbf{x},OPT)\geq\max\left\{|S( \mathbf{x})|,|b( \mathbf{x})|,\frac{|L(\mathbf{x})-l( \mathbf{x})|}{2}\right\}$. It is easy to see that $\frac{l(\mathbf{x})+L(\mathbf{x})}{2}$ achieves a maximum cost of at most $\max\left\{|S(\mathbf{x})|,|b(\mathbf{x}),\frac{|L(\mathbf{x})-l(\mathbf{x}))|}{2}|\right\}$, which indicates the optimality. The proof for the case of $S(\mathbf{x})\cdot L(\mathbf{x})\geq0$ is similar.
\end{proof}
Note that the optimal mechanism is not strategy-proof. For example, consider an instance $\mathbf{x}=(x_{1},x_{2})=(-3,4)$. Obviously, $OPT(\mathbf{x})=4,cost(x_{1},OPT(\mathbf{x}))=3$. However, if agent 1 misreports her location as $-5$, denoting $\mathbf{x'}=(-5,4)$, we have $cost(x_{1},OPT(\mathbf{x'}))=cost(x_{1},-5)=2$. Thus, agent 1 can decrease her cost by misreporting.

\begin{theorem}\label{thm2}
    Any deterministic strategy-proof mechanism has an approximation ratio of at least $2$ for the maximum cost.
\end{theorem}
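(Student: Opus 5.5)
The plan is the usual ``manipulation chain'' argument for lower bounds. Suppose, for contradiction, that $f$ is a deterministic strategy-proof mechanism with approximation ratio $2-\delta$ for some $\delta>0$. I will exhibit a small family of profiles and use strategy-proofness to move from one profile to another. The aim is to end at a profile where the ratio constraint and the constraint inherited through the deviations cannot both hold. Theorem~\ref{thm1} supplies the value $MC(\mathbf{x},OPT)$ on every profile used, so each ratio constraint becomes an explicit interval of admissible locations $y=f(\mathbf{x})$.

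\textbf{Step 1 (base profile).} I would start from a two-agent profile with both agents on the positive side, $\mathbf{x}=(a,b)$ with $0<a<b$. By Theorem~\ref{thm1}, $MC(\mathbf{x},OPT)=\min\{a,\frac{b-a}{2}\}$. The agent at $a$ is served by $y_0$ at cost $a$. The agent at $b$ is also served by $y_0$ unless the new facility is near her, so a ratio below $2$ forces $y$ into an explicit window around $b$ (or around the midpoint). A small $a$ gives a tight window near $b$; a moderate $a$ gives a window around $\frac{a+b}{2}$. I would also use a profile with agents on both sides, such as $(-s,t)$ with optimum $\min\{s,t\}$. That profile is useful because one agent is served by $y_0$ at a fixed cost and is therefore a natural deviator, as in the non-strategy-proofness example $(-3,4)$.

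\textbf{Step 2 (deviation chain).} Fix a profile $\mathbf{x}$ with admissible window $I(\mathbf{x})$. For an agent $i$ and a misreport $x_i'$, strategy-proofness applied in both directions gives two conditions. First, $cost(x_i,f(\mathbf{x}))\le cost(x_i,f(x_i',\mathbf{x}_{-i}))$. Second, $cost(x_i',f(x_i',\mathbf{x}_{-i}))\le cost(x_i',f(\mathbf{x}))$. These inequalities pin $f(x_i',\mathbf{x}_{-i})$ relative to $f(\mathbf{x})$. I plan to move one agent step by step, for example the right endpoint outward, or the left agent toward $0$ so that the optimum shrinks. The windows then nest, and eventually a profile is reached whose window, intersected with the region forced by the previous deviation, is empty.

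\textbf{Main obstacle.} The hard step is choosing the profiles. In the prelocated model every agent's cost is capped by her distance to $0$, so many naive deviations are harmless. In particular, an agent close to $0$ can never be hurt by more than $|x_i|$. The construction therefore needs a profile where the optimum is attained by a whole interval of locations, giving the mechanism freedom. It also needs a profile where the optimum is small and forces $y$ onto one specific point, so the earlier freedom can be exploited by a deviation.

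My first concrete attempt would be three agents, of the form $(-1,1,\text{far})$ or $(1,3,\text{far})$. The far agent is the one who misreports: she pushes $y$ toward herself while the others rely on $y_0$. The $\frac{b-a}{2}$ term of Theorem~\ref{thm1} would then produce the factor $2$ in the limit $\delta\to 0$. If three agents do not suffice, I would pad with duplicate agents, which leaves the optimum unchanged and works for general $n$.
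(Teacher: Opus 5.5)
Your proposal describes the manipulation-chain method but never carries it out. The gap is exactly the step you call the main obstacle: you fix no sequence of profiles and derive no contradiction. Your concrete candidates, such as $(-1,1,\text{far})$, $(1,3,\text{far})$ or two-sided profiles $(-s,t)$, are not worked out, and none of them is needed.

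Your diagnosis of what the argument requires is also off. You say you need a profile whose optimum pins $y$ onto one specific point. In fact no ratio constraint has to pin anything. The pinned point comes from strategy-proofness: you place the deviating agent exactly at the mechanism's own previous output. Likewise, the useful move is not pushing an endpoint outward or moving the left agent toward $0$. It is moving the right endpoint \emph{inward}, onto the facility.

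The missing argument needs only your Step~1 window and the second inequality of your Step~2.
\begin{itemize}
\item Take $\mathbf{x}=(1,2)$. By Theorem~\ref{thm1}, $MC(\mathbf{x},OPT)=\frac12$, so a ratio below $2$ forces both agents' costs below $1$. Hence $f(\mathbf{x})=1+\epsilon$ for some $\epsilon\in(0,1)$.
\item Now take $\mathbf{x}'=(1,1+\epsilon)$. Here $MC(\mathbf{x}',OPT)=\epsilon/2$, so the ratio constraint needs $MC(\mathbf{x}',f)<\epsilon<1$. The prelocated facility is then useless to both agents, and $f(\mathbf{x}')$ must lie in the open interval $(1,1+\epsilon)$. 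So the agent at $1+\epsilon$ pays a strictly positive cost.
\item If that agent reports $2$ instead, the outcome is $f(\mathbf{x})=1+\epsilon$, her exact location, and she pays $0$. In your notation, apply $cost(x_i',f(x_i',\mathbf{x}_{-i}))\le cost(x_i',f(\mathbf{x}))$ with $x_i'=1+\epsilon=f(\mathbf{x})$. This forces $f(\mathbf{x}')=1+\epsilon$, which lies outside the admissible window. That is the contradiction.
\end{itemize}

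This two-agent, same-side argument is the paper's entire proof. To handle general $n$, pad the extra agents at location $1$ (the non-deviating position), not as duplicates of the deviator. The optimum is unchanged, and only a single agent ever misreports.
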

\begin{proof}
    Assume there exists a deterministic strategy-proof mechanism $ f $ with approximation ratio less than 2. Consider an instance $  \mathbf{x}=(x_{1},x_{2})=(1,2) $. It holds that $ MC( \mathbf{x},f) < 2 \cdot MC( \mathbf{x},OPT) = 1 $. Therefore, $ f( \mathbf{x}) \in (1,2) $.  Without loss of generality, let $ f(\mathbf{x}) = 1 + \epsilon $ where $ \epsilon \in (0, \frac{1}{2}] $. Consider the instance $  \mathbf{x'}=(x_{1},x_{2}')=(1,1+\epsilon) $, then $ f( \mathbf{x'}) \in (1, 1+\epsilon) $. Agent 2 can benefit by reporting $ x_{2}'$ to $x_{2}$, which contradicts the strategy-proofness of $f$.
\end{proof}

\begin{mechanism}\label{mec1}
    Given $\mathbf{x}=(x_{1},...,x_{n})\in\mathbb{R}^{n}$, if $|x_{n}|\geq|x_{1}|$, the facility is located at $y$ as follows:
$$
   y=\left\{
    \begin{array}{cl}
    x_{n} &, \text{if } 0 \leq x_{1} \leq x_{n}, \\
    \max\left\{2|x_{1}|, x_{n}\right\} &, \text{if } x_{1} < 0 \leq x_{n}.
    \end{array}
    \right.
$$
If $|x_{n}|<|x_{1}|$, $y$ is defined symmetrically.
\end{mechanism}
\begin{theorem}\label{thm3}
Mechanism \ref{mec1} is strategy-proof.
\end{theorem}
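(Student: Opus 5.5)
The plan is a direct case analysis on which agent deviates. It rests on one structural observation about Mechanism \ref{mec1}: the output depends only on the two endpoints $x_1,x_n$, and it always lies on the side of $L(\mathbf{x})$ at distance at least $|L(\mathbf{x})|$ from $y_0=0$. Concretely, $y=L(\mathbf{x})$ if all agents are on one side. Otherwise $|y|=\max\{2|S(\mathbf{x})|,|L(\mathbf{x})|\}$, where $S(\mathbf{x})$ is then on the opposite side. By the symmetry of the mechanism I will assume $|x_n|\ge|x_1|$, so $x_n=L(\mathbf{x})\ge 0$ and $y\ge x_n$.

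From this, the truthful cost of agent $i$ is as follows:
\begin{itemize}
\item if $x_i<0$, the cost is $|x_i|$;
\item if $x_i\ge 0$, the cost is $\min\{x_i,\,y-x_i\}$, since $0\le x_i\le x_n\le y$.
\end{itemize}
I then show that no report $x_i'$ yields a facility $y'$ with $\min\{|x_i|,|x_i-y'|\}$ strictly smaller.

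\textbf{Agents with $x_i<0$.} Such an agent can only gain if $y'$ lands on the negative side, which requires her report to become the farther endpoint, i.e.\ $|x_i'|>x_n$. The new profile then has $L=x_i'<0$, and its closer endpoint is a nonnegative agent, namely $x_n$, the largest of the others. So $|y'|\ge 2x_n\ge 2|x_1|\ge 2|x_i|$, and hence $|x_i-y'|\ge|x_i|$: no gain. Any report that keeps the positive side farthest leaves the facility on the positive side, and her cost stays $|x_i|$. This step is exactly where the factor $2$ in $\max\{2|x_1|,x_n\}$ is needed.

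\textbf{Agents with $x_i\ge 0$.} Her cost can only drop by moving $y'$ into $(x_i, y)$, i.e.\ strictly closer to her from the right; facilities at or left of $x_i$ give cost at least $x_i$. I split into three kinds of reports.
\begin{itemize}
\item Reports that keep $x_i'\ge 0$ and keep the positive side as $L$. Then $y'\ge\max\{2|x_1'|,x_n'\}$, where the negative endpoint of the others is unchanged (or absent). If $y=x_n$ was produced by the same-side rule with $x_i=x_n$, her cost is already $0$. Otherwise $y$ is determined by $2|x_1|$ or by $x_n$ with $x_i<x_n$, and neither quantity can be decreased by her report. So $y'\ge y$.
\item Reports with $x_i'<0$ that keep the positive side farther. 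These can only increase $2|S|$, so again $y'\ge y$.
\item Reports that flip the farther side to negative. The facility then goes to the negative side, and her cost becomes $x_i$, which is no gain.
\end{itemize}
The all-nonnegative case $x_1\ge 0$, with $y=x_n$, is handled the same way: a negative report gives $y'=\max\{2|x_i'|,x_n\}\ge x_n$ or flips the side, and a nonnegative report cannot move the maximum below $x_n$ unless $x_i=x_n$, whose cost is $0$ anyway.

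The main obstacle is bookkeeping rather than ideas. When a deviating agent changes which endpoint is $L$ or $S$, I must verify that the new $L$, $S$ and the mechanism's case are correctly identified, in particular the case where the deviating agent herself was an endpoint. I will handle this by always expressing $y'$ through the endpoints of the \emph{other} agents together with $x_i'$, and bounding $|y'|$ below by those. The first thing to check carefully is the negative-agent flip, since it is the only place where the specific constant $2$ is essential.
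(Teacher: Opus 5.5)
Your proposal is correct and is essentially the paper's argument. It is a direct case analysis resting on three facts: an agent with positive cost cannot pull the output below $y$ while the positive side stays farther; a flip by a nonnegative agent sends the facility to the negative side, where her cost is $x_i$; and a flip by a negative agent gives $|y'|\ge 2x_n\ge 2|x_i|$. The paper simply organizes the cases by the mechanism's three regimes rather than by the deviator's sign.

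One small correction: the remark that facilities at or left of $x_i$ cost at least $x_i$ is false for $y'\in(0,x_i)$. It is not load-bearing, though, because your bullets actually show $y'\ge y$ or $y'<0$ whenever agent $i$'s truthful cost is positive. The two-sided case $y=x_n=x_i$ is missing from your \emph{otherwise} list, but there her cost is $0$, so it is trivial.
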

\begin{proof} Denote Mechanism \ref{mec1} by \( f \). Given any instance \(  \mathbf{x}=(x_{1},x_{2},...,x_{n})\in\mathbb{R}^{n} \), we need to prove that every agent $i\in N$ cannot benefit by misreporting her location $x_{i}$ as $x_{i}'\in \mathbb{R}$. Denote $\mathbf{x'}=(x_{i}',\mathbf{x}_{-i})$. The proof falls into the following cases:

\textbf{Case 1} \( 0\leq x_{1}\leq x_{n}. \)

Obviously, \(f( \mathbf{x})=x_{n} \) and any agent $i$ with $x_{i}=x_{n}$ has no incentive to lie. Now consider agent $i$ with $x_{i}<x_{n}$. If $x_{i}'>x_{i}$, then $f(\mathbf{x'})\geq f(\mathbf{x})$, which implies that the facility  is moving further away from agent $i$. If $x_{i}'<x_{i}$, then $f(\mathbf{x'})=f(\mathbf{x})$ when $x_{i}'\geq -\frac{x_{n}}{2}$, $f(\mathbf{x'})=2|x_{i}'|\geq f(\mathbf{x})$ when $-x_{n}\leq x_{i}'<-\frac{x_{n}}{2}$ and $f(\mathbf{x'})\leq x_{i}'<-x_{n}$ when $x_{i}'<-x_{n}$, none of which can make agent $i$'s cost decrease.

\textbf{Case 2} \( x_{1}<0\leq x_{n} \) and \( 2|x_{1}|\leq x_{n} \).

In this case, $f(\mathbf{x})=x_{n}$. For agent $i$ with $x_{i}\geq0$, the proof is similar to Case 1. Now consider agent $i$ with $x_{i}<0$. Here, $cost(x_{i},f(\mathbf{x}))=|x_{i}|\leq x_{1}\leq\frac{x_{n}}{2}$. Note that $f(\mathbf{x'})\geq f(\mathbf{x})$ when $x_{i}'>-x_{n}$ and $f(\mathbf{x'})\leq x_{i}'$ when $x_{i}'<-x_{n}$. Both of them keep agent $i$'s cost unchanged.

\textbf{Case 3} \( x_{1}<0\leq x_{n} \) and \( 2|x_{1}|> x_{n} \).

In this case, $f(\mathbf{x})=2|x_{1}|$. For agent $i$ with $x_{i}\geq0$, her cost will never decrease by misreporting. For agent $i$ with $x_{i}<0$, $cost(x_{i},f(\mathbf{x}))=|x_{i}|\leq|x_{1}|\leq x_{n}.$ If $x_{i}'\geq -x_{n}$, then $f(\mathbf{x'})\geq x_{n}\geq0$. Otherwise, $f(\mathbf{x'})=-\max\left\{2x_{n},|x_{i}'|\right\}\leq-2x_{n}$, which implies $|x_{i}-f(\mathbf{x'})|\geq x_{n}\geq |x_{i}|$. In both cases, the cost of agent $i$ have not changed.
\end{proof}

\begin{theorem}\label{thm4}
Mechanism \ref{mec1} is $2$-approximation under the maximum cost objective.
\end{theorem}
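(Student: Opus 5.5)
By symmetry we may assume $|x_n|\ge |x_1|$, so $L(\mathbf{x})=x_n\ge 0$. We then follow the same three cases as in the proof of Theorem \ref{thm3}. In each case we bound $MC(\mathbf{x},f)$ from above and compare it with a lower bound on $MC(\mathbf{x},OPT)$ read off from Theorem \ref{thm1}.

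Two lower bounds do most of the work. First, since $OPT(\mathbf{x})=\frac{l(\mathbf{x})+L(\mathbf{x})}{2}$ and the optimal value is at least $\frac{L(\mathbf{x})-l(\mathbf{x})}{2}$ and at least $|b(\mathbf{x})|$, we get $MC(\mathbf{x},OPT)\ge \frac{x_n}{3}$ whenever $x_n>0$ and some agents lie in $[0,x_n]$. More simply, for every agent $i$ with $x_i\in[0,x_n]$, either $x_i\le L/3$, in which case $b\ge x_i$, or $x_i>L/3$, in which case $l\le x_i$. Second, if $x_1<0$ then $MC(\mathbf{x},OPT)\ge|x_1|$, because the agent at $x_1$ is served either by $0$ or by a new facility that lies far to the right.

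\textbf{Case 1} ($0\le x_1\le x_n$, $y=x_n$). Every agent $i$ pays $\min\{x_i,x_n-x_i\}\le x_n/2$. We aim to show $\min\{x_i,x_n-x_i\}\le 2\,MC(\mathbf{x},OPT)$ for each $i$, splitting on whether $x_i\le x_n/3$ or $x_i>x_n/3$.
\begin{itemize}
\item If $x_i\le x_n/3$, then $b(\mathbf{x})\ge x_i$, so the cost $x_i$ is at most $MC(\mathbf{x},OPT)$.
\item If $x_i> x_n/3$, then $l(\mathbf{x})\le x_i$, so $\frac{x_n-l}{2}\ge\frac{x_n-x_i}{2}$ and hence $x_n-x_i\le 2\,MC(\mathbf{x},OPT)$.
\end{itemize}
This gives ratio $2$.

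\textbf{Case 2} ($x_1<0$, $2|x_1|\le x_n$, $y=x_n$). Negative agents pay at most $|x_1|\le MC(\mathbf{x},OPT)$. Nonnegative agents are handled exactly as in Case 1.

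\textbf{Case 3} ($x_1<0$, $2|x_1|>x_n$, $y=2|x_1|$). Negative agents pay at most $|x_1|\le OPT$. For a nonnegative agent at $x_i\le x_n<2|x_1|$, the cost is $\min\{x_i,2|x_1|-x_i\}\le|x_1|$: if $x_i\le|x_1|$ the cost is $x_i\le|x_1|$, and otherwise $2|x_1|-x_i<|x_1|$. So $MC(\mathbf{x},f)\le |x_1|\le MC(\mathbf{x},OPT)$, and the mechanism is in fact optimal here.

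The main obstacle is Case 1 (and its reuse in Case 2). There the per-agent bound has to come out of the definitions of $l$ and $b$ rather than from a single crude bound. Note that the crude bound $x_n/2$ against $x_n/3$ only gives ratio $1.5$ if $OPT\ge x_n/3$, and that fails when no agent lies near $x_n/3$. The per-agent argument via $b$ and $l$ handles this. For the degenerate situation where all agents sit at $x_n$ or at $0$, both sides are $0$ and there is nothing to prove.
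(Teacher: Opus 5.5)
Your proof is correct and follows the paper's three-case structure. Your per-agent split at $x_n/3$ (agents with $x_i\le x_n/3$ pay at most $b(\mathbf{x})$, the rest at most $x_n-l(\mathbf{x})$) is precisely the justification the paper leaves implicit for its numerator $\max\{b(\mathbf{x}),x_n-l(\mathbf{x})\}$. Your Case 3 argument $MC(\mathbf{x},f)\le|x_1|\le MC(\mathbf{x},OPT)$ is cleaner than the paper's, whose displayed denominator $\max\{|x_1|,|x_n|\}$ is not actually a lower bound on the optimum (try $\mathbf{x}=(-2,3)$, where the optimum is $2$).

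Two small fixes. Delete your unused opening claim that $MC(\mathbf{x},OPT)\ge x_n/3$: it is false (the instance $(4,6)$ has optimum $1$), and your final paragraph itself contradicts it. Also justify $MC(\mathbf{x},OPT)\ge|x_1|$ by the $|S(\mathbf{x})|$ term of Theorem~\ref{thm1} rather than by the ``far to the right'' remark: if the new facility lies in $(2x_1,0)$, then the agent at $x_n$ pays $x_n\ge|x_1|$.
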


\begin{proof}
Denote Mechanism \ref{mec1} by $f$. Without loss of generality, consider any instance \(  \mathbf{x}=(x_{1},x_{2},...,x_{n})\) with \(|x_{n}|\geq|x_{1}| \).

If $0\leq x_{1}\leq x_{n}$, then $f( \mathbf{x})=x_{n}$,
  \begin{align*}
    \frac{MC( \mathbf{x},f)}{MC( \mathbf{x},OPT)} &\leq \frac{\max\left\{b( \mathbf{x}),x_{n}-l( \mathbf{x})\right\}}{\max\left\{b( \mathbf{x}),\frac{x_{n}-l( \mathbf{x})}{2}\right\}}\leq2.
  \end{align*}

If $x_{1}<0\leq x_{n},2|x_{1}|\leq x_{n}$, then $f( \mathbf{x})=x_{n}$,
  \begin{align*}
    \frac{MC( \mathbf{x},f)}{MC( \mathbf{x},OPT)} &\leq\frac{\max\left\{|x_{1}|,b( \mathbf{x}),x_{n}-l( \mathbf{x})\right\}}{\max\left\{|x_{1}|,b( \mathbf{x}),\frac{x_{n}-l( \mathbf{x})}{2}\right\}}\leq2.
  \end{align*}

If $x_{1}<0\leq x_{n},2|x_{1}|> x_{n}$, then $f( \mathbf{x})=2x_{1}$,
  \begin{align*}
    \frac{MC( \mathbf{x},f)}{MC( \mathbf{x},OPT)} &\leq\frac{ \max\left\{|x_{1}|,\frac{x_{n}-l( \mathbf{x})}{2}\right\}}{\max\left\{|x_{1}|,|x_{n}|\right\}}=\frac{|x_{1}|}{\max\left\{|x_{1}|,|x_{n}|\right\}}=1.
  \end{align*}
The proof is completed.
\end{proof}

\begin{theorem}\label{thm5}
    Any randomized strategy-proof mechanism has an approximation ratio of at least $1.5-\epsilon$ for the maximum cost, $\epsilon>0$.
\end{theorem}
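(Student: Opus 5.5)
The plan is to adapt the classical randomized lower bound of $1.5$ for one facility (Procaccia and Tennenholtz \cite{bib1}). We place all agents very far from the prelocated facility $y_0=0$, so that it becomes essentially irrelevant. The loss caused by its presence is exactly what produces the $\epsilon$.

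Fix a large $M$ and consider $\mathbf{x}=(M,M+1)$. Let $f$ be a randomized strategy-proof mechanism with output $Y\sim f(\mathbf{x})$. For every realization $y$ we claim
$$\min\{M,|M-y|\}+\min\{M+1,|M+1-y|\}\ge 1 .$$
Indeed, if either minimum is attained by the distance to $0$, that term alone is at least $M\ge 1$. Otherwise the sum is $|M-y|+|M+1-y|\ge 1$ by the triangle inequality. Taking expectations, some agent has expected cost at least $\tfrac12$.

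Suppose first that this agent is agent 2, at $M+1$. Consider the profile $\mathbf{x'}=(M,M+2)$ and let $Y'\sim f(\mathbf{x'})$. If agent 2's true location is $M+1$, strategy-proofness forbids her from gaining by reporting $M+2$. Hence
$$\mathbb{E}\big[\min\{M+1,|Y'-(M+1)|\}\big]\ge \tfrac12 .$$
For $\mathbf{x'}$ the optimum is $1$, attained by the facility at $M+1$; no location does better, since serving either agent via $0$ costs at least $M$. For a realization $y$, write $d=|y-(M+1)|$. The maximum cost is at least $\min\{M,\max(|y-M|,|y-M-2|)\}=\min\{M,1+d\}$.

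The key elementary inequality is
$$\min\{M,1+d\}\ \ge\ \tfrac{M}{M+2}\,\big(1+\min\{M+1,d\}\big).$$
If $d\le M-1$, the left side equals $1+d$, which dominates the right side. Otherwise the left side is $M$ and the right side is at most $\tfrac{M}{M+2}(M+2)=M$. Taking expectations gives $MC(\mathbf{x'},f)\ge \tfrac{M}{M+2}\cdot\tfrac32$.

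The case where agent 1 (at $M$) bears expected cost at least $\tfrac12$ is symmetric. There we use $\mathbf{x'}=(M-1,M+1)$ and agent 1 misreporting from $M$ to $M-1$. The same computation applies, with the cap $M-1$ replacing $M$, giving a ratio of at least $\tfrac32\cdot\tfrac{M-1}{M+1}$. Choosing $M$ large enough that both factors exceed $1-\tfrac{2\epsilon}{3}$ yields the bound $1.5-\epsilon$.

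The only delicate point, and the step I expect to need the most care, is handling the cap introduced by distances to $0$. Without the prelocated facility one gets exactly $1.5$. With it, costs are truncated at roughly $M$, so the clean identity $\max=1+d$ must be replaced by the scaled inequality above. This truncation is why the bound is only $1.5-\epsilon$ rather than $1.5$.
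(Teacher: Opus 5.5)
Your proof is correct and is essentially the paper's argument. The common skeleton is: two agents far from $0$ at distance $1$; averaging to find an agent with expected cost at least $\tfrac12$; letting her misreport one unit outward so that her true location becomes the optimal midpoint with optimal value $1$; and transferring the $\tfrac12$ bound through strategy-proofness. The differences are minor---the paper uses $(M+1,M+2)\to(M,M+2)$, treats only one agent ``without loss of generality,'' and controls the truncation by splitting on $\Pr\{Y'>2M\}$ to get $\tfrac32-\tfrac3M$, while your pointwise inequality $\min\{M,1+d\}\ge\tfrac{M}{M+2}\bigl(1+\min\{M+1,d\}\bigr)$ does the same job more cleanly and you handle both agents explicitly.
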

\begin{proof}
   Let $f$ be any randomized strategy-proof mechanism. Let $M>0$ be sufficiently large. Consider an instance $  \mathbf{x}=(x_{1},x_{2})=(M+1,M+2)$. Obviously, $\sum_{i\in\{1,2\}} cost(x_{i},f(\mathbf{x}))=E_{Y\sim f( \mathbf{x})}\left[cost(x_{1},Y)+cost(x_{2},Y)\right]\geq1$. Without loss of generality, assume that  $cost(x_{1},f( \mathbf{x}))\geq\frac{1}{2}$.

   Now consider another instance $ \mathbf{x'}=(x_{1}',x_{2}')=(M,M+2)$. Note that $OPT( \mathbf{x'})=M+1$ and $MC( \mathbf{x'},OPT)=1$. By strategy-proofness, we have $cost(M+1,f( \mathbf{x'}))=cost(x_{1},f(\mathbf{x'}))\geq cost(x_{1},f( \mathbf{x}))\geq \frac{1}{2}$. Otherwise, the agent located at $x_{1}$ in $\mathbf{x}$ can benefit by misreporting her location as $x_{1}'=M$. The maxmum cost of $f$ w.r.t. $\mathbf{x'}$ is
\begin{align*}
    MC( \mathbf{x'},f) =& E_{Y^{'}\sim f( \mathbf{x}')}\left[\max_{i\in\{1,2\}}cost(x_{i}',Y')\right]\\
    =&Pr\{Y'\leq2M\}\cdot E_{Y^{'}\sim f( \mathbf{x}')}\left[\max_{i\in\{1,2\}}cost(x_{i}',Y')|Y'\leq2M\right]\\
    &+Pr\{Y'>2M\}\cdot E_{Y^{'}\sim f( \mathbf{x}')}\left[\max_{i\in\{1,2\}}cost(x_{i}',Y')|Y'>2M\right].
\end{align*}

Note that $\max_{i\in\{1,2\}}cost(x_{i}',Y')=1+cost(M+1,Y')$ when $Y'\leq2M$ and $\max_{i\in\{1,2\}}cost(x_{i}',Y')>M\geq cost(M+1,Y')-1$ when $Y'>M$. We analyze $MC(\mathbf{x'},f)$ according to the following cases.

\textbf{Case 1} $Pr\{Y'>2M\}\geq\frac{3}{2M}$.
$$MC( \mathbf{x'},f)\geq Pr\{Y'>2M\}\cdot E_{Y^{'}\sim f( \mathbf{x}')}\left[\max_{i\in\{1,2\}}cost(x_{i}',Y')|Y'>2M\right]\geq\frac{3}{2M}\cdot M=\frac{3}{2}.$$

\textbf{Case 2} $Pr\{Y'<2M\}\geq\frac{3}{2M}$.
\begin{align*}
    MC( \mathbf{x'},f) \geq&Pr\{Y'\leq2M\}\cdot E_{Y^{'}\sim f( \mathbf{x}')}\left[1+cost(M+1,Y')|Y'\leq2M\right] \\
    &+Pr\{Y'>2M\}\cdot E_{Y^{'}\sim f( \mathbf{x}')}\left[1+cost(M+1,Y')-2|Y'>2M\right]\\
    =& 1+cost(M+1,Y')-2Pr\{Y'>2M\}\\
    \geq& \frac{3}{2}-\frac{3}{M}.
\end{align*}

Therefore,
$$\frac{MC( \mathbf{x'},f)}{MC( \mathbf{x'},OPT)}\geq\frac{3}{2}-\frac{3}{M}.$$
\end{proof}

\subsubsection{Social cost}\label{subsec2}

We prove that Mechanism \ref{mec1} is $n$-approximation for the social cost. We obtain lower bounds of $1.5$ and $1.0425$ for any deterministic and randomized strategy-proof mechanism, respectively. 

\begin{lemma}\label{lem6}
Denote Mechanism \ref{mec1} by $f$. For any instance $\mathbf{x}=(x_{1},...,x_{n})$ with $0\leq x_{1}\leq x_{n}$,
$$SC(\mathbf{x},f)\leq(n-1)\cdot SC(\mathbf{x},OPT).$$
\end{lemma}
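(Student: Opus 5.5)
In this case Mechanism \ref{mec1} places the new facility at $f(\mathbf{x})=x_{n}$. Every agent then pays $\min\{x_i,\,x_n-x_i\}$, and agent $n$ pays nothing. The plan is to bound each of the remaining $n-1$ costs by $SC(\mathbf{x},OPT)$; summing then gives the factor $n-1$. Write $y^*$ for an optimal location under the social cost and $c_j^*=\min\{x_j,|x_j-y^*|\}$ for agent $j$'s optimal cost, so $SC(\mathbf{x},OPT)=\sum_j c_j^*$. We may take $y^*\in[x_1,x_n]$, since moving a point outside the hull to the nearest endpoint does not increase any cost.

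The key step is the pointwise claim that, for every $i<n$,
\begin{equation*}
\min\{x_i,\,x_n-x_i\}\le c_i^*+c_n^*\le SC(\mathbf{x},OPT).
\end{equation*}
To prove it, split on which facility serves agent $n$ in the optimal solution.

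\textbf{Case 1: agent $n$ is served by $y_0=0$,} so $c_n^*=x_n$. Since $x_i\ge 0$, we get $\min\{x_i,x_n-x_i\}\le x_i\le x_n=c_n^*$.

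\textbf{Case 2: agent $n$ is served by $y^*$,} so $c_n^*=x_n-y^*$. Split once more on agent $i$.
\begin{itemize}
\item If $i$ is also served by $y^*$, the triangle inequality gives $x_n-x_i\le |x_i-y^*|+|x_n-y^*|=c_i^*+c_n^*$.
\item If $i$ is served by $0$, then $c_i^*=x_i$, and $\min\{x_i,x_n-x_i\}\le x_i=c_i^*$.
\end{itemize}

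In every case agent $i$'s cost under $f$ is at most $c_i^*+c_n^*$, which is at most $SC(\mathbf{x},OPT)$. Summing over $i=1,\dots,n-1$ and using that agent $n$ pays $0$ under $f$ gives
\begin{equation*}
SC(\mathbf{x},f)\le (n-1)\,SC(\mathbf{x},OPT).
\end{equation*}

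This argument is short. The only point needing care is to bound each term by $c_i^*+c_n^*$ rather than by $c_i^*$ alone, since the latter fails; the full sum then absorbs the extra $c_n^*$.
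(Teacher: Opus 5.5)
Your proof is correct, but it is organized differently from the paper's. The paper argues from the geometry. It first splits on whether $x_1\ge x_n/2$. If not, it splits on where $OPT(\mathbf{x})$ lies relative to $x_n/2$, and where $x_1$ lies relative to $OPT(\mathbf{x})/2$.

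In three of its cases, the paper bounds every agent's cost under $f$ by $x_n-x_1$ or by $x_n/2$, and shows that $SC(\mathbf{x},OPT)$ is at least that much. In the main case, it partitions the agents by the thresholds $OPT(\mathbf{x})/2$, $x_n/2$ and $OPT(\mathbf{x})$. It then checks, region by region, that each agent pays under $f$ at most her optimal cost plus $x_n-OPT(\mathbf{x})$. This gives $SC(\mathbf{x},f)\le SC(\mathbf{x},OPT)+(n-2)(x_n-OPT(\mathbf{x}))$.

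That region-by-region check is your inequality $\min\{x_i,x_n-x_i\}\le c_i^*+c_n^*$, restricted to one configuration. You obtain it uniformly from the triangle inequality, splitting only on which facility serves agents $i$ and $n$ in the optimum. This has several advantages:
\begin{itemize}
\item You never need the explicit form of the optimum.
\item You do not even need your WLOG $y^*\in[x_1,x_n]$, since $x_n-x_i\le|x_i-y^*|+|x_n-y^*|$ holds for every $y^*$.
\item You avoid the boundary case $x_1=OPT(\mathbf{x})/2$, which the paper's strict inequalities leave out.
\item Summing your bound as $\sum_{i<n}c_i^*+(n-1)c_n^*$ recovers the paper's sharper intermediate estimate $SC(\mathbf{x},OPT)+(n-2)c_n^*$.
\end{itemize}

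Your charge also settles Case 2 of Theorem~\ref{thm7} directly; there $x_1<0$ but still $f(\mathbf{x})=x_n$. Agents with $x_i\ge 0$ are handled exactly as in your argument. For $x_i<0$, if she is served by $0$ in the optimum, her cost under $f$ is $|x_i|=c_i^*$. If she is not served by $0$, this forces $y^*<0$, hence $c_n^*=x_n\ge|x_i|$, and $|x_i|$ is her cost under $f$. The paper instead handles Case 2 of Theorem~\ref{thm7} with three subcases that reuse its geometric case analysis.
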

\begin{figure}[htbp]
    \centering
    \includegraphics[width=1\linewidth]{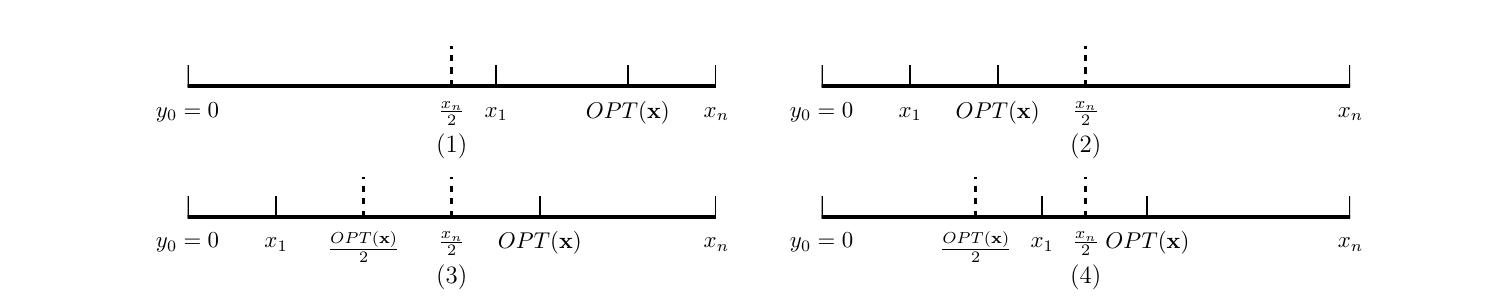}
    \caption{An instance $\mathbf{x}=(x_{1},x_{2},...,x_{n})$ with $0\leq x_{1}\leq x_{n}$}
    \label{fig1}
\end{figure}

\begin{proof}

Note that $x_{1}\leq OPT(\mathbf{x})\leq x_{n}$ and $f(\mathbf{x})=x_{n}$. The proof is performed according to the following cases:

\textbf{Case 1} $x_{1}\geq\frac{x_{n}}{2}$, as shown in Fig. \ref{fig1}(1).

It is clear that $OPT( \mathbf{x}) = x_{\lfloor{\frac{n+1}{2}}\rfloor}$ and $cost(x_{i},f(\mathbf{x}))=x_{n}-x_{i},\forall i\in N.$ We have
\begin{align*}
  SC( \mathbf{x},OPT) &= \sum_{i\in N}\left|x_{i}-x_{\lfloor{\frac{n+1}{2}}\rfloor}\right|\geq|x_{1}-x_{\lfloor{\frac{n+1}{2}}\rfloor}|+|x_{n}-x_{\lfloor{\frac{n+1}{2}}\rfloor}|=x_{n}-x_{1},\\
    SC( \mathbf{x},f) &= \sum_{i\in N}cost(x_{i},f(\mathbf{x}))\leq \sum_{i=1}^{n-1}\left(x_{n}-x_{1}\right)\leq (n-1)\cdot SC(\mathbf{x},OPT).
  \end{align*}

\textbf{Case 2} $x_{1}<\frac{x_{n}}{2}$.

\textbf{Case 2.1} $OPT( \mathbf{x}) \leq \frac{x_{n}}{2}$, as shown in Fig. \ref{fig1}(2).

\begin{align*}
   SC( \mathbf{x},OPT) &\geq cost(x_{n},OPT(\mathbf{x}))=x_{n}-OPT(\mathbf{x})\geq\frac{x_{n}}{2},\\
    SC( \mathbf{x},f) &= \sum_{i\in N}cost(x_{i},f(\mathbf{x}))=\sum_{i:x_{i}\leq\frac{x_{n}}{2}}x_{i}+\sum_{i:x_{i}>\frac{x_{n}}{2}}(x_{n}-x_{i})\leq (n-1)\cdot SC(\mathbf{x},OPT).
\end{align*}

\textbf{Case 2.2} $OPT( \mathbf{x}) > \frac{x_{n}}{2}$ and $x_{1} < \frac{OPT(\mathbf{x})}{2}$, as shown in Fig. \ref{fig1}(3).

Partition $N$ as follows:
\begin{align*}
N_{1}&=\left\{i\in N\Big|x_{1}\leq x_{i}\leq\frac{1}{2}OPT(\mathbf{x})\right\},\text{ } N_{2}=\left\{i\in N\Big|\frac{1}{2}OPT(\mathbf{x})< x_{i}\leq\frac{x_{n}}{2}\right\},\\
N_{3}&=\left\{i\in N\Big|\frac{x_{n}}{2}< x_{i}\leq OPT(\mathbf{x})\right\},\text{ } N_{4}=\left\{i\in N\Big|OPT(\mathbf{x})< x_{i}\leq x_{n}\right\}.
\end{align*}

Thus, we have
\begin{align*}
  SC( \mathbf{x},OPT)\geq&\sum_{i\in N_{1}\cup N_{2}\cup N_{3}}cost(x_{i},f(\mathbf{x}))+\left(x_{n}-OPT(\mathbf{x})\right),\\
    SC(\mathbf{x},f) =&\sum_{i\in N_{1}}cost(x_{i},OPT(\mathbf{x}))+\sum_{i\in N_{2}}cost(x_{i},f(\mathbf{x}))\\
    &+\sum_{i\in N_{3}}\left[cost(x_{i},OPT(\mathbf{x}))+x_{n}-OPT(\mathbf{x})\right]+\sum_{i\in N_{4}}cost(x_{i},f(\mathbf{x}))\\
    \leq& \sum_{i\in N_{1}}cost(x_{i},OPT(\mathbf{x}))+\sum_{i\in N_{2}}\left[\frac{x_{n}}{2}+cost(x_{i},OPT(\mathbf{x}))-OPT(\mathbf{x})+\frac{x_{n}}{2}\right]\\
    &+\sum_{i\in N_{3}}\left[cost(x_{i},OPT(\mathbf{x}))+x_{n}-OPT(\mathbf{x})\right]+\sum_{i\in N_{4}-\{x_{n}\}}\left[x_{n}-OPT(\mathbf{x})\right]\\
    \leq& \sum_{i\in N_{1}\cup N_{2}\cup N_{3}}cost(x_{i},OPT(\mathbf{x}))+|N_{2}\cup N_{3}\cup N_{4}-\{x_{n}\}|\cdot(x_{n}-OPT(\mathbf{x}))\\
    \leq& SC(\mathbf{x},OPT)+(n-2)\cdot(x_{n}-OPT(\mathbf{x})),\\
    \frac{SC(\mathbf{x},f)}{SC(\mathbf{x},OPT)}\leq&\frac{SC(\mathbf{x},OPT)+(n-2)\cdot(x_{n}-OPT(\mathbf{x}))}{SC(\mathbf{x},OPT)}\leq1+\frac{(n-2)\cdot(x_{n}-OPT(\mathbf{x}))}{x_{n}-OPT(\mathbf{x})}=n-1.
\end{align*}

\textbf{Case 2.3} $OPT( \mathbf{x}) > \frac{x_{n}}{2}$ and $x_{1} > \frac{OPT( \mathbf{x})}{2}$, as shown in Fig. \ref{fig1}(4).

$$SC(\mathbf{x},OPT)\geq cost(x_{1},OPT(\mathbf{x}))+cost(x_{n},OPT(\mathbf{x}))=x_{n}-x_{1}>\frac{x_{n}}{2}.$$
It is trivial that
$$SC(\mathbf{x},f)\leq(n-1)\cdot SC(\mathbf{x},OPT).$$
\end{proof}

\begin{theorem}\label{thm7}
Mechanism \ref{mec1} is $n$-approximation under the social cost objective.
\end{theorem}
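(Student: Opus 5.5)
By symmetry we may assume $|x_n|\geq|x_1|$, and we split according to the three branches of Mechanism \ref{mec1}.

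\textbf{Branch $0\leq x_1\leq x_n$.} Here Lemma \ref{lem6} already gives the bound $n-1\leq n$, so nothing more is needed.

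\textbf{Branches with $x_1<0\leq x_n$.} Let $N^-=\{i: x_i<0\}$ and $N^+=\{i: x_i\geq 0\}$, with $|N^-|=k\geq 1$ and $|N^+|=n-k\geq 1$. The one fact I will use repeatedly is that placing the new facility can never make an agent worse off than under the prelocated facility alone. Hence
$$\textstyle SC(\mathbf{x},f)\leq \sum_{i\in N^-}|x_i| + \sum_{i\in N^+}cost(x_i,f(\mathbf{x})).$$

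\emph{Case $f(\mathbf{x})=x_n$, i.e.\ $2|x_1|\leq x_n$.} I compare with the instance $\mathbf{x}^+$ consisting only of the nonnegative agents. For these agents $f$ behaves exactly as on $\mathbf{x}^+$, so Lemma \ref{lem6} gives
$$\textstyle \sum_{i\in N^+}cost(x_i,x_n)\leq (n-k-1)\cdot SC(\mathbf{x}^+,OPT).$$
Restricting the optimum of $\mathbf{x}$ to $N^+$ shows $SC(\mathbf{x}^+,OPT)\leq\sum_{i\in N^+}cost(x_i,OPT(\mathbf{x}))$. For the negative agents I need $\sum_{i\in N^-}|x_i|\leq k\,|x_1|$, together with a lower bound of the form $SC(\mathbf{x},OPT)\geq |x_1|$. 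This lower bound holds because the optimal facility placed at $x_1$ would cost $x_n\geq 2|x_1|$ to agent $n$; in any case $\min\{|x_1|,x_n\}\geq|x_1|$, since the optimum cannot serve both endpoints cheaply. Combining gives $SC(\mathbf{x},f)\leq (n-k-1+k)\,SC(\mathbf{x},OPT)\leq n\cdot SC(\mathbf{x},OPT)$.

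\emph{Case $f(\mathbf{x})=2|x_1|$ with $x_n<2|x_1|$.} Each negative agent costs at most $|x_1|$ and each nonnegative agent costs at most $x_i\leq x_n\leq|x_1|$ (recall $|x_n|\geq|x_1|$). Hence $SC(\mathbf{x},f)\leq n|x_1|$. On the other side, $SC(\mathbf{x},OPT)\geq\min\{|x_1|,x_n\}=|x_1|$, because any single new facility leaves one of the endpoints served only at distance at least $\min\{|x_1|,x_n\}$. Dividing, the ratio is at most $n$.

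\textbf{Main obstacle.} The delicate step is the first mixed case: showing that the additive error $k|x_1|$ from the negative agents and the Lemma \ref{lem6} error on the positive side together stay within $n\cdot SC(\mathbf{x},OPT)$. The positive side contributes at most $(n-k-1)\,SC(OPT)$ and the negative side at most $k\cdot SC(OPT)$, so the total fits. If the bound $SC(\mathbf{x},OPT)\geq|x_1|$ turns out too loose, I would first try bounding the negative agents' cost directly by $\sum_{i\in N^-}cost(x_i,OPT)+k\cdot\mathrm{dist}$ to refine it.
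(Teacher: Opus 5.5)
\textbf{The gap is in the case $f(\mathbf{x})=2|x_1|$.} There you bound each nonnegative agent's cost by $x_i\le x_n\le|x_1|$, ``recalling $|x_n|\ge|x_1|$''. That assumption gives $x_n\ge|x_1|$, the opposite direction. In this branch you actually have $|x_1|\le x_n<2|x_1|$.

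So an agent's distance to the prelocated facility does not give the bound $|x_1|$. For example, take $x_1=-1$ and $x_n=1.9$; the mechanism outputs $2$, and the agent at $1.9$ is at distance $1.9>|x_1|$ from $0$. If you bound nonnegative agents only by $x_i$, you get $SC(\mathbf{x},f)\le k|x_1|+(n-k)x_n$. This can approach $(2n-k)|x_1|$, which exceeds $n|x_1|$, so the ratio $n$ does not follow.

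The missing ingredient is the new facility itself. For $0\le x_i\le x_n<2|x_1|$ you have $cost(x_i,2|x_1|)=\min\{x_i,\,2|x_1|-x_i\}\le|x_1|$, because the two terms sum to $2|x_1|$. In words, agents beyond $|x_1|$ are served by the new facility within distance less than $|x_1|$. With this one-line repair, $SC(\mathbf{x},f)\le n|x_1|\le n\cdot SC(\mathbf{x},OPT)$ follows, exactly as in the paper's last case.

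\textbf{The rest is sound, and in the case $f(\mathbf{x})=x_n$ it takes a genuinely different route from the paper.} The paper splits this case by where $OPT(\mathbf{x})$ lies: $OPT(\mathbf{x})\le 0$, $OPT(\mathbf{x})\in(0,x_n/2]$, or $OPT(\mathbf{x})>x_n/2$. The last sub-case is only sketched as ``similar to Lemma~\ref{lem6}''.

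You instead apply Lemma~\ref{lem6} as a black box to the nonnegative subinstance $\mathbf{x}^+$. This is legitimate because $f(\mathbf{x}^+)=x_n=f(\mathbf{x})$, and because $OPT(\mathbf{x})$ is a feasible location for $\mathbf{x}^+$, so $SC(\mathbf{x}^+,OPT)\le\sum_{i\in N^+}cost(x_i,OPT(\mathbf{x}))$. You then charge the $k$ negative agents, each costing at most $|x_1|$, against $SC(\mathbf{x},OPT)\ge\min\{|x_1|,x_n\}=|x_1|$. That lower bound holds because any $y\ge 0$ leaves agent $1$ at cost $|x_1|$, and any $y<0$ leaves agent $n$ at cost $x_n$.

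This yields the bound $n-1$ uniformly, without locating the optimum. It is arguably more complete than the paper's sketched sub-case.
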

\begin{figure}[htbp]
    \centering
    \includegraphics[width=1\linewidth]{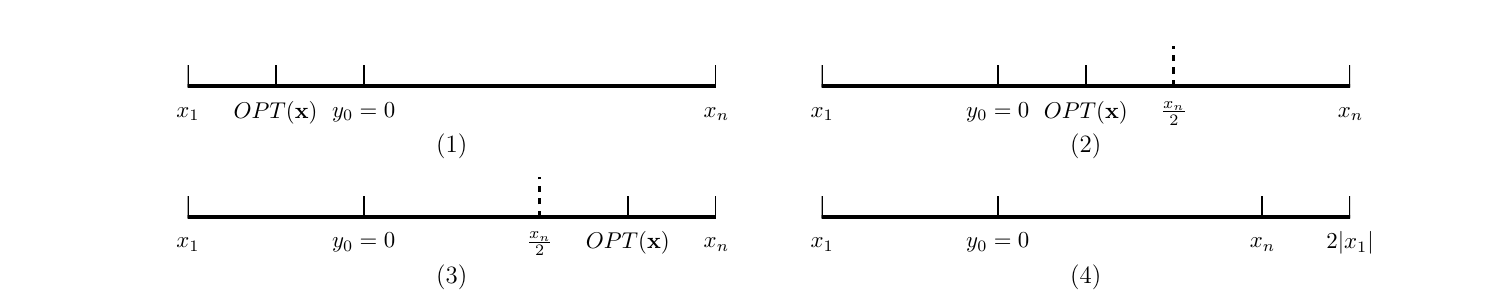}
    \caption{An instance $\mathbf{x}=(x_{1},x_{2},...,x_{n})$ with $x_{1}<0\leq x_{n}$}
    \label{fig2}
\end{figure}

\begin{proof}
 Without loss of generality, given any instance \(  \mathbf{x}=(x_{1},x_{2},...,x_{n})\) with \(|x_{n}|\geq|x_{1}| \). Trivially, $x_{1}\leq OPT(\mathbf{x})\leq x_{n}$. Denote Mechanism \ref{mec1} by $f$, and our analysis is as follows.

\textbf{Case 1} $0\leq x_{1}\leq x_{n}$. By Lemma \ref{lem6}, we have
$$
SC( \mathbf{x},f)\leq (n-1)\cdot SC( \mathbf{x},OPT).
$$

\textbf{Case 2} $x_{1}<0\leq x_{n}$ and $|2x_{1}|\leq x_{n}$. We have $f(\mathbf{x})=x_{n}$ and
\begin{equation}
    SC(\mathbf{x},f)=\sum_{i\in N}cost(x_{i},f(\mathbf{x}))=\sum_{i:x_{i}\leq\frac{x_{n}}{2}}|x_{i}|+\sum_{i:x_{i}>\frac{x_{n}}{2}}(x_{n}-x_{i}). \tag{$*$}
\end{equation}

\textbf{Case 2.1} $OPT( \mathbf{x})\leq 0$, as shown in Fig. \ref{fig2}(1).

Note that $SC(\mathbf{x},OPT)\geq cost(x_{n},OPT(\mathbf{x}))= x_{n}$. Thus, by $(*)$, we have
$$
SC(\mathbf{x},f)\leq (n-1)\cdot \frac{x_{n}}{2}\leq \frac{n-1}{2}\cdot SC(\mathbf{x},OPT).
$$

\textbf{Case 2.2} $0<OPT(\mathbf{x})\leq\frac{x_{n}}{2}$, as shown in Fig. \ref{fig2}(2).

Since $SC(\mathbf{x},OPT)\geq cost(x_{n},OPT(\mathbf{x}))\geq\frac{x_{n}}{2}$, then by $(*)$ ,
$$SC(\mathbf{x},f)\leq (n-1)\cdot SC(\mathbf{x},OPT).$$

\textbf{Case 2.3} $OPT( \mathbf{x})>\frac{x_{n}}{2}$, as shown in Fig. \ref{fig2}(3).

By a similar proof as Case 2.2 in Lemma \ref{lem6}, we can obtain $$SC(\mathbf{x},f)\leq (n-1)\cdot SC(\mathbf{x},OPT).$$

\textbf{Case 3} $x_{1}<0\leq x_{n},|2x_{1}|> x_{n}$, as shown in Fig. \ref{fig2}(4).

In this case, $f(\mathbf{x})=2|x_{1}|.$ Note that $SC(\mathbf{x},OPT)\geq|x_{1}|$. We have
$$
    SC(\mathbf{x},f)=\sum_{i\in N}cost(x_{i},f(\mathbf{x}))=\sum_{i:x_{i}\leq|x_{1}|}|x_{i}|+\sum_{i:x_{i}>x_{1}}(2x_{1}-x_{i})\leq n\cdot|x_{1}|\leq n\cdot SC(\mathbf{x},OPT).
$$
\textbf{Tight Example.} Let $ \mathbf{x}$ be an instance with $n$ agents, where $n-1$ agents are located at $1$ and one agent is located at $-1$. It is clear that $OPT( \mathbf{x})=1,SC(\mathbf{x},OPT)=1$. Since $f(\mathbf{x})=2,SC(\mathbf{x},f)=n-1+1=n$, we have $SC(\mathbf{x},f)=n\cdot SC(\mathbf{x},OPT)$.
\end{proof}

\begin{theorem}\label{thm8}
    Any deterministic strategy-proof mechanism has an approximation ratio of at least $1.5$ for the social cost.
\end{theorem}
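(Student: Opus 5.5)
The plan is to imitate the two-agent manipulation argument used for Theorem \ref{thm2}, and the classical $1.5$ lower bound of Procaccia and Tennenholtz for two-facility social cost. The prelocated facility at $y_0=0$ will play the role of the first facility. I assume a deterministic strategy-proof mechanism $f$ with approximation ratio $\gamma<1.5$ for the social cost and derive a contradiction. This needs a small chain of instances, differing in one agent's report, such that:
\begin{itemize}
\item[(i)] in the first instance the ratio bound confines $f(\mathbf{x})$ to a short interval;
\item[(ii)] in a neighbouring instance the ratio bound confines $f$ to a disjoint region;
\item[(iii)] the deviating agent strictly prefers one of the two outcomes.
\end{itemize}

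\textbf{Step 1 (a base instance with a forced window).} I would start with a profile of two or three agents on the positive side, at distances comparable to their distance from $0$, so that $0$ genuinely serves some agents. A first candidate is $\mathbf{x}=(1,3)$. For $y\in[1,3]$ the cost is $\min\{1,y-1\}+(3-y)$, and the cost is $|y-3|+1$ outside that range. So the unique optimum is $y=3$ with value $1$. The constraint $SC(\mathbf{x},f)<1.5$ then forces $f(\mathbf{x})\in(2.5,3.5)$. This step is routine: write the social cost as a piecewise-linear function of $y$ and read off the sublevel set at $1.5\cdot OPT$.

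\textbf{Step 2 (the manipulation).} Next I would move one agent, say the one at $3$, to a nearby report $x'$. The aim is a profile whose $1.5$-sublevel set is, say, $(x'-\delta,x'+\delta)$ and lies strictly closer to the true position $3$ than part of the window from Step 1, or conversely. One of the two agents then gains by switching reports, exactly as in the $(1,1+\epsilon)$ versus $(1,2)$ argument of Theorem \ref{thm2}. If a single move is not enough, I would first split according to where $f(\mathbf{x})$ lies inside its window, as in Theorem \ref{thm2} where WLOG $f(\mathbf{x})=1+\epsilon$. I would then choose the deviation depending on that position, possibly iterating a few steps, so that some agent's cost strictly decreases.

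\textbf{Main obstacle.} The hard part is choosing the instances so that the $1.5$-sublevel windows actually separate and yield a strict gain. Two features of the model complicate this:
\begin{itemize}
\item Agents near $0$ are indifferent to many placements, because their cost is capped by $|x_i|$.
\item The feasible window often contains the deviator's true location, which kills the gain.
\end{itemize}
If two-agent profiles do not suffice, my fallback is three agents of the form $(a,b,b)$ or $(a,a,b)$ with $a$ close to $0$. There, the co-located pair forces the new facility toward $b$ unless the mechanism pays a factor near $1.5$. The single agent at $a$ is then pushed to misreport toward $b$. The ratio $1.5$ should emerge as the balance point $\frac{3}{2}$ between the two candidate clusters in such a configuration. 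I would tune $a$ and $b$ so that every $y$ satisfying the ratio bound in one profile leaves a strictly profitable deviation available from the neighbouring profile.
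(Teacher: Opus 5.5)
\textbf{There is a genuine gap.} Your Step 2 never names a concrete neighbouring instance, and the family you commit to cannot give any lower bound above $1$.

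Take two agents on the same side of $0$, so $0\le x_1\le x_2$. The mechanism $y=x_n$ (Mechanism~\ref{mec1}) is strategy-proof by Theorem~\ref{thm3}. It is also exactly optimal for the social cost. Splitting each agent's $\min$, every $y$ satisfies $SC(\mathbf{x},y)\ge\min\{x_1+x_2,\,x_1,\,x_2,\,x_2-x_1\}=\min\{x_1,x_2-x_1\}=SC(\mathbf{x},x_2)$, which is Lemma~\ref{lem6} with $n-1=1$.

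Consequently, no chain of one-sided two-agent profiles such as $(1,3)\to(1,x')$ can ever yield a contradiction. This is structural, not a tuning problem. It is exactly why your window $(2.5,3.5)$ contains the deviator's true location: the optimum sits on the farther agent.

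The three-agent fallback is not a proof either, and its intuition points the wrong way. An agent at $a$ near $0$ already pays at most $a$. She gains only if the new facility lands in $(0,2a)$, so misreporting toward $b$ cannot help her. The paper itself claims only $1.0425$ as the deterministic lower bound in the one-sided setting, so it is unclear that $1.5$ is reachable without two-sided instances at all.

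\textbf{The missing idea} is to put agents on both sides of the prelocated facility and exploit the cost cap $|x_i|$ rather than treat it as an obstacle.
\begin{itemize}
\item Take $\mathbf{x}=(-1,1)$. By symmetry assume $f(\mathbf{x})\le 0$, so the agent at $1$ pays her full cap $1$.
\item Let her report $1.5$. In $\mathbf{x}'=(-1,1.5)$ the optimum is $1$, and any $y'\le 0$ costs at least $1.5$. So a ratio below $1.5$ forces $y'>0$.
\item With $y'>0$, the agent at $-1$ pays exactly $1$. The agent at $1.5$ must therefore pay less than $\tfrac12$, i.e.\ $y'\in(1,2)$.
\item The true agent at $1$ then pays $|1-y'|<1$. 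This strict gain contradicts strategy-proofness.
\end{itemize}
The unserved side's unavoidable cost $1$ plus the $\tfrac12$ slack left for the served agent is exactly where the bound $1.5$ comes from.
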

\begin{proof}
 Assume there exists a deterministic strategy-proof mechanism $ f $ with approximation ratio less than $1.5$. Consider an instance $  \mathbf{x}=(x_{1},x_{2})=(-1,1) $. Without loss of generality, let $f( \mathbf{x})=y\leq0$, we have $cost(x_{2},y)=1$. Consider the instance $  \mathbf{x'}=(x_{1},x_{2}')=(-1,1.5) $ and $f( \mathbf{x'})=y'$, we have $ SC( \mathbf{x'},y') < 1.5 \cdot SC( \mathbf{x'},OPT) = 1.5$. It means $y'>0$, we have $cost(x_{2}',y')=SC( \mathbf{x'},y')-cost(x_{1},y')<\frac{1}{2}$. It can lead to the $cost(x_{2},y')<1$. Agent 2 can benefit by reporting $ x_{2}$ to $x_{2}'$, which contradicts with the strategy-proofness of $f$.
\end{proof}
Before our proof of the lower bound for any randomized strategy-proof mechanism, recall that strategy-proofness is equivalent to partial group strategy-proofness for facility location games where each agent has her location as private information  \cite{bib3}. In other words, for any group of agents located at the same location, no member can benefit if they misreport simultaneously.

\begin{theorem}\label{thm9}
    Any randomized strategy-proof mechanism has an approximation ratio of at least $1.0425$ for the social cost.
\end{theorem}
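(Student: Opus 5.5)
Following the approach of Procaccia and Tennenholtz for the two-facility randomized lower bound (and the 1.02 bound of Chan and Wang \cite{bib16}), I will use a two-instance argument with groups of co-located agents and invoke partial group strategy-proofness. The idea is to place the agents far from $y_0=0$. Then the prelocated facility is useless for them, and the problem behaves like an ordinary two-facility problem in which one facility is effectively fixed far away. As a result, the randomized mechanism must commit probability mass to covering one of two clusters.

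\textbf{Base instance.} Take $M$ huge and put $k$ agents at $M$ and $k$ agents at $M+1$, or more generally clusters of sizes $a$ and $b$ at distance $1$ apart. Any single new facility $Y$ serves at most one cluster exactly. Therefore $\mathbb{E}[\text{cost of cluster at }M]+\mathbb{E}[\text{cost of cluster at }M+1]$ is at least a fixed quantity. More precisely, for each $Y$, $d(M,Y)+d(M+1,Y)\ge 1$. Hence, by symmetry, one group, say the one at $M$, has expected per-agent cost at least $1/2$.

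\textbf{Deviation instance.} Move that whole group from $M$ to $M-t$ for a parameter $t>0$. Partial group strategy-proofness gives that the expected cost of the point $M$ under $f(\mathbf{x}')$ is at least $1/2$. For the new instance $\mathbf{x}'$, with groups at $M-t$ and $M+1$, the optimum places the facility at the larger group, with social cost $\min(a,b)(1+t)$. Next I lower-bound $SC(\mathbf{x}',f)$ by optimizing over distributions of $Y$ subject to $\mathbb{E}[\min\{|M-Y|,M\}]\ge 1/2$. Because $M$ is huge, the cost is $\min\{|M-Y|,\dots\}$, and one can assume $Y$ is supported on $[M-t,M+1]$, since other mass only hurts. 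Write $p=\Pr[Y\text{ near } M-t\text{ side}]$. Then $\mathbb{E}|Y-M|\ge 1/2$ forces a tradeoff, and the social cost is a piecewise-linear function in the distribution. By linearity, the worst case for the mechanism is supported on two or three points, for example $\{M-t,M+1\}$ or endpoints of the region, so the bound reduces to a small LP.

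\textbf{Optimization and main obstacle.} The ratio is then a function of $t$ and the cluster sizes $a,b$; unequal sizes can help. I will optimize numerically, choosing for instance $t$ near a root of a quadratic and a size ratio, to get $\ge 1.0425$. The main obstacle is this last step. I must set up the LP correctly, including mass placed strictly between the clusters, which reduces the cost for both groups, and then choose parameters hitting exactly $1.0425$. I would first try equal clusters with $t$ as the only parameter. If that gives only about $1.02$, I would introduce unequal group sizes (or three groups) and optimize jointly. Finally, I would check that letting $M\to\infty$ removes the effect of the prelocated facility at $0$, so that no agent's cost is capped by $|x_i|$.
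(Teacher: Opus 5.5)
\textbf{The approach cannot work: sending $M\to\infty$ removes the only source of hardness.}

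If all agents lie in $[M,M+D]$ with $M\ge D$, then every cost equals $|x_i-y|$ for $y$ in that interval. For $y$ outside it, each term $\min\{|x_i|,|x_i-y|\}$ is at least its value at the nearer endpoint. So the instance is an ordinary \emph{single}-facility problem, not a two-facility one: a fixed facility that serves nobody is irrelevant. Its social-cost optimum is the median.

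Moreover, the median mechanism is strategy-proof in this model. The cost $\min\{|x_i|,|x_i-y|\}$ is non-decreasing as $y$ moves away from $x_i$ on either side, and a misreport can only leave the median unchanged or push it further from the deviator.

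Hence there is a strategy-proof mechanism with ratio exactly $1$ on every instance your argument inspects. No choice of $a,b,t$, extra groups, or LP can certify $1.0425$, or anything above $1$.

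You can see this directly in your construction. With $a$ agents at $M$ and $b$ at $M+1$, the median puts $Y$ at the larger cluster (at $M$ if $a=b$). The group with expected cost $\ge 1/2$ is therefore the other one; your ``by symmetry'' step cannot choose it when $a\neq b$. After that group moves further away, the median again sits at the optimum, while the deviators' true cost is $1\ge 1/2$, so there is no conflict.

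\textbf{The missing idea: exploit the cap $\min\{|x_i|,\cdot\}$.} You need a cluster close to $y_0=0$, so that the prelocated facility genuinely acts as a second facility. The paper's instances are as follows.
\begin{itemize}
\item First instance: $4$ agents at $0.7$ and $3$ at $2$. The optimum is $Y=2$ with social cost $2.8$, and every $Y<1.4$ costs at least $3.9$. A ratio below $1.0425$ therefore forces $\Pr[Y\ge 1.4]>981/1100$. On that event the $0.7$-agents pay their full cap $0.7$, so their expected cost exceeds $0.62427$.
\item Second instance: that group moves to $1$. The optimum flips to $Y'=1$ with cost $3$, while every $Y'\notin[0.17,1.23]$ costs at least $3.23$. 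The ratio then forces $\Pr[Y'\in[0.17,1.23]]>41/92$.
\item Contradiction: the true point $0.7$ pays at most $0.53$ on $[0.17,1.23]$ and at most $0.7$ (the cap) elsewhere. Its expected cost is therefore below $0.62424$, which is a profitable group deviation, ruled out by partial group strategy-proofness.
\end{itemize}
The cap $0.7$ is what bounds the deviators' cost uniformly in $Y$, and that is precisely what disappears as $M\to\infty$.
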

\begin{proof}
    Assume there exists a randomized strategy-proof mechanism $f$ with an approximation ratio less than 1.0425. Let $ \mathbf{x}$ be an instance with 7 agents, where 4 agents are located at $x_{1}=0.7$ and 3 agents are located at $x_{2}=2$. It is clear that $OPT( \mathbf{x})=2$ and $SC( \mathbf{x},OPT)=2.8$. Assume the mechanism $f$ outputs $Y \geq 1.4$ with probability $p$. Since $E_{Y\sim f(\mathbf{x})} \left[SC( \mathbf{x},Y)|Y<1.4\right]\geq SC( \mathbf{x},0.7)=3.9$,
\begin{align*}
SC( \mathbf{x},f)&\geq p\cdot2.8+(1-p)\cdot3.9=3.9-1.1\cdot p,\\
  \frac{SC(\mathbf{x},f)}{SC( \mathbf{x},OPT)} &< 1.0425\Rightarrow3.9-1.1\cdot p<1.0425\times2.8 \Rightarrow p>\frac{981}{1100}.
\end{align*}
Then the cost of agent 1 is
$$cost(x_{1},f( \mathbf{x})) > p\cdot 0.7+(1-p)\cdot 0>\frac{6867}{11000}>0.62427. $$

Now consider another instance $\mathbf{x'}$ with 7 agents, where 4 agents are located at $x_{1}'=1$ and 3 agents are located at $x_{2}=2$.
It is clear that $OPT( \mathbf{x'})=1$ and $SC( \mathbf{x'},OPT)=3$.
Suppose
the mechanism $f$ outputs $Y' \in [0.17, 1.23]$ with probability $q$ for instance $ \mathbf{x'}$. Since $E_{Y^{'}\sim f(\mathbf{x'})}\left[SC( \mathbf{x'}, Y')|Y' \notin [0.17, 1.23]\right] \geq SC( \mathbf{x'}, 1.23) = 4 \times 0.23 + 3 \times 0.77 = 3.23$,
\[
SC( \mathbf{x'}, f) \geq q \cdot 3 + (1-q) \cdot 3.23 = 3.23 - 0.23 \cdot q
\]

Given that the approximation ratio of any randomized strategy-proof mechanism is less than $1.0425$,
\[
\frac{SC( \mathbf{x'}, f)}{SC( \mathbf{x'}, OPT)} < 1.0425 \Rightarrow 3.23 - 0.23 \cdot q < 1.0425 \times 3 \Rightarrow q > \frac{41}{92}.
\]
\[
cost(x_{1}, f( \mathbf{x'})) < q \cdot 0.53 + (1-q) \cdot 0.7 < \frac{5743}{9200} < 0.62427.
\]
Thus,
\[
cost(x_{1}, f( \mathbf{x'})) < cost(x_{1}, f( \mathbf{x})).
\]
It contradicts the strategy-proofness of the mechanism.
\end{proof}

\subsubsection{Discussion}

For the social cost, there is a huge gap between the upper bound of $n$ and the lower bound of 1.5 for deterministic strategy-proof  mechanisms. We conjecture the deterministic lower bound is $\Omega(n)$ just as in two-facility location games \cite{bib4}, but failed to verify it.

It is noteworthy that Mechanism 3 in \cite{bib16} outputs one endpoint of the shortcut edge at point 0 and the other at $x_{k}$ with probability $\frac{|x_{k}|}{\sum_{i\in N}|x_{i}|}, k\in N$, which is proved to be strategy-proof and 6-approximation under the social cost objective.
This can be described in our model as follows.

\begin{mechanism}\label{mec2}\rm{\cite{bib16}}
 For $\forall \mathbf{x}$, for any agent $k$, the mechanism outputs $x_{k}$ with probability $\frac{|x_{k}|}{\sum_{i\in N}|x_{i}|}$.
\end{mechanism}

\begin{theorem}\label{thm10}\rm{\cite{bib16}}
Mechanism \ref{mec2} is a randomized group strategy-proof $6$-approximation mechanism for the social cost.
\end{theorem}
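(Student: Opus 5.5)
The plan has two parts: group strategy-proofness, via a weighted potential argument, and the approximation ratio, via a charging argument against a fixed optimal solution. Write $S=\sum_{j\in N}|x_j|$ and $d(a,b)=\min\{|a|,|a-b|\}$. Under Mechanism \ref{mec2}, agent $i$'s cost is
$$cost(x_i,f(\mathbf{x}))=\frac{1}{S}\sum_{k\in N}|x_k|\,d(x_i,x_k)\le |x_i|.$$

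\textbf{Group strategy-proofness.} Let a coalition $C$ misreport to $\mathbf{x}'$, with $S'=\sum_j|x'_j|$. I would not look at individual costs directly. Instead I would compare the weighted sum
$$\Phi=\sum_{i\in C}|x_i|\,cost(x_i,\cdot)$$
under truthful and misreported profiles, aiming to show $\Phi$ cannot strictly decrease. That would mean some member of $C$ does not strictly gain.

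1. The kernel $|x_i||x_k|\,d(x_i,x_k)$ is close to symmetric in $(i,k)$. I would exploit the bounds $d(x_i,x_k)\le|x_i|$ and $d(x_i,x_k)\ge |x_i|-|x_k|$ when $x_k$ is on the same side, and $d(x_i,x_k)=|x_i|$ when $x_k$ is on the opposite side or beyond $2|x_i|$.
2. From these I would derive the following. Inflating a reported $|x'_k|$ increases $S'$ and shifts mass to a point far from the true positions in $C$. Shrinking it moves mass toward $0$, which is useless to agents already served at distance $|x_i|$ by $0$.
3. This is the same structure as the proportional mechanism of Lu et al.\ \cite{bib3}, so I would adapt their argument.

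\textbf{The main obstacle.} This step is the hardest. It requires handling coalitions that simultaneously move some reports outward and some toward $0$. My first attempt would be to reduce to a single deviating location per member, using the partial group strategy-proofness equivalence noted before Theorem \ref{thm9}. Failing that, I would directly expand $\Phi(\mathbf{x}')-\Phi(\mathbf{x})$ as a difference of two quadratic forms over $S'$ and $S$.

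\textbf{6-approximation.} Fix an optimal location $y^*$. Partition $N$ into $A$, the agents served by $0$ in $OPT$, and $B$, those served by $y^*$, with $OPT=\sum_{i\in A}|x_i|+\sum_{i\in B}|x_i-y^*|$. Then
$$SC(\mathbf{x},f)=\frac{1}{S}\sum_{k}|x_k|\sum_i d(x_i,x_k).$$

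1. If $k\in A$, bound the inner sum by $\sum_{i\in A}|x_i|+\sum_{i\in B}|x_i|$, and use $|x_i|\le |x_i-y^*|+|y^*|$.
2. If $k\in B$, bound it by $\sum_{i\in A}|x_i|+\sum_{i\in B}\bigl(|x_i-y^*|+|y^*-x_k|\bigr)$.
3. The only dangerous term is $|B|\cdot|y^*|$. Its total weight $\sum_{k\in A}|x_k|/S$ is at most $OPT/S$. For $k\in B$, the term $|B||x_k-y^*|$ has weight $|x_k|/S$.
4. Use $|y^*|\le |x_k|+|x_k-y^*|$ and $S\ge |B||y^*|-OPT$, splitting into the cases $|B||y^*|\le 2\,OPT$ and $|B||y^*|> 2\,OPT$. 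In each case every term should be a constant multiple of $OPT$.
5. Summing the constants should give $1+2+3=6$. I expect only bookkeeping here, with the constants checked at the end.
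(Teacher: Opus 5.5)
\textbf{The group strategy-proofness half cannot be proved, because it is false.} No potential argument like your $\Phi$ can succeed. Take the true profile $(-50,5,10)$ and let the agents at $5$ and $10$ both report $8$.

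Truthfully, the outputs $-50,5,10$ have probabilities $\frac{50}{65},\frac{5}{65},\frac{10}{65}$, so these two agents pay $\frac{300}{65}$ and $\frac{525}{65}$. After the deviation, the outputs $-50,8$ have probabilities $\frac{50}{66},\frac{16}{66}$. The agents then pay $\frac{250+16\cdot 3}{66}=\frac{298}{66}<\frac{300}{65}$ and $\frac{500+16\cdot 2}{66}=\frac{532}{66}<\frac{525}{65}$. Both strictly gain, so every positively weighted sum of their costs, $\Phi$ included, strictly drops.

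The equivalence quoted before Theorem~\ref{thm9} only concerns coalitions sharing one true location, so it gives no reduction here. What does hold is individual strategy-proofness, hence partial group strategy-proofness. This is also what the paper's own discussion attributes to \cite{bib16} (``proved to be strategy-proof''). Your inflate/shrink intuition proves it directly. Let $a=|x_i|$, $W=\sum_{k\neq i}|x_k|$ and $T=\sum_{k\neq i}|x_k|\,d(x_i,x_k)\le aW$. A report $x_i'$ with $b=|x_i'|$ moves the cost from $\frac{T}{W+a}$ to $\frac{T+bm}{W+b}$, where $m=d(x_i,x_i')\ge\min\{a,b-a\}$. The needed inequality $T(a-b)+bm(W+a)\ge 0$ is trivial for $b\le a$. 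For $b>a$ it follows from $bm\ge a(b-a)$ and $T\le aW$.

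\textbf{The approximation sketch also has a step that fails.} For $k\in B$ you bound $\sum_{i\in B}d(x_i,x_k)$ by $\sum_{i\in B}(|x_i-y^*|+|y^*-x_k|)$, discarding the cap at $|x_i|$. The leftover term $\frac{|B|}{S}\sum_{k\in B}|x_k|\,|x_k-y^*|$ is then not $O(OPT)$. Take $M$ agents at $1$ and one at $L\ge 2$, with $M\to\infty$. Then $y^*=1$, $A=\emptyset$ and $OPT=L-1$, while that term tends to $L(L-1)$. No case split on $|B||y^*|$ can repair this.

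Keep the cap and average instead. Write $OPT_A=S_A=\sum_{i\in A}|x_i|$, $S_B=\sum_{i\in B}|x_i|$ and $OPT_B=\sum_{i\in B}|x_i-y^*|$. For $k\in B$, use $|x_k|\le\frac{S_B}{|B|}+\frac{1}{|B|}\sum_{j\in B}|x_j-x_k|$ together with $\sum_{i\in B}d(x_i,x_k)\le\min\{S_B,\sum_{i\in B}|x_i-x_k|\}$. Multiply the first summand by the second entry of the min and the second summand by the first entry, and use $S\ge S_B$. This gives
\[
\frac{1}{S}\sum_{k\in B}|x_k|\sum_{i\in B}d(x_i,x_k)\le\frac{2}{|B|}\sum_{i,k\in B}|x_i-x_k|\le 4\,OPT_B .
\]
The terms with $i\in A$ contribute at most $OPT_A$. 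Those with $k\in A$, $i\in B$ contribute at most $\frac{S_AS_B}{S}\le OPT_A$, so the $|B||y^*|$ term you flagged is harmless. Hence $SC(\mathbf{x},f)\le 2\,OPT_A+4\,OPT_B\le 4\,OPT$, which gives the factor $6$ with room to spare.
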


In addition, the instances and methods in the proof of Theorem \ref{thm9} can be used in \cite{bib16} to improve the lower bound of any randomized strategy-proof mechanism under the social cost objective from 1.02 to 1.0425.

\subsection{Special Setting}\label{sec4}

This section considers the scenario where all agents are located on the same side of the prelocated facility. Without loss of generality, assume all of them are on the right side of $y_0=0$, i.e., any instance $\mathbf{x}=(x_{1},x_{2},...,x_{n})$ satisfies $0\leq x_{1}\leq x_{2}\leq ...\leq x_{n}$. Then $L(\mathbf{x})=x_{n},S(\mathbf{x})=x_{1}$, the optimal solution for minimizing the maximum cost is $OPT(\mathbf{x})=\frac{l(\mathbf{x})+L(\mathbf{x})}{2}$, and the optimal value is
$$MC(\mathbf{x},OPT)=\max\left\{b(\mathbf{x}),\frac{|L(\mathbf{x})-l(\mathbf{x})|}{2}\right\}.$$

In this section, the impossibility results in Section \ref{sec3} still hold, except for that of the deterministic mechanism for the social cost. Mechanism \ref{mec1} which outputs $L(\mathbf{x})=x_n$ for any instance $\mathbf{x}$ in this setting, is still 2-approximation and the best deterministic strategy-proof mechanism for the maximum cost.
Furthermore, we provide a randomized strategy-proof 5/3-approximation mechanism for the maximum cost.

\begin{mechanism}\label{mec3}
For an instance $ \mathbf{x}=(x_{1},\ldots,x_{n})$, the facility is located at $y$, where $y$ is defined as a random point according to the following cases:

\textbf{Case 1} $b( \mathbf{x}) \geq L( \mathbf{x}) - l( \mathbf{x})$. Let
$$
y = \left\{
\begin{array}{cl}
L(\mathbf{x}) - b(\mathbf{x}) & , \text{with probability (w.p.) } \frac{1}{6}, \\[5pt]
\frac{2L(\mathbf{x}) - b(\mathbf{x})}{2} & , \text{w.p. } \frac{1}{3}, \\[5pt]
L(\mathbf{x}) & , \text{w.p. } \frac{1}{2}.
\end{array}
\right.
$$

\textbf{Case 2} $b( \mathbf{x})< L( \mathbf{x})-l( \mathbf{x})$. Let
$$
    y=\left\{
    \begin{array}{cl}
      \max\left\{l( \mathbf{x}),\frac{2L( \mathbf{x})}{3}\right\} & ,\text{with probability (w.p.) }\frac{1}{6}, \\[5pt]
      \frac{\max\left\{l( \mathbf{x}),\frac{2L( \mathbf{x})}{3}\right\}+L( \mathbf{x})}{2} & ,\text{w.p. }\frac{1}{3}, \\[5pt]
      L( \mathbf{x}) &, \text{w.p. }\frac{1}{2}.
    \end{array}
    \right.
$$
\end{mechanism}

\begin{theorem}\label{thm11}
Mechanism \ref{mec3} is a randomized strategy-proof $\frac{5}{3}$-approximation mechanism under the maximum cost objective.
\end{theorem}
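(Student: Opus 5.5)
I would split the proof into two parts, strategy-proofness and the $\frac{5}{3}$ bound, and in both work in the special setting $0\le x_1\le\cdots\le x_n$. There $L(\mathbf{x})=x_n$, the threshold $L/3$ determines $b$ and $l$, and $MC(\mathbf{x},OPT)=\max\{b,\frac{L-l}{2}\}$ by Theorem \ref{thm1}.

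\textbf{Approximation ratio.} Write $L,b,l$ for $L(\mathbf{x}),b(\mathbf{x}),l(\mathbf{x})$. In Case 1 ($b\ge L-l$) the optimum is $b$. For each of the three support points I will bound the maximum cost. At $y=L$ every agent is served either by $0$ (cost at most $b$ for agents $\le L/3$) or by $L$; an agent in $(L/3,L]$ has cost at most $\min\{x,L-x\}$. The worst agent is the one at $l$, with cost at most $\max\{b,L-l\}\le$ something like $\max\{b, L/2\}$, which I will bound by $2b$ using $L-l\le b$. At $y=L-b$ agents within $b$ of $L$ are fine, and agents in $[b,L-2b]$ near the midpoint pay up to $(L-b)/2$. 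The point $(2L-b)/2$ is in between. The intended structure is that the three outcomes $(L-b,\ \frac{2L-b}{2},\ L)$ have worst costs roughly $(2b,\ \frac{3b}{2},\ b)$, or a similar graded pattern. Weighted by $\frac16,\frac13,\frac12$ this gives $\frac{2}{6}+\frac{1.5}{3}+\frac12$, which is close to $\frac53$ after careful accounting. So the plan is to compute, for each support point, the maximum distance from the agents in the gap structure, namely $[0,b]$, the empty interval $(b,l)$, and $[l,L]$, to $\{0,y\}$, in terms of $b$ and $L-l$. Then I verify that the weighted sum is at most $\frac53\cdot OPT$. Case 2 ($b<L-l$, so $OPT=\frac{L-l}{2}$) is handled in the same way with $m=\max\{l,2L/3\}$: the outcomes $m$, $(m+L)/2$ and $L$ have worst costs bounded by multiples of $\frac{L-l}{2}$ that average to $\frac53$. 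The clamp at $2L/3$ guarantees that agents below $L/3$ are still served by $0$ at cost $\le b$.

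\textbf{Strategy-proofness.} I fix agent $i$ at $x_i$ with misreport $x_i'$ and compare expected costs. The distribution depends on $\mathbf{x}$ only through $(L,b,l)$ and the case. Agents with $x_i\le L/3$ are always served at cost $x_i$ by $0$ unless the facility lands closer. Every support point is at least $2L/3\ge 2x_i$ (in Case 1 I need $L-b\ge 2L/3$, which holds since $b\le L/3$). So these agents pay exactly $x_i$ now, and no deviation can give a point closer than... this is not automatic, since a large misreport changes $L$. I would show that any report moving some support point below $2x_i$ requires changing $L$ in a way that keeps the distribution's mass there small enough. For agents in $(L/3,L]$, the key monotonicity fact is that reporting differently either leaves $(L,b,l)$ unchanged or shifts the support points away from $x_i$. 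I will check the sub-cases $i$ is the agent at $L$, $i$ is the agent at $l$, and $i$ is an interior agent, together with deviations that flip between Case 1 and Case 2.

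\textbf{Main obstacle.} I expect the hardest step to be strategy-proofness for the agent at $l$ (or $b$). By misreporting she can change which agent plays the role of $b$ versus $l$, switch between Cases 1 and 2, and thereby move the two lower support points. The expected cost as a function of the report must be shown to be minimized at the truth. I would first try to express the agent's expected cost as $\frac16 c(y_1)+\frac13 c(y_2)+\frac12 c(y_3)$ and show that each lower support point, as a function of her report, moves monotonically away from her true position. If that fails, I will exploit the weight $\frac12$ on $L$, which she cannot influence without becoming the endpoint, to dominate any gain from the two other points.
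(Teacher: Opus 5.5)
Your proposal has genuine gaps in both halves; the approximation bound, as planned, cannot be completed.

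\textbf{Approximation ratio.} The inference ``$b<L-l$, so $OPT=\frac{L-l}{2}$'' is false. By Theorem~\ref{thm1}, $MC(\mathbf{x},OPT)=\max\{b,\frac{L-l}{2}\}$, and $b$ may lie in $(\frac{L-l}{2},L-l)$. Every support point is at least $\frac{2L}{3}\ge 2b$, so the agent at $b$ pays exactly $b$ under every outcome, and hence $MC(\mathbf{x},f)\ge b$. Consequently no bound by multiples of $\frac{L-l}{2}$ averaging to $\frac53$ can exist. For instance, $\mathbf{x}=(0.3,\frac23,1)$ falls in the mechanism's Case 2, since $b=0.3<\frac13=L-l$, and $\frac{L-l}{2}=\frac16$. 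Yet $MC(\mathbf{x},f)\ge 0.3>\frac{5}{18}=\frac53\cdot\frac{L-l}{2}$. You must compare against $b$ whenever it dominates, which is what the paper does:
\begin{itemize}
\item For $l=\frac L3+\Delta\le\frac{2L}{3}$, it bounds $MC(\mathbf{x},f)\le\frac{L}{18}+\frac13\max\{b,\frac L2-\Delta\}+\frac{L-l}{2}$. It then splits on $b\ge\frac L2-\Delta$ (where $OPT=b\ge\frac L6$) versus $b<\frac L2-\Delta$ (where $OPT\ge\frac{L-l}{2}$).
\item For $l>\frac{2L}{3}$, it uses $\max\{b,L-l\}\le 2\,OPT$ at the two outer points.
\end{itemize}
Your Case 1 is not an argument either. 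The condition $b\ge L-l$ forces $l\ge L-b$, so no agent lies in $(b,L-b)$, and your ``agents near the midpoint paying $\frac{L-b}{2}$'' do not exist. If they did exist, no constant bound would follow, since $\frac{L-b}{2}$ can be far larger than $b$. In fact all three support points lie in $[L-b,L]$, each serves every agent within $b$, and $MC(\mathbf{x},f)=b=OPT$. Your guessed profile $(2b,\frac32 b,b)$ is wrong.

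\textbf{Strategy-proofness.} Your ``key monotonicity fact'' is false, so this part has no working core. In Case 1 with $b$ and the case unchanged, an over-report $x_i'=L+\Delta$ translates all three support points right by $\Delta$, moving $L-b$ toward an agent at $x_i\in(L-b,L)$. Likewise, an under-report that raises $b$ moves the midpoint toward an agent lying between $L-b$ and $L-\frac b2$.

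What actually works is the compensation you mention only as a fallback, and it needs quantitative facts you never state. Write the lowest support point as $A=\max\{\frac{2L}{3},\min\{l,L-b\}\}$; this formula covers both cases.
\begin{itemize}
\item Reports $x_i'\le L$ by an agent other than the one at $L$ can only move $A$ away from $x_i$. Since $\frac{A+L}{2}$ moves half as far with twice the weight, the loss at $A$ offsets any gain at the midpoint (the paper's Case 1.2.1).
\item For over-reports $L+\Delta$, neither lower support point moves toward her by more than $\Delta$, even across jumps of $b(\mathbf{x}')$ and switches between the two cases. Hence the gain from the two lower points, of total weight $\frac16+\frac13=\frac12$, never exceeds the loss from the weight-$\frac12$ point $L$ moving away by $\Delta$. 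The paper verifies this piece by piece, e.g.\ $A_1=A+\Delta$, $A_2=A+\Delta_1$, $A_3=A+\Delta_1+\frac23(\Delta-\Delta_2)$ in Case 1.2.2.
\end{itemize}
Finally, agents with $x_i\le\frac L3$ are no obstacle. Every support point lies in $[\frac{2L'}{3},L']$ for the reported maximum $L'$, and $L'\ge L$ because such an agent is not the one at $L$. So she pays $x_i$ under every report, and there is no probability mass near her to control.
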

\begin{figure}[htbp]
    \centering
    \includegraphics[width=1\linewidth]{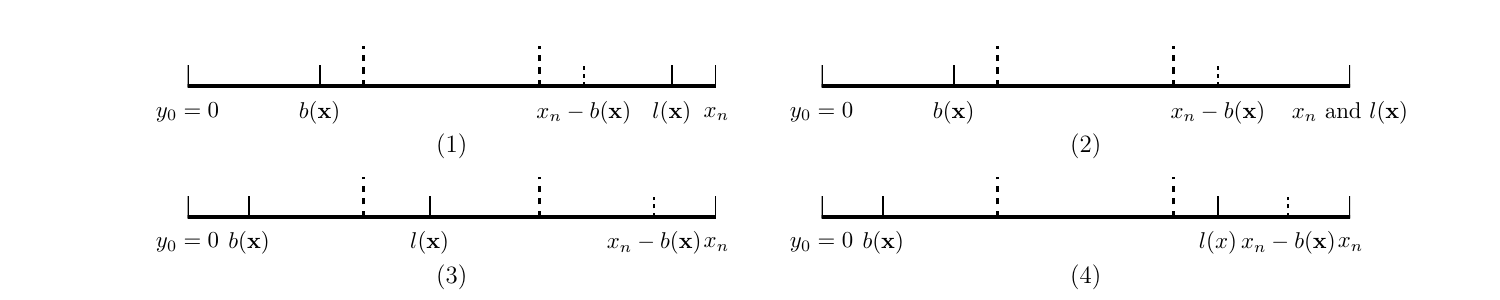}
    \caption{An instance $\mathbf{x}=(x_{1},x_{2},...,x_{n})$ with $0\leq x_{1} \leq x_{n}$}
    \label{fig3}
\end{figure}

\begin{proof}
Denote Mechanism \ref{mec3} by $f$.

\noindent\textbf{Strategy-proofness.}
 We need to prove that for any instance $ \mathbf{x}$, every agent $i\in N$ cannot benefit by misreporting her location $x_{i}$ as $x_{i}'\in \mathbb{R}$. Denote $\mathbf{x'}=(x_{i}',\mathbf{x}_{-i})$ and the possible locations of \( f(\mathbf{x}) \) as $A, \frac{A+B}{2}, B$ from left to right. Our analysis falls into the following cases:

\textbf{Case 1} $b( \mathbf{x}) \geq L( \mathbf{x}) - l( \mathbf{x})$ and $l( \mathbf{x}) \neq L( \mathbf{x})$, as shown in Fig. \ref{fig3}(1). Suppose agent $i$ reports her location $x_{i}$ to $x_{i}'$.

\textbf{Case 1.1} $x_{i} \leq b( \mathbf{x})$. Agent $i$ always choose the prelocated facility, then we have $cost(x_{i},f(\mathbf{x}))\leq cost(x_{i},f(x_{i}',\mathbf{x}_{-i}))$.

\textbf{Case 1.2} $x_{i} \in [l( \mathbf{x}), L( \mathbf{x}))$.

\textbf{Case 1.2.1} $x_{i}' \leq L( \mathbf{x})$. The possible locations of $f(x_{i}',  \mathbf{x}_{-i})$ are denoted as $A', \frac{A'+B'}{2}, B'$ from left to right. We have $A' \in \left[\frac{2 \cdot L( \mathbf{x})}{3}, A\right]$. Let $\Delta = A - A' \geq 0$, then
\begin{align*}
cost(x_{i}, f( \mathbf{x})) &= \frac{1}{6} \cdot (x_{i} - A) + \frac{1}{3} \left|x_{i} - \frac{A + B}{2}\right| + \frac{1}{2} \cdot (B - x_{i}), \\
cost(x_{i}, f(x_{i}',  \mathbf{x}_{-i})) &= \frac{1}{6} \cdot \left|x_{i} - A'\right| + \frac{1}{3} \left|x_{i} - \frac{A' + B}{2}\right| + \frac{1}{2} \cdot (B - x_{i}) \\
&\geq \frac{1}{6} \cdot (x_{i} - A + \Delta) + \frac{1}{3} \left[\left|x_{i} - \frac{A + B}{2}\right| - \frac{\Delta}{2}\right] + \frac{1}{2} \cdot (B - x_{i}) \\
&= cost(x_{i}, f( \mathbf{x})).
\end{align*}

\textbf{Case 1.2.2} $x_{i}' > L( \mathbf{x})$. Let $\Delta = x_{i}' - L( \mathbf{x})$. We have $L( \mathbf{x'}) = x_{i}'$. It holds that $cost(x_{i},f(\mathbf{x}))\leq cost(x_{i},f(x_{i}',\mathbf{x}_{-i}))$ when $b( \mathbf{x}) \neq b( \mathbf{x'})$. Now consider the case of $b( \mathbf{x}) = b( \mathbf{x'})$. Let $\Delta_{1} = b( \mathbf{x}) - [L( \mathbf{x}) - l( \mathbf{x'})]$ and $\Delta_{2} = \frac{3}{2} \cdot l( \mathbf{x'}) - L( \mathbf{x})$, we have
\begin{equation*}
    L( \mathbf{x}) + \Delta_{1} - l( \mathbf{x'}) = b( \mathbf{x'}) = b( \mathbf{x}), L( \mathbf{x}) + \Delta_{2} - l( \mathbf{x'}) = \frac{1}{3} \cdot [L( \mathbf{x}) + \Delta_{2}].
\end{equation*}

(1) $\Delta \leq \Delta_{1}$. The possible locations of $f(x_{i}',  \mathbf{x}_{-i})$ are denoted as $A_{1}, \frac{A_{1} + B_{1}}{2}, B_{1}$ from left to right. It is evident that $b( \mathbf{x'}) \geq L( \mathbf{x}) + \Delta - l( \mathbf{x'})$, we have
$$ A_{1} = A + \Delta, B_{1} = B + \Delta. $$
\begin{align*}
  cost(x_{i},f( \mathbf{x})) &= \frac{1}{6}\cdot(x_{i}-A)+\frac{1}{3}\left|x_{i}-\frac{A+B}{2}\right|+\frac{1}{2}\cdot (B-x_{i}), \\
  cost(x_{i},f(x_{i}', \mathbf{x}_{-i})) &= \frac{1}{6}\cdot\left|x_{i}-A_{1}\right|+\frac{1}{3}\left|x_{i}-\frac{A_{1}+B_{1}}{2}\right|+\frac{1}{2}\cdot(B_{1}-x_{i}) \\
  &\geq \frac{1}{6}\cdot(x_{i}-A-\Delta)+\frac{1}{3}\left[\left|x_{i}-\frac{A+B}{2}\right|-\Delta\right]+\frac{1}{2}\cdot(\Delta+B-x_{i}) \\
   &= cost(x_{i},f( \mathbf{x}))-\frac{\Delta}{6}-\frac{\Delta}{3}+\frac{\Delta}{2}\\
   &= cost(x_{i},f( \mathbf{x})).
\end{align*}

(2) \( \Delta \in (\Delta_{1}, \Delta_{2}) \). The possible locations of $f(x_{i}',  \mathbf{x}_{-i})$ are denoted as \( A_{2}, \frac{A_{2}+B_{2}}{2}, B_{2} \) from left to right. It is known that \( b( \mathbf{x'}) < L( \mathbf{x}) + \Delta - l( \mathbf{x'}) < \frac{1}{3} \cdot [L( \mathbf{x}) + \Delta] \),
$$ A_{2} = l( \mathbf{x'}) = A + \Delta_{1}, B_{2} = L( \mathbf{x}) + \Delta = B + \Delta. $$
\begin{align*}
  cost(x_{i},f( \mathbf{x})) &= \frac{1}{6}\cdot(x_{i}-A)+\frac{1}{3}\left|x_{i}-\frac{A+B}{2}\right|+\frac{1}{2}\cdot (B-x_{i}) ,\\
  cost(x_{i},f(x_{i}', \mathbf{x}_{-i})) &= \frac{1}{6}\cdot\left|x_{i}-A_{2}\right|+\frac{1}{3}\left|x_{i}-\frac{A_{2}+B_{2}}{2}\right|+\frac{1}{2}\cdot(B_{2}-x_{i}) \\
  &\geq \frac{1}{6}\cdot(x_{i}-A-\Delta_{1})+\frac{1}{3}\left[\left|x_{i}-\frac{A+B}{2}\right|-\frac{\Delta_{1}+\Delta}{2}\right]+\frac{1}{2}\cdot(\Delta+B-x_{i}) \\
   &= cost(x_{i},f( \mathbf{x}))-\frac{\Delta_{1}}{6}-\frac{\Delta_{1}}{6}-\frac{\Delta}{6}+\frac{\Delta}{2}\\
   &= cost(x_{i},f( \mathbf{x}))-\frac{\Delta_{1}}{3}+\frac{\Delta}{3}\\
   &\geq cost(x_{i},f( \mathbf{x})).
\end{align*}

(3) \( \Delta \geq \Delta_{2} \). The possible locations of $f(x_{i}',  \mathbf{x}_{-i})$ are denoted as \( A_{3}, \frac{A_{3}+B_{3}}{2}, B_{3} \) from left to right. It is known that \( b( \mathbf{x'}) < \frac{1}{3} \cdot [L( \mathbf{x}) + \Delta] \leq L( \mathbf{x}) + \Delta - l( \mathbf{x'}) \), we have
$$ A_{3} = \frac{2}{3} \cdot [L( \mathbf{x}) + \Delta] = A + \Delta_{1} + \frac{2}{3} \cdot (\Delta - \Delta_{2}), B_{3} = B + \Delta. $$
\begin{align*}
  cost(x_{i},f( \mathbf{x})) &= \frac{1}{6}\cdot(x_{i}-A)+\frac{1}{3}\left|x_{i}-\frac{A+B}{2}\right|+\frac{1}{2}\cdot (B-x_{i}), \\
  cost(x_{i},f(x_{i}', \mathbf{x}_{-i})) &= \frac{1}{6}\cdot\left|x_{i}-A_{3}\right|+\frac{1}{3}\left|x_{i}-\frac{A_{3}+B_{3}}{2}\right|+\frac{1}{2}\cdot(B_{3}-x_{i}) \\
  &\geq \frac{1}{6}\cdot\left\{x_{i}-A-\left[\Delta_{1}+\frac{2\cdot(\Delta-\Delta_{2})}{3}\right]\right\}\\
  &+\frac{1}{3}\left[\left|x_{i}-\frac{A+B}{2}\right|-\frac{\Delta+\Delta_{1}+\frac{2}{3}\cdot(\Delta-\Delta_{2})}{2}\right]+\frac{1}{2}\cdot(\Delta+B-x_{i}) \\
   &= cost(x_{i},f( \mathbf{x}))-\frac{\Delta_{1}}{6}-\frac{\Delta-\Delta_{2}}{9}-\frac{1}{3}\cdot\frac{\Delta+\Delta_{1}+\frac{2}{3}\cdot(\Delta-\Delta_{2})}{2}+\frac{\Delta}{2}\\
   &\geq cost(x_{i},f( \mathbf{x}))+\frac{\Delta}{9}+\frac{2}{9}\cdot\Delta_{2}-\frac{\Delta_{1}}{3}\\
   &\geq cost(x_{i},f( \mathbf{x})).
\end{align*}

\textbf{Case 1.3} $x_{i}=L( \mathbf{x})$. If $x_{i}'\leq L( \mathbf{x})$, the facility will move away from $L( \mathbf{x})$, then $cost(x_{i},f(\mathbf{x}))\leq cost(x_{i},f(x_{i}',\mathbf{x}_{-i}))$. If $x_{i}'> L( \mathbf{x})$, the proof is similar to that of Case 1.2.2.

\textbf{Case 2} $b( \mathbf{x})\geq L( \mathbf{x})-l( \mathbf{x})$ and $l( \mathbf{x})=L( \mathbf{x})$, as shown in Fig. \ref{fig3}(2). Suppose agent $i$ reports her location $x_{i}$ to $x_{i}'$.

\textbf{Case 2.1} $x_{i}\leq b( \mathbf{x})$. It is obvious that $cost(x_{i},f(\mathbf{x}))\leq cost(x_{i},f(x_{i}',\mathbf{x}_{-i}))$.

\textbf{Case 2.2} $x_{i}=L( \mathbf{x})$.

(1) Only agent $ i $ is located at $ L( \mathbf{x}) $.

If $ x_{i}' \leq L( \mathbf{x}) $, $cost(x_{i},f(\mathbf{x}))\leq cost(x_{i},f(x_{i}',\mathbf{x}_{-i}))$. If $ x_{i}' > L( \mathbf{x}) $, let $ \Delta = x_{i}' - L( \mathbf{x}) > 0 $, we have
\begin{align*}
  cost(x_{i},f( \mathbf{x})) &= \frac{1}{6}\cdot b( \mathbf{x})+\frac{1}{3}\cdot\frac{b( \mathbf{x})}{2}+\frac{1}{2}\cdot0,\\
 cost(x_{i},f(x_{i}', \mathbf{x}_{-i})) &= \frac{1}{6}\cdot \left|b( \mathbf{x})-\Delta\right|+\frac{1}{3}\cdot\left|\frac{b( \mathbf{x})}{2}-\Delta\right|+\frac{\Delta}{2} \\
  &\geq cost(x_{i},f(\mathbf{x}))-\frac{\Delta}{6}-\frac{\Delta}{3}+\frac{\Delta}{2}\\
   &= cost(x_{i},f( \mathbf{x})).
\end{align*}

(2) There are multiple agents at location $L( \mathbf{x})$.

If $x_{i}' \leq L( \mathbf{x})$, The result can only be that the facility is farther from $L( \mathbf{x})$. If $x_{i}' > L( \mathbf{x})$, the proof is similar to Case 1.2.2 shows agents located at $L( \mathbf{x})$ will not profit.

\textbf{Case 3} $b( \mathbf{x}) < L( \mathbf{x}) - l( \mathbf{x})$ and $l( \mathbf{x}) \leq \frac{2}{3} \cdot L( \mathbf{x})$, as shown in Fig. \ref{fig3}(3). Suppose agent $i$ reports her location $x_{i}$ to $x_{i}'$.

\textbf{Case 3.1} $x_{i} \leq b( \mathbf{x})$. Obviously, $cost(x_{i},f(\mathbf{x}))\leq cost(x_{i},f(x_{i}',\mathbf{x}_{-i}))$.

\textbf{Case 3.2} $x_{i} \in \left[l( \mathbf{x}), L( \mathbf{x})\right]$. If $x_{i}' \leq L( \mathbf{x})$, $cost(x_{i},f(\mathbf{x}))\leq cost(x_{i},f(x_{i}',\mathbf{x}_{-i}))$. If $x_{i}' > L( \mathbf{x})$, then $L( \mathbf{x'}) = x_{i}'$. Let $\Delta = L( \mathbf{x'}) - L( \mathbf{x}) > 0$. As $\Delta$ increases from zero, let $\Delta$ be $\Delta_{1}$ when $b( \mathbf{x'}) \neq b( \mathbf{x})$ is satisfied for the first time. It is characterized by:
$$\frac{1}{3} \cdot \left[L( \mathbf{x}) + \Delta_{1}\right] = b( \mathbf{x'}).$$
Let $\Delta_{2}$ be the value of $\Delta$ when $b( \mathbf{x'}) \geq L( \mathbf{x'}) - l( \mathbf{x'})$ is satisfied for the first time after $\Delta \geq \Delta_{1}$. Let $\Delta_{3}$ be the value of $\Delta$ when $b( \mathbf{x'}) \leq L( \mathbf{x'}) - l( \mathbf{x'})$ is satisfied for the first time after $\Delta \geq \Delta_{2}$. Let $\Delta_{4}$ be the value of $\Delta$ when $L( \mathbf{x'}) - l( \mathbf{x'}) = \frac{L( \mathbf{x'})}{3}$ is satisfied for the first time after $\Delta \geq \Delta_{3}$.

(1) $\Delta < \Delta_{1}$. If $x_{i} = l( \mathbf{x})$, it is obvious the possible facility is farther from $\frac{2 \cdot L( \mathbf{x})}{2}$, agents will not lie. If $x_{i} \neq l( \mathbf{x})$, the possible locations of $f(x_{i}',  \mathbf{x}_{-i})$ are denoted as \( A_{1}, \frac{A_{1}+B_{1}}{2}, B_{1} \) from left to right. It is evident that $b( \mathbf{x'}) = b( \mathbf{x}) < x_{i}' - l( \mathbf{x'})$, and we have
$$A_{1} = A + \frac{2}{3} \cdot \Delta, B_{1} = B + \Delta.$$
\begin{align*}
  cost(x_{i},f( \mathbf{x})) &= \frac{1}{6}\cdot \left|x_{i}-A\right|+\frac{1}{3}\cdot\left|x_{i}-\frac{A+B}{2}\right|+\frac{1}{2}\cdot(B-x_{i}),\\
  cost(x_{i},f(x_{i}', \mathbf{x}_{-i})) &= \frac{1}{6}\cdot \left|x_{i}-A_{1}\right|+\frac{1}{3}\cdot\left|x_{i}-\frac{A_{1}+B_{1}}{2}\right|+\frac{1}{2}\cdot(B_{1}-x_{i})\\
  &= \frac{1}{6}\cdot \left|x_{i}-\left(A+\frac{2}{3}\cdot\Delta\right)\right|+\frac{1}{3}\cdot\left|x_{i}-\left(\frac{A+B}{2}+\frac{5}{6}\cdot\Delta\right)\right|+\frac{1}{2}\cdot(B+\Delta-x_{i}) \\
  &\geq cost(x_{i},f( \mathbf{x}))-\frac{1}{9}\cdot\Delta-\frac{5}{18}\cdot\Delta+\frac{1}{2}\cdot\Delta\\
   &\geq cost(x_{i},f( \mathbf{x})).
\end{align*}

(2) $\Delta \in [\Delta_{1}, \Delta_{2})$. The possible locations of $f(x_{i}',  \mathbf{x}_{-i})$ are denoted as \( A_{2}, \frac{A_{2}+B_{2}}{2}, B_{2} \) from left to right. It is evident that $b( \mathbf{x'}) \leq L( \mathbf{x'}) - l( \mathbf{x'})$, and we have
$$A_{2} = A + \frac{2}{3} \cdot \Delta, B_{2} = B + \Delta.$$
According to the proof in (1), her cost will not decrease.

(3) $\Delta \in [\Delta_{2}, \Delta_{3})$, where $b( \mathbf{x'}) \geq L( \mathbf{x'}) - l( \mathbf{x'})$. Let $\delta_{1} = \Delta - \Delta_{2} \geq 0$ and the possible locations of $f(x_{i}',  \mathbf{x}_{-i})$ are denoted as \( A_{3}, \frac{A_{3}+B_{3}}{2}, B_{1} \) from left to right. We have
$$A_{3} = A + \frac{2}{3} \cdot \Delta_{2} + \delta_{1}, B_{3} = B + \Delta_{2} + \delta_{1}.$$
\begin{align*}
cost(x_{i}, f( \mathbf{x})) &= \frac{1}{6} \cdot \left|x_{i} - A\right| + \frac{1}{3} \cdot \left|x_{i} - \frac{A + B}{2}\right| + \frac{1}{2} \cdot (B - x_{i}), \\
cost(x_{i}, f(x_{i}',  \mathbf{x}_{-i})) &= \frac{1}{6} \cdot \left|x_{i} - A_{3}\right| + \frac{1}{3} \cdot \left|x_{i} - \frac{A_{3} + B_{3}}{2}\right| + \frac{1}{2} \cdot (B_{3} - x_{i}) \\
&= \frac{1}{6} \cdot \left|x_{i} - \left(A + \frac{2}{3} \cdot \Delta_{2} + \delta_{1}\right)\right| + \frac{1}{3} \cdot \left|x_{i} - \left(\frac{A + B}{2} + \frac{5}{6} \cdot \Delta_{2} + \delta_{1}\right)\right| \\
&\quad + \frac{1}{2} \cdot (B + \Delta_{2} + \delta_{1} - x_{i}) \\
&\geq cost(x_{i}, f( \mathbf{x})) - \frac{1}{9} \cdot \Delta_{2} - \frac{5}{18} \cdot \Delta_{2} + \frac{1}{2} \cdot \Delta_{2} - \frac{\delta_{1}}{6} - \frac{\delta_{1}}{3} + \frac{\delta_{1}}{2} \\
&\geq cost(x_{i}, f( \mathbf{x})).
\end{align*}

(4) $\Delta \in [\Delta_{3}, \Delta_{4})$, where $b( \mathbf{x'}) \leq L( \mathbf{x'}) - l( \mathbf{x'})$ and $l( \mathbf{x'}) \geq \frac{2 \cdot L( \mathbf{x'})}{3}$. Let $\delta_{2} = \Delta - \Delta_{3} \geq 0$ and The possible locations of $f(x_{i}',  \mathbf{x}_{-i})$ are denoted as \( A_{4}, \frac{A_{4}+B_{4}}{2}, B_{4} \) from left to right. We have
$$A_{4} = A + \frac{2}{3} \cdot \Delta_{2} + \Delta_{3} - \Delta_{2} = A - \frac{\Delta_{2}}{3} + \Delta_{3}, B_{3} = B + \Delta_{3} + \delta_{2}.$$
\begin{align*}
  cost(x_{i},f( \mathbf{x})) &= \frac{1}{6}\cdot \left|x_{i}-A\right|+\frac{1}{3}\cdot\left|x_{i}-\frac{A+B}{2}\right|+\frac{1}{2}\cdot(B-x_{i}),\\
  cost(x_{i},f(x_{i}', \mathbf{x}_{-i})) &= \frac{1}{6}\cdot \left|x_{i}-A_{4}\right|+\frac{1}{3}\cdot\left|x_{i}-\frac{A_{4}+B_{4}}{2}\right|+\frac{1}{2}\cdot(B_{4}-x_{i})\\
  &= \frac{1}{6}\cdot \left|x_{i}-\left(A-\frac{\Delta_{2}}{3}+\Delta_{3}\right)\right|+\frac{1}{3}\cdot\left|x_{i}-\left(\frac{A+B}{2}+\Delta_{3}-\frac{2\cdot\Delta_{2}}{3}+\frac{\delta_{2}}{2}\right)\right|\\
  &\quad+\frac{1}{2}\cdot(B+\Delta_{3}+\delta_{2}-x_{i})\\
  &\geq cost(x_{i},f( \mathbf{x}))+\frac{5\cdot\Delta_{2}}{18}+\frac{\delta_{2}}{3}\\
   &\geq cost(x_{i},f( \mathbf{x})).
\end{align*}

(5) $\Delta \geq \Delta_{4}$. As $\Delta$ increases gradually from $\Delta_{4}$, $L( \mathbf{x'}) - l( \mathbf{x'}) > \frac{L( \mathbf{x'})}{3}$ is always satisfied before $b( \mathbf{x'})$ changes. The possible locations of $f(x_{i}',  \mathbf{x}_{-i})$ are denoted as \( A_{5}, \frac{A_{5}+B_{5}}{2}, B_{5} \) from left to right. We have
$$A_{5} = A + \frac{2}{3} \cdot \Delta, B_{5} = B + \Delta.$$
The proof is similar to Case 3.2(1).

When $\Delta$ continues to increase until $b( \mathbf{x'})$ changes for the first time, the output facility location is far enough away from $x_{i}$, and the cost of agent $i$ will not be reduced.

\textbf{Case 4} $b( \mathbf{x}) < L( \mathbf{x}) - l( \mathbf{x})$ and $l( \mathbf{x}) > \frac{2}{3} \cdot L( \mathbf{x})$. If $l( \mathbf{x}) = L( \mathbf{x})$. It is easy to prove any agents can not benefit by misreporting their location; otherwise, as shown in Fig. \ref{fig3}(4). Suppose agent $i$ reports her location $x_{i}$ to $x_{i}'$.

\textbf{Case 4.1} $x_{i} \leq b( \mathbf{x})$. Obviously, $cost(x_{i},f(\mathbf{x}))\leq cost(x_{i},f(x_{i}',\mathbf{x}_{-i}))$.

\textbf{Case 4.2} $x_{i} \in \big[l( \mathbf{x}), L( \mathbf{x})\big)$. If $x_{i}' \leq L( \mathbf{x})$. The proof is similar to Case 1.2.1. If $x_{i}' = L( \mathbf{x}) + \Delta > L( \mathbf{x})$, it is clear that $cost(x_{i},f(\mathbf{x}))\leq cost(x_{i},f(x_{i}',\mathbf{x}_{-i}))$ when $b( \mathbf{x}) \neq b( \mathbf{x'})$. If $b( \mathbf{x}) = b( \mathbf{x'})$, as $\Delta$ increases from 0, the proof is similar to Case 1.2.2.

\textbf{Case 4.3} $x_{i} = L( \mathbf{x})$. If $x_{i}'\leq L( \mathbf{x})$, the facility will move away from $L( \mathbf{x})$ and agent $i$  can not benefit. If $x_{i}'> L( \mathbf{x})$, the proof is similar to the second part of Case 4.2.

In conclusion, Mechanism \ref{mec3} is a strategy-proof mechanism.

\noindent\textbf{Approximation Ratio.}
Given any instance $\mathbf{x}=(x_{1},x_{2},...,x_{n})$, our analysis proceeds as follows.

\textbf{Case 1}  $b( \mathbf{x}) \geq L( \mathbf{x}) - l( \mathbf{x})$, as shown in Fig. \ref{fig3}(1) or Fig. \ref{fig3}(2).

In this case, we have  $MC(\mathbf{x},OPT) = b( \mathbf{x})$. For each output of $f$, the maximum cost is less than $b(\mathbf{x})$, then it holds that
$$\frac{MC( \mathbf{x},f)}{MC( \mathbf{x},OPT)} \leq \frac{b( \mathbf{x})}{MC( \mathbf{x},OPT)} = 1.$$

\textbf{Case 2} $b( \mathbf{x}) < L( \mathbf{x}) - l( \mathbf{x})$ and $l( \mathbf{x}) \leq \frac{2}{3} \cdot L( \mathbf{x})$.

Let $l( \mathbf{x}) = \frac{L(\mathbf{x})}{3}+ \Delta \leq \frac{2L(\mathbf{x})}{3}, \Delta \in (0, \frac{L(\mathbf{x})}{3}]$. We have
\begin{align*}
  MC( \mathbf{x},OPT) &= \max\left\{b( \mathbf{x}),\frac{L( \mathbf{x})-l( \mathbf{x})}{2}\right\}=\max\left\{b( \mathbf{x}),\frac{L(\mathbf{x})}{3}-\frac{\Delta}{2}\right\}. \\
  MC( \mathbf{x},f) &\leq \frac{1}{6}\cdot \frac{L( \mathbf{x})}{3}+\frac{1}{3}\cdot\max\left\{b( \mathbf{x}),cost(l( \mathbf{x}),\frac{5L(\mathbf{x})}{6} )\right\}+\frac{1}{2}\cdot (L( \mathbf{x})-l( \mathbf{x}))\\
  &= \frac{L(\mathbf{x})}{18}+\frac{1}{3}\cdot\max\left\{b( \mathbf{x}),\frac{L( \mathbf{x})}{2}-\Delta\right\}+\frac{L(\mathbf{x})}{3}-\frac{\Delta}{2}.
\end{align*}

\textbf{Case 2.1} $b( \mathbf{x}) \geq \frac{L(\mathbf{x})}{2} - \Delta \in \left[\frac{L(\mathbf{x})}{6}, \frac{L(\mathbf{x})}{2}\right)$.
\begin{align*}
\frac{MC( \mathbf{x},f)}{MC( \mathbf{x},OPT)} &= \frac{\frac{L(\mathbf{x})}{18} + \frac{b( \mathbf{x})}{3} + \frac{L(\mathbf{x})}{3}- \frac{\Delta}{2}}{b( \mathbf{x})} \leq \frac{1}{3} + \frac{\frac{7L(\mathbf{x})}{18} + \frac{b( \mathbf{x})}{2} - \frac{L(\mathbf{x})}{4}}{b( \mathbf{x})} \\
&= \frac{1}{3} + \frac{14L(\mathbf{x}) + 18b( \mathbf{x}) - 9L(\mathbf{x})}{36 b( \mathbf{x})} \leq \frac{1}{3} + \frac{1}{2} + \frac{5}{6} = \frac{5}{3}.
\end{align*}

\textbf{Case 2.2} $b( \mathbf{x}) < \frac{L(\mathbf{x})}{2} - \Delta$.
$$\frac{MC( \mathbf{x},f)}{MC( \mathbf{x},OPT)} \leq \frac{\frac{L(\mathbf{x})}{18} + \frac{1}{3} \cdot \left(\frac{L(\mathbf{x})}{2} - \Delta\right) + \frac{L(\mathbf{x})}{3}- \frac{\Delta}{2}}{\frac{L(\mathbf{x})}{3} - \frac{\Delta}{2}} = \frac{10L(\mathbf{x}) - 15\Delta}{6L(\mathbf{x}) - 9\Delta} = \frac{5}{3}.$$

\textbf{Case 3} $b( \mathbf{x}) < L( \mathbf{x}) - l( \mathbf{x})$ and $l( \mathbf{x}) > \frac{2}{3} \cdot L( \mathbf{x})$. We have
\begin{align*}
MC( \mathbf{x},OPT) &= \max\left\{b( \mathbf{x}), \frac{L( \mathbf{x}) - l( \mathbf{x})}{2}\right\}, \\
MC( \mathbf{x},f) &= \frac{1}{6} \cdot \max\left\{b( \mathbf{x}), L( \mathbf{x}) - l( \mathbf{x})\right\} + \frac{1}{3} \cdot MC( \mathbf{x},OPT) \\
&\quad+ \frac{1}{2} \cdot \max\left\{b( \mathbf{x}), L( \mathbf{x}) - l( \mathbf{x})\right\}.\\
\frac{MC( \mathbf{x},f)}{MC( \mathbf{x},OPT)} &\leq \frac{2}{6} + \frac{1}{3} + \frac{1}{2} = \frac{5}{3}.
\end{align*}
\textbf{Tight Example.} Consider an instance $\mathbf{x}=(4,6)$. It is clear that $OPT(\mathbf{x})=5$ and the optimal value is 1. Mechanism \ref{mec3} outputs $4$, $5$ and $6$ with probabilities $\frac{1}{6}$, $\frac{1}{3}$ and $\frac{1}{2}$, respectively. Therefore,
\begin{equation*}
        \frac{MC(\mathbf{x},f)}{MC(\mathbf{x},OPT)}=\frac{5}{3}.
\end{equation*}
\end{proof}

For the social cost objective, according to Lemma \ref{lem6}, we have the following theorem.
\begin{theorem}\label{thm13}
Mechanism \ref{mec1} is a deterministic strategy-proof $(n-1)$-approximation mechanism under the social cost objective.
\end{theorem}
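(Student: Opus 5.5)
The plan is to read the statement in the context in which it appears, namely the special setting of this subsection, where every admissible instance satisfies $0\le x_1\le\cdots\le x_n$. Under this reading the theorem follows from two results already established. The statement cannot be meant for arbitrary instances with $x_1<0<x_n$. The tight example after Theorem \ref{thm7} ($n-1$ agents at $1$, one agent at $-1$) gives $SC(\mathbf{x},f)=n\cdot SC(\mathbf{x},OPT)$, so the bound $n-1$ fails there. For mixed-sign instances the correct guarantee is the factor $n$ of Theorem \ref{thm7}. I will state this restriction explicitly at the start of the proof.

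\textbf{Strategy-proofness.} Theorem \ref{thm3} shows that Mechanism \ref{mec1} is strategy-proof on all of $\mathbb{R}^n$, with misreports $x_i'\in\mathbb{R}$ arbitrary. Restricting the true profiles to $0\le x_1\le\cdots\le x_n$ only removes constraints. If the special setting also restricts reports to $[0,\infty)$, the set of allowed deviations shrinks further. In either case no profitable deviation exists, so strategy-proofness is inherited directly. It is worth noting that on this domain the mechanism simply outputs $f(\mathbf{x})=x_n=L(\mathbf{x})$, which is the first branch of Mechanism \ref{mec1}.

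\textbf{Approximation.} Every instance of the special setting satisfies the hypothesis $0\le x_1\le x_n$ of Lemma \ref{lem6}. Hence $SC(\mathbf{x},f)\le (n-1)\cdot SC(\mathbf{x},OPT)$ for all such $\mathbf{x}$, which gives the ratio $n-1$. The only point needing care is the claim inside Lemma \ref{lem6}, used in its Case 2.2, that $SC(\mathbf{x},OPT)\ge x_n-OPT(\mathbf{x})$. This holds because agent $n$ alone already pays at least that amount when $OPT(\mathbf{x})\ge x_n/2$.

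The main obstacle is conceptual rather than computational: making explicit that the $(n-1)$ bound is a special-setting statement, and that Lemma \ref{lem6} cannot be extended to $x_1<0$. In Case 3 of Theorem \ref{thm7}, the mechanism moves to $2|x_1|$, which costs up to $|x_1|$ for each of the other $n-1$ agents, while the agent at $x_1$ still pays $|x_1|$. This gives $n$, not $n-1$, and the tight example shows the loss is real. I would close by remarking that in the mixed-sign regime Cases 1 and 2 of Theorem \ref{thm7} still give $n-1$, and only Case 3 forces the factor $n$.
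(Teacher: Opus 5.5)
Your proposal is correct and is essentially the paper's argument. In the special setting the paper derives this theorem directly from Lemma~\ref{lem6}, where $f(\mathbf{x})=x_n$, and strategy-proofness is inherited from Theorem~\ref{thm3}; your point that the $n-1$ bound requires the one-sided domain, shown by the tight example of Theorem~\ref{thm7}, is accurate and a useful clarification of the statement's scope.
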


\section{Circle}
In this section, we study the approximation performance of strategy-proof mechanisms for maximum cost and the social cost objectives in a ciecle. We consider the case where agents are located on a circle with a known prelocated facility. In this setting, the impossibility results in Section \ref{sec3} still hold. We present a deterministic strategy-proof mechanism achieving $2$-approximation under the maximum cost objective. 

Suppose a facility has been prelocated at point 0 on a circle of length 2, as shown in Figure \ref{fig4}. Let $N = \{1, 2, \ldots, n\}$ be the set of agents, where each agent $i$ is located at a point $x_i \in [0, 2)$ on the circle, measured in the clockwise direction from 0. That is, $x_i$ denotes the clockwise arc length from the prelocated facility to agent $i$’s position. Define
\(
x_a = \max_{x_\in[0,1)}x_{i}, \quad
x_b = \min_{x_{i}\in [1,2)}x_i,
\)
where $x_a$ and $x_b$ are the locations of agents farthest from the prelocated facility.
\begin{figure}
    \centering
    \includegraphics[width=0.8\linewidth]{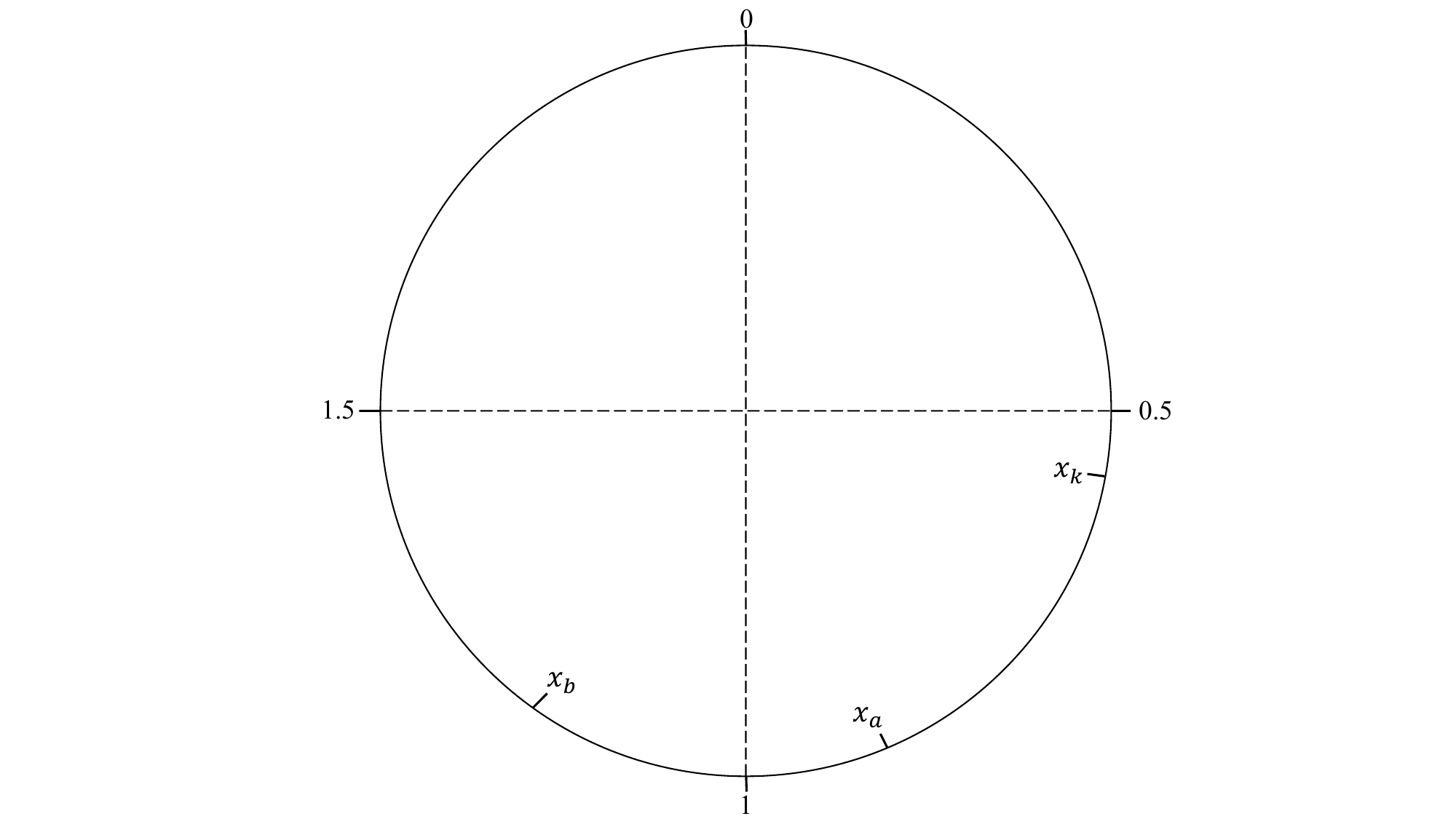}
    \caption{Illustration of the setting on the circle.}
    \label{fig4}
\end{figure}

\begin{mechanism}\label{mec4}
    Given $\mathbf{ x}=( x_{1}, x_{2},..., x_{n})$, if $x_{a}\geq 2-x_{b}$, the facility is located at $y$ as follows:
    $$y=\begin{cases}
    x_a &,\text{if } x_a\geq  \min\{4-2x_b,1\},\\
 \max\{2x_b-2,1\}&,\text{if }x_a\leq  \min\{4-2x_b,1\}.
    \end{cases}
    $$
If $x_{a}< 2-x_{b}$, $y$ is defined symmetrically.
\end{mechanism}
Following the proof idea of Theorem \ref{thm3}, it is easy to see that Mechanism \ref{mec4} is strategy-proof.
\begin{theorem}
    Mechanism \ref{mec4} is strategy-proof.
\end{theorem}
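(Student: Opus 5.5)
We follow the template of the proof of Theorem \ref{thm3}. First we translate the circle into ``two arms'': the prelocated facility sits at $0$, and every agent $i$ has distance $d_i=\min\{x_i,2-x_i\}\le 1$ to it. The agents with $x_i\in[0,1)$ form the clockwise arm, and $x_a$ is the farthest point on that arm. The agents with $x_i\in[1,2)$ form the counterclockwise arm, and $x_b$ is the farthest point on that arm, at distance $2-x_b$.

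By symmetry we assume $x_a\ge 2-x_b$. Mechanism \ref{mec4} then behaves like Mechanism \ref{mec1}, with the clockwise arm playing the role of the long side and the counterclockwise arm playing the role of the short side. The output is $x_a$ unless the short side is heavy enough relative to $x_a$. In that case the facility is placed at $\max\{2x_b-2,1\}$. On the clockwise arm this point lies at distance $2(2-x_b)$ from $0$, truncated at the antipode $1$. This is the analogue of $\max\{2|x_1|,x_n\}$ in Mechanism \ref{mec1}. The only new ingredient is the truncation at the antipode, where the two arms meet.

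We fix an agent $i$ and a misreport $x_i'$, write $\mathbf{x'}=(x_i',\mathbf{x}_{-i})$, and split into the same three cases as in Theorem \ref{thm3}.

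\textbf{Case A: $y=x_a$ with $x_a\ge \min\{4-2x_b,1\}$.}
\begin{itemize}
\item An agent at $x_a$ has cost $0$ and cannot gain.
\item A clockwise agent with $x_i<x_a$ has $y\ge x_i$. Moving her report inward leaves $y$ unchanged. Moving it outward only pushes $y$ clockwise, away from her. Reporting onto the counterclockwise arm either leaves $y$ unchanged or triggers the second branch, which pushes $y$ further clockwise or to the antipode; or, if the symmetric rule flips, $y$ lands on the counterclockwise arm. In the flipped case we check, as in Case 1 of Theorem \ref{thm3}, that the new point is at least as far from $x_i$ as her distance $x_i$ to $0$. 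That new point has counterclockwise distance at least $2x_a\ge 2x_i$ from $0$, or is the antipode.
\item A counterclockwise agent has cost $d_i\le 2-x_b\le x_a/2$ whenever $x_a<1$, or cost at most $1/2$ whenever $x_a=1$. Exactly as in Case 2 of Theorem \ref{thm3}, any report either keeps $y$ on the clockwise arm at distance at least $x_a$, or flips $y$ onto her own arm. In the flipped case $y$ sits at distance at least twice the new reported distance, and that new distance exceeds $x_a\ge 2d_i$. So her cost stays $d_i$.
\end{itemize}

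\textbf{Case B: $y=\max\{2x_b-2,1\}$.} Here the counterclockwise extreme is the binding agent, and the cost of every counterclockwise agent is $d_i\le 2-x_b$. A deviation by her either keeps the output on the far side, at clockwise position $2d'$ or at $1$, which is never closer to her than $0$. Or it makes the symmetric rule apply, which puts the facility at her reported position or beyond it on her own arm. As in Case 3 of Theorem \ref{thm3}, that position is at distance at least $x_a$ from $0$, and one checks it is no closer to $x_i$ than $d_i$. Clockwise agents all satisfy $x_i\le x_a\le 1$ and are served no worse by $0$. They can only move $y$ further clockwise, toward $1$, or flip to the symmetric rule. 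Neither helps, because $y$ already sits at distance at least $x_a\ge x_i$ beyond them, or at the antipode.

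\textbf{Main obstacle.} The step I expect to be hardest is the boundary bookkeeping caused by the antipode $1$ and the two $\min\{\cdot,1\}$ and $\max\{\cdot,1\}$ truncations. On the line, distances are unbounded and ``further away'' is monotone. On the circle, pushing the facility past $1$ brings it back toward the other arm. I would handle this by observing that every candidate output lies in $[x_a,1]$ on the long arm, or symmetrically on the other arm, and by always comparing against the agent's distance to $0$. Any point of the long arm at or beyond the agent's antipodal reflection is then no better than $0$ for her. I would verify this inequality separately in the regimes $x_a<1$ and $x_a=1$.
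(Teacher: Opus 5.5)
Your plan follows the same route as the paper, which simply appeals to the proof of Theorem~\ref{thm3}. However, it rests on a misreading of Mechanism~\ref{mec4}. Positions are clockwise arc lengths in $[0,2)$, so the second-branch output $\max\{2x_b-2,1\}$ lies in $[1,2)$. That is the \emph{counterclockwise} arm, the short side containing $x_b$, at distance $\min\{2(2-x_b),1\}$ from $0$. It is the mirror image of the clockwise point you analyse. For the rule as literally written the theorem is false, so no argument can close it.

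Take two agents at $0.3$ and $1.8$. Then $x_a=0.3\ge 2-x_b=0.2$ and $x_a<\min\{4-2x_b,1\}=0.4$. So $y=1.6$, and the agent at $0.3$ pays $\min\{0.3,0.7\}=0.3$. If she reports $0.5$, then $x_a=0.5>0.4$, so $y=0.5$ and her cost drops to $0.2$. This is exactly where your Case~B fails. You assert that clockwise agents can only push $y$ further clockwise or flip the rule. Under the literal rule, though, $y$ sits on the other arm. When $x_b>3/2$, the clockwise extreme pulls the facility onto her own arm by reporting beyond $\min\{4-2x_b,1\}$.

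What your sketch actually addresses is the corrected mechanism. Its second branch outputs the clockwise point $\min\{4-2x_b,1\}$; equivalently, $y$ is the clockwise point at distance $\max\{x_a,\min\{2(2-x_b),1\}\}$. This is the true analogue of Mechanism~\ref{mec1}, and the only reading under which an appeal to Theorem~\ref{thm3} makes sense. It is indeed strategy-proof, but you must state the correction explicitly instead of attributing it to the formula.

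Even then, several intermediate claims about the antipode are false, although the conclusions survive.
\begin{itemize}
\item Since $x_a\in[0,1)$, your regime ``$x_a=1$'' is empty. The real split is $2(2-x_b)<1$ versus $y=1$.
\item When $y=1$, a counterclockwise agent with $d_i>1/2$ pays $1-d_i$, not $d_i$, so ``never closer to her than $0$'' is wrong. She cannot gain because $y$ is already at the antipode. A flip also lands there, since $\min\{2x_a,1\}=1$.
\item In the flipped case of Case~A, the output is at distance $\max\{d',\min\{2x_a,1\}\}$, not twice the reported distance. The bound you actually need is that this distance is at least $d'>x_a\ge 2d_i$.
\item Clockwise agents are not all served by $0$, so compare with the actual cost $\min\{x_i,y-x_i\}$ rather than with $x_i$. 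The quantity $\max\{x_a,\min\{2(2-x_b),1\}\}$ can decrease only if $y=x_a$ and the unique agent at $x_a$ moves, and she already pays $0$. Any point on the other arm is at distance at least $\min\{x_i,1-x_i\}\ge\min\{x_i,y-x_i\}$ from her.
\end{itemize}
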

\begin{theorem}
    Mechanism \ref{mec4} is $2$-approximation under the maximum cost objective.
\end{theorem}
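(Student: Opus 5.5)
We follow the proof of Theorem \ref{thm4}: write $d(\cdot,\cdot)$ for shortest-arc distance on the circle, and compare the mechanism's cost with a lower bound on $MC(\mathbf{x},OPT)$ in each branch of Mechanism \ref{mec4}. By the symmetry built into the mechanism we assume $x_a\geq 2-x_b$. Set $d_a=x_a$ and $d_b=2-x_b$, so that $d_a\geq d_b$, and note that $4-2x_b=2d_b$.

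\textbf{Lower bounds on OPT.} Every agent's cost is at most her distance to $0$, so $MC(\mathbf{x},OPT)\leq d_a$. For the lower bound we adapt the quantities $l(\mathbf{x}),b(\mathbf{x})$ to the arc $[0,1)$ with $L=x_a$, and give three estimates.
\begin{itemize}
\item If the new facility lies on the $b$-side, or on the $a$-side but is far from $x_a$, then either the agent at $x_a$ pays at least $d_a/2$, or some agent in $[0,x_a]$ pays $b$-type costs, exactly as in Theorem \ref{thm1}.
\item If the new facility lies on the $a$-side, the agent at $x_b$ pays $\min\{d_b,\ d(x_b,y)\}$.
\item A single facility can serve both $x_a$ and $x_b$ only from near the antipode $1$. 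This gives $MC(\mathbf{x},OPT)\geq \min\{d_b,\ (x_b-x_a)/2\}$, where $x_b-x_a=2-d_a-d_b$ is the length of the arc through $1$ between the two extreme agents.
\end{itemize}

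\textbf{Branch $x_a\geq\min\{2d_b,1\}$.} Since $x_a<1$, this branch means $x_a\geq 2d_b$, and the mechanism outputs $y=x_a$. Agents on the $b$-side pay at most $d_b\leq x_a/2$. Agents on $[0,x_a]$ face the line instance treated in Theorem \ref{thm4} (first two cases), so their cost is at most $\max\{b,\ x_a-l\}$. Against the lower bound $\max\{d_b,\ b,\ (x_a-l)/2\}$ this gives ratio at most $2$. The bound $(x_a-l)/2$ is still valid here: a facility placed near $1$ does not help the $a$-side agents at or beyond $l$ any more than a facility on the line would, because $x_a\geq 2d_b$ forces $2-d_a-d_b\geq d_a/2$.

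\textbf{Branch $x_a<\min\{2d_b,1\}$.} The mechanism outputs $y=\max\{2-2d_b,1\}$. Agents on the $b$-side are at distance at most $d_b$ from $0$ and at distance at least their own distance from $y$, since $y$ sits at counterclockwise distance $\min\{2d_b,1\}$ from $0$ (the mirror image of the "$2|x_1|$" rule of Mechanism \ref{mec1}). Hence $b$-side agents keep cost at most $d_b$, and $a$-side agents pay at most $\min\{x,\ d(x,y)\}$.
\begin{itemize}
\item If $2d_b\leq 1$, then $MC\leq d_a<2d_b$. The three-point argument of Theorem \ref{thm2}'s style lower bound gives $OPT\geq d_b$: the points $x_a$ and $x_b$ lie on opposite sides, and the arc $2-d_a-d_b$ between them through $1$ has length at least $2d_b$. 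So the ratio is at most $2$.
\item If $2d_b>1$, then $y=1$. An $a$-side agent at $x$ pays $\min\{x,\ 1-x\}\leq 1/2$, and a $b$-side agent pays $\min\{2-x,\ x-1\}\leq 1/2$. We then need $OPT\geq \tfrac12 MC$. Both extreme agents are within distance $<1/2$ of the antipode, and any single new facility leaves either them or some interior agent far away. We would prove this by splitting on whether OPT's facility lies on the $a$-side or the $b$-side, and applying the $(x_a-l)/2$ and $b$ bounds on the corresponding arc.
\end{itemize}

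\textbf{Main obstacle.} The sub-case $y=1$ (that is, $2d_b>1$) is the hardest. Here the circle genuinely differs from the line: OPT may exploit the antipode, so we must redefine $l$ and $b$ with respect to the arc $[x_a,x_b]$ through $1$ rather than $[0,x_a]$. I would first try to show that in this sub-case the worst agent under $y=1$ is an endpoint of a gap of the whole point set, and that OPT must pay half that gap, which yields the factor $2$ exactly as in the first case of Theorem \ref{thm4}.
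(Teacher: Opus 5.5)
\textbf{The sub-case $2d_b\le 1$ of your second branch fails.} Your claim $MC(\mathbf{x},OPT)\ge d_b$ requires the arc $2-d_a-d_b$ to have length at least $2d_b$, i.e.\ $d_a+3d_b\le 2$. This branch only gives $d_b\le d_a\le 2d_b\le 1$, so $d_a+3d_b$ can exceed $2$.

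\emph{Counterexample.} Take $x_a=0.97$, $x_b=1.51$. The mechanism outputs $y=1.02$, but a new facility at $1.24$ has maximum cost $0.27<d_b=0.49$.

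\emph{Why your bounds cannot recover.} The valid lower bound is $\min\{d_b,(2-d_a-d_b)/2\}=0.27$. Your upper estimate $MC(\mathbf{x},f)\le d_a=0.97$ then gives a ratio of about $3.6$, not $2$.

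\emph{Missing ingredient: a sharper upper bound on the mechanism.} Since $y=2-2d_b$, an agent at $x\in[0,1)$ pays $\min\{x,\,2-2d_b-x\}\le 1-d_b$. Hence $MC(\mathbf{x},f)\le \min\{d_a,1-d_b\}$. The ratio is then at most $d_a/d_b\le 2$, or at most $2(1-d_b)/(2-d_a-d_b)<2$ using $d_a<1$.

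\emph{First branch.} The same invalid term appears in your lower bound $\max\{d_b,b,(x_a-l)/2\}$. For example, $x_a=0.95$, $x_b=1.55$ has optimum $0.3<0.45$. There it is harmless: your third bullet gives $MC(\mathbf{x},OPT)\ge \min\{d_b,(2-d_a-d_b)/2\}\ge d_b/2$ in that branch, which is enough for the $b$-side agents. The terms $b$ and $(x_a-l)/2$ remain valid lower bounds.

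\textbf{The sub-case $2d_b>1$ (output $y=1$) is left as a plan, and it is the substantive part of the theorem.} Single-arc bounds such as $b$ and $(x_a-l)/2$ cannot work here, and neither can the two-extreme bound $(2-d_a-d_b)/2$. For agents at $0.5$, $0.95$, $1.05$, the mechanism pays $1/2$ while $(2-d_a-d_b)/2=0.05$.

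\emph{Idea you need (this is how the paper handles $x_b\in[1,1.5)$).}
\begin{enumerate}
\item Let $x_k$ be the agent closest to $0.5$ or $1.5$; it attains $MC(\mathbf{x},f)$. Say $x_k\in[0,1)$, so $MC(\mathbf{x},f)=\min\{x_k,1-x_k\}$.
\item Pair $x_k$ with the extreme agent $x_b$ of the opposite half, and fix any placement of the new facility.
\item If $x_k$ is served by $0$, its cost is $x_k\ge MC(\mathbf{x},f)$.
\item If $x_b$ is served by $0$, its cost is $d_b>1/2\ge MC(\mathbf{x},f)$.
\item Otherwise both are served by the new facility, and one of them pays at least $d(x_k,x_b)/2\ge \min\{x_k,1-x_k\}/2$, because $x_b-x_k\ge 1-x_k$ and $2-x_b+x_k>x_k$.
\end{enumerate}
This gives ratio at most $2$. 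The case $x_k\in[1,2)$ is symmetric, pairing with $x_a$. The paper additionally shows first that the optimal facility lies within $1/2$ of the antipode, and states the bound as $2\min\{x_k,1-x_k\}/(x_b-x_k)<2$.
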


\begin{proof}
Denote Mechanism \ref{mec4} by $f$. For any instance $\mathbf{x}=(x_{1},x_{2},...,x_{n})$, without loss of generality, let $x_{a}\geq 2-x_{b}.$ If $x_{b}\geq1.5$, the proof follows the same idea as that of Theorem \ref{thm4}. Now consider $x_{b}\in [1,1.5)$, we have $MC(\mathbf{x},OPT)\leq MC(\mathbf{x},1) \leq\frac{1}{2}.$ If $|1-OPT(\mathbf{x})|\geq0.5$, it can be proved that $MC(\textbf{x},OPT)\geq\max\{cost(x_{a},OPT),cost(x_{b},OPT)\}>0.5$ which leads to a contradiction. Then we have $|1-OPT(\mathbf{x})|<0.5.$

Let $x_k$ be the location closest to either 0.5 or 1.5, whichever is closer. Without loss of generality, assume $x_{k}\in [0,1)$.

\textbf{Case 1} $x_{k}\geq 0.5$.
\begin{align*}
    \frac{MC(\mathbf{x},f)}{MC(\mathbf{x},OPT)} &= \frac{1-x_{k}}{MC(\mathbf{x},OPT)} \leq \frac{1-x_{k}}{\max\limits_{i=b,k}cost(x_{i},OPT)}\leq\frac{1-x_{k}}{\frac{x_b-1+1-x_{k}}{2}}<2.
\end{align*}

    \textbf{Case 2} $x_{k}< 0.5$.
\begin{align*}
    \frac{MC(\mathbf{x},f)}{MC(\mathbf{x},OPT)} &= \frac{x_{k}}{MC(\mathbf{x},OPT)} \leq \frac{x_{k}}{\max\limits_{i=b,k}cost(x_{i},OPT)}\leq\frac{x_{k}}{\frac{x_b-1+1-x_{k}}{2}}<2.
\end{align*}
The proof for the case when $x_k\in[1,2)$ is similar.
\end{proof}

\begin{theorem}
    Mechanism \ref{mec4} is $n$-approximation under the social cost objective.
\end{theorem}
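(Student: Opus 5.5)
We follow the template of Theorem \ref{thm7}. The key fact is that $SC(\mathbf{x},OPT)$ is at least the cost, under the optimal solution, of the single agent at the extreme point of the clockwise arc ($x_a$) or of the counterclockwise arc ($x_b$). Meanwhile $SC(\mathbf{x},f)$ is a sum of $n$ terms, each bounded by the distance from an agent to the nearer of $0$ and $f(\mathbf{x})$. We may assume $x_a\geq 2-x_b$, so the arc $[0,1)$ contains the agent farthest from $0$. We then split according to the two branches of Mechanism \ref{mec4}.

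\textbf{Branch 1: $y=x_a$.} This branch is the circular analogue of Cases 1 and 2 of Theorem \ref{thm7}. Each agent in $[0,1)$ pays at most $\min\{x_i,x_a-x_i\}$. Each agent in $[1,2)$ pays at most $2-x_i\leq 2-x_b\leq x_a$. When $x_b\geq 1.5$, or when $x_a$ is far from $1$, the geometry coincides with the line: the point $1$ is not useful, and we can unfold the circle at the antipode $1$ into the segment $(-1,1)$. Then Lemma \ref{lem6} and the Case 2 argument of Theorem \ref{thm7} apply verbatim and give a factor $n-1$.

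\textbf{Branch 2: $y=\max\{2x_b-2,1\}$.} This branch corresponds to Case 3 of Theorem \ref{thm7}, where the facility is placed far out so that the agents on the side opposite the new facility still prefer $0$. We bound every agent's cost by $2-x_b$ or by $1-x_a\le 2-x_b$, up to the factor used in Theorem \ref{thm4}. Then $SC(\mathbf{x},f)\leq n\cdot c$ with $c=\max\{2-x_b,\ldots\}$. For the lower bound we show $SC(\mathbf{x},OPT)\geq c$, reusing the argument from the proof of the 2-approximation for the maximum cost. There we showed that $|1-OPT(\mathbf{x})|<0.5$ in the relevant regime, so some extreme agent ($x_a$ or $x_b$) must pay a definite amount under $OPT$. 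That forced payment gives the required lower bound on $SC(\mathbf{x},OPT)$.

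\textbf{Main obstacle.} The new difficulty on the circle is the regime $x_b\in[1,1.5)$ together with $x_a$ close to $1$. There the optimum may sit near the antipode $1$ and serve agents on both arcs at once, so $SC(\mathbf{x},OPT)$ can be as small as the distance between $x_a$ and $x_b$ through $1$. Our plan is to lower bound $SC(\mathbf{x},OPT)$ by $\tfrac12\big((1-x_a)+(x_b-1)\big)$, mirroring Cases 1 and 2 of the maximum-cost proof. The point to check is that every agent's cost under $f$ is at most twice this quantity, or at most this quantity when $f(\mathbf{x})=1$. For $f(\mathbf{x})=1$ we get per-agent cost at most $\max\{1-x_a,x_b-1\}\leq (1-x_a)+(x_b-1)$, which gives a factor $n$ directly. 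We would verify this case carefully. Tightness comes from the same example as in Theorem \ref{thm7}, embedded near $0$ on the circle.
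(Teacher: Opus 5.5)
Your proposal has genuine gaps. The central one is the lower bound on $SC(\mathbf{x},OPT)$. In Branch~2 and in your ``main obstacle'' paragraph, you bound $SC(\mathbf{x},OPT)$ from below by a single quantity read off the two extreme agents, and every agent's cost under $f$ from above by the same quantity. The candidates are $2-x_b$, $\frac12((1-x_a)+(x_b-1))$, or a forced payment borrowed from the maximum-cost proof; that proof concerns $OPT_{MC}$ and says nothing about where the social-cost optimum lies.

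This already fails when $x_b\le 3/2$, where the mechanism always outputs $1$. An interior agent $x_i\in[0,1)$ then pays $\min\{x_i,1-x_i\}$, which $\max\{1-x_a,x_b-1\}$ does not control. For agents at $0.5,0.9,1.1$ we get $f(\mathbf{x})=1$, and the agent at $0.5$ pays $1/2$ while both of your quantities equal $0.1$. Conversely, $2-x_b$ is no lower bound: for agents at $0.9,1.1$ it is $0.9$, but $SC(\mathbf{x},OPT)\le 0.2$. Even granting your per-agent bound, a lower bound $s/2$ against an upper bound $s$, with $s=(1-x_a)+(x_b-1)$, yields $2n$, not $n$.

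The missing idea is a per-agent lower bound: $SC(\mathbf{x},OPT)\ge cost(x_j,f(\mathbf{x}))$ for every $j$. Pair $j$ with the extreme agent of the opposite half. Under the optimum, either one of the two is served by $0$, or both are served by the new facility and together pay at least their circular distance. In each case the payment is at least $cost(x_j,f(\mathbf{x}))$; for $f(\mathbf{x})=1$ this uses $x_a\ge 2-x_b\ge 1/2$. Hence $SC(\mathbf{x},f)\le n\max_j cost(x_j,f(\mathbf{x}))\le n\cdot SC(\mathbf{x},OPT)$. Pairing every agent with $x_a$ handles $f(\mathbf{x})=x_a$ the same way, with factor $n-1$.

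Your two reductions also fail as stated. In Branch~1 the antipode is not useless when $x_b\ge 3/2$. Take one agent at $0.98$ and three at $1.52$: the mechanism outputs $0.98$. The circular optimum instead places the facility at $1.52$ and serves $0.98$ through the antipode, for a total of $0.54$, whereas the unfolded line optimum costs $0.98$. So Lemma~\ref{lem6} and Theorem~\ref{thm7}, which compare against the line optimum, do not transfer verbatim.

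In Branch~2 the picture from Case~3 of Theorem~\ref{thm7} does not match the mechanism as written. For $x_b>3/2$, $y=\max\{2x_b-2,1\}=2x_b-2$ lies on the half containing $x_b$. Agents on the other half may therefore keep their distance to $0$, up to $x_a$, which can approach $2(2-x_b)$. Agents at $0.55$ and $1.7$ give $y=1.4$, $SC(\mathbf{x},f)=0.85$ and $SC(\mathbf{x},OPT)=0.3$, a ratio of $17/6>2=n$. So this branch cannot be closed unless the facility is placed at the reflected point $\min\{4-2x_b,1\}$ on the half of $x_a$, which is the genuine analogue of Mechanism~\ref{mec1}. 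There, for $x_b>3/2$, every cost is at most $2-x_b$, and pairing $x_a$ with $x_b$ gives $SC(\mathbf{x},OPT)\ge 2-x_b$.

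The paper argues differently. It splits on $x_a\le 1/2$, reduced to the line, versus $x_a>1/2$, where all costs are at most $3/4$ and it concludes $SC(\mathbf{x},f)/SC(\mathbf{x},OPT)\le 1+SC(\mathbf{x},f)\le 1+3n/4\le n$. That last chain has no lower bound on $SC(\mathbf{x},OPT)$ behind it, and the instance above also defeats it, so it cannot supply your missing step.
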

\begin{proof}
    Denote Mechanism \ref{mec4} by $f$. For any instance $\mathbf{x}=(x_1,x_2,...,x_n)$, without loss of generality, let $x_a\geq 2-x_b$. If $x_a \leq0.5$, the proof follows the same idea as that of Theorem \ref{thm4}. If $x_a>0.5$, then we have $|1-f(\textbf{x})|<0.5.$ It is easy to know $cost(x_i,f(\textbf{x}))\leq\frac{3}{4}$. Then
    \begin{align*}
        &\frac{SC(\textbf{x},f)}{SC(\textbf{x},OPT)}\\&\leq 1+\frac{SC(\textbf{x},f)-SC(\textbf{x},OPT)}{SC(\textbf{x},OPT)}\\&\leq1+SC(\textbf{x},f)\\&\leq 1+\frac{3n}{4}\leq n.
    \end{align*}
\end{proof}
\begin{theorem}\rm{\cite{bib3}}
Proportional Mechanism is $6$-approximation under the social cost objective.
\end{theorem}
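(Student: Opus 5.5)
The statement concerns the Proportional Mechanism (Mechanism \ref{mec2} transplanted to the circle). It outputs agent location $x_k$ with probability proportional to $d(x_k):=\min\{x_k,2-x_k\}$, the distance of $x_k$ from the prelocated facility at $0$. The plan is to adapt the standard proportional-mechanism argument for two facilities, treating the prelocated facility as a fixed first center.

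\textbf{Step 1: reduce to the optimal partition.} Fix an optimal location $y^*$ and write $OPT=SC(\mathbf{x},OPT)$. Partition $N$ into $N_0$, the agents served by $0$, and $N_1$, the agents served by $y^*$. Then $OPT=\sum_{i\in N_0} d(x_i)+\sum_{i\in N_1} d(x_i,y^*)$. Put $D=\sum_i d(x_i)$, $A_0=\sum_{i\in N_0}d(x_i)$ and $A_1=\sum_{i\in N_1}d(x_i,y^*)$.

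\textbf{Step 2: bound the expected cost.} Fix an output $Y=x_k$. Every agent $i$ pays at most $\min\{d(x_i),d(x_i,x_k)\}$. For $i\in N_0$ we use the bound $d(x_i)$, so these agents contribute at most $A_0$ in every realization. For $i\in N_1$ we use the bound $\min\{d(x_i),\,d(x_i,y^*)+d(y^*,x_k)\}$, which holds by the triangle inequality on the circle metric.

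\textbf{Step 3: sum over the randomization.} This reduces the problem to bounding
\[
\sum_{i\in N_1}\sum_{k}\frac{d(x_k)}{D}\min\{d(x_i),\,d(x_i,y^*)+d(y^*,x_k)\}.
\]
\begin{itemize}
\item \emph{Terms with $k\in N_0$.} Use $d(x_i)$ and weight at most $A_0/D$. This gives total at most $\frac{A_0}{D}\sum_{i\in N_1}d(x_i)$.
\item \emph{Terms with $k\in N_1$.} Use $d(x_i,y^*)+d(y^*,x_k)$ and $d(x_k)\le d(y^*)+d(x_k,y^*)$.
\end{itemize}
Write $c=d(y^*)$, $n_1=|N_1|$ and $s_i=d(x_i,y^*)$, so that $A_1=\sum_{i\in N_1}s_i$. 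Then $\sum_{i\in N_1}d(x_i)\le n_1c+A_1$ and $D\le A_0+n_1c+A_1$. Everything becomes an inequality in the scalars $A_0,A_1,c,n_1$. The target is a bound of the form $SC(\mathbf{x},f)\le 6(A_0+A_1)$. The key cross term is $\frac{A_0}{D}\,n_1c$. Either $A_0\lesssim A_1$ or $n_1c$ is small relative to $D$, and in each case the term is $O(A_0+A_1)$. This is the same mechanism as in Lu et al.

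\textbf{Step 4: handle the hard case.} The main obstacle is the regime where $A_1$ is tiny, with all of $N_1$ clustered at $y^*$, while $c$ is large. Here the bound must come from the mechanism selecting some $k\in N_1$ with probability about $n_1c/D$.
\begin{itemize}
\item In that event, agents in $N_1$ pay at most $s_i+s_k$, so their expected cost is $O(A_1)$ plus an averaging term $\sum_k d(x_k)s_k/D\cdot n_1\le A_1\cdot O(1)$.
\item In the complementary event, with probability $A_0/D$, they pay at most $n_1c+A_1$. The product $\frac{A_0(n_1c+A_1)}{A_0+n_1c}\le A_0+A_1$ is the crucial estimate.
\end{itemize}

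Collecting all the terms gives a constant that I expect to fit under $6$. If it does not, I would fall back on citing Theorem \ref{thm10}. Our circle instance coincides with the setting of \cite{bib3} with one center fixed, and the $4$-approximation analysis there in general metrics goes through. On the circle the triangle inequality for the geodesic distance is all that is used.
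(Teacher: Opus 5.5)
The paper gives no argument for this statement beyond the citation to \cite{bib3}, so your fallback is essentially what the paper does. Your direct argument, however, has a genuine gap, located in the second bullet of Step 3 and the first bullet of Step 4. There you bound agent $i$'s cost by $s_i+s_k$ for \emph{every} $k\in N_1$, and the weight $d(x_k)$ by $c+s_k$. After that, the expression is not a function of $A_0,A_1,c,n_1$ alone: it contains $\frac{n_1}{D}\sum_{k\in N_1}d(x_k)s_k$. Your claim that this term is $O(A_1)$ is false even when $y^*$ is optimal. Take $m$ agents at $c=2s/m$ and one agent at $c+s$, with $s\le\frac12$ so that on the circle all relevant distances are line distances. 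Then $y^*=c$ is optimal, with $SC(\mathbf{x},OPT)=A_1=s$, $A_0=0$, $n_1=m+1$ and $D=3s+c$. Yet
\[
\frac{n_1}{D}\sum_{k\in N_1}d(x_k)s_k=\frac{(m+1)(c+s)s}{3s+c}\approx\frac{(m+1)s}{3},
\]
while the mechanism itself pays only about $\frac{4}{3}s$. The loss comes from discarding the $\min$: when the far agent is drawn, the $m$ agents at $y^*$ really pay $c$ each, not $s$. So Step 3 as set up cannot produce any constant. A separate, minor point: the denominator $A_0+n_1c$ in your second bullet of Step 4 is not justified, since $D=A_0+B$ with $B=\sum_{i\in N_1}d(x_i)$, which may be smaller than $n_1c$. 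There you only need $\frac{A_0B}{A_0+B}\le A_0$, which is immediate.

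What is missing is the standard $D$-sampling (k-means++-type) averaging step, which keeps the $\min$. For $k\in N_1$, do not route the weight through $y^*$. Instead, average the triangle inequality over the cluster: $d(x_k)\le\frac{1}{n_1}\sum_{i\in N_1}\bigl[d(x_i)+d(x_i,x_k)\bigr]=\frac{B+T_k}{n_1}$, where $T_k=\sum_{i\in N_1}d(x_i,x_k)$. Let $C_k=\sum_{i\in N_1}\min\{d(x_i),d(x_i,x_k)\}$. Pair $C_k\le T_k$ with the $B$-part and $C_k\le B$ with the $T_k$-part, and use $B\le D$:
\[
\sum_{k\in N_1}\frac{d(x_k)}{D}\,C_k\le\frac{2B}{n_1D}\sum_{k\in N_1}T_k\le\frac{2}{n_1}\sum_{i,k\in N_1}(s_i+s_k)=4A_1 .
\]
Your handling of $N_0$ and of draws from $N_0$ is correct. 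Combining it with this bound gives $SC(\mathbf{x},f)\le A_0+\frac{A_0B}{A_0+B}+4A_1\le 2A_0+4A_1\le 4\,SC(\mathbf{x},OPT)$. This uses only the triangle inequality of the circle metric, so the stated bound $6$ follows with room to spare. As for the fallback, it does not stand on its own either. Theorem \ref{thm10} concerns the line, not the circle. Also, the proportional mechanism of \cite{bib3} draws its first center uniformly from the agents, so a center fixed at $0$ is not literally a special case of it. Your claim that ``the analysis goes through'' needs exactly the computation above.
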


\section{Improvement Ratio}
In this section, we study a performance measure for mechanisms in facility location problems with prelocated facilities, called the improvement ratio which is defined
as the worst ratio between the improvement achieved by a mechanism and the best possible improvement after adding a new facility. 

Formally, we focus on the maximum cost objective, where the goal is to minimize the largest distance incurred by any agent to its nearest facility. Let $MC(\mathbf{x}, 0)$ denote the maximum cost when no additional facility is placed, and Let $OPT_{MC}$ denote the optimal placement of a new facility minimizing the maximum cost, and let $MC(\mathbf{x}, OPT_{MC})$ be the corresponding optimal value. A strategy-proof mechanism $f$ has an \textbf{improvement ratio} of $\alpha\textbf{ }(\geq1)$ under the maximum cost objective, if
$$\alpha=\sup_{\mathbf{x}\in\mathbb{R}^{n}}\frac{MC(\mathbf{x},0)-MC(\mathbf{x},OPT_{MC})}{MC(\mathbf{x},0)-MC(\mathbf{x},f)}.$$
The improvement ratio is defined similarly under the social cost objective. A mechanism with a lower improvement ratio indicates that it makes near-optimal use of the new facility.

We analyze the improvement ratio under the objective of maximum cost and the social cost in general and special settings in the real line.
\begin{theorem}\label{tthm1}
    Any deterministic strategy-proof mechanism has an improvement ratio of at least $\frac{3}{2}$ for the maximum cost.
\end{theorem}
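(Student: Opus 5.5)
The plan is to adapt the two-agent argument of Theorem \ref{thm2} to the improvement measure: use strategy-proofness to tie the outputs on two nearby instances together, and use the assumed improvement ratio to force each output into a narrow interval. Suppose, for contradiction, that some deterministic strategy-proof mechanism $f$ has improvement ratio $\alpha<\frac32$.

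\textbf{Step 1.} Take $\mathbf{x}=(1,2)$. Here $MC(\mathbf{x},0)=2$. By Theorem \ref{thm1} the optimum is $OPT(\mathbf{x})=1.5$ with value $\frac12$, so the best possible improvement is $\frac32$. Since $\alpha<\frac32$, the improvement of $f$ must exceed $1$, that is, $MC(\mathbf{x},f)<1$. Any $y\le 1$ leaves agent $2$ with cost at least $1$. Any $y\ge 2$ leaves agent $1$ with cost $1$, because her nearest facility is then the prelocated one at distance $1$. Hence $f(\mathbf{x})=1+\epsilon$ for some $\epsilon\in(0,1)$, and agent $2$ incurs cost $1-\epsilon$.

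\textbf{Step 2.} Take $\mathbf{x}'=(1,1+\epsilon)$, where agent $2$ sits exactly at $f(\mathbf{x})$. Then $MC(\mathbf{x}',0)=1+\epsilon$. Since $L/3<1$, Theorem \ref{thm1} gives $OPT(\mathbf{x}')=1+\frac{\epsilon}{2}$ with value $\frac{\epsilon}{2}$, so the optimal improvement is $1+\frac{\epsilon}{2}$. The assumed ratio forces
\[MC(\mathbf{x}',f) < 1+\epsilon-\tfrac{2}{3}\bigl(1+\tfrac{\epsilon}{2}\bigr)=\tfrac13+\tfrac{2\epsilon}{3}.\]

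\textbf{Step 3.} Strategy-proofness gives the matching constraint. Agent $2$ in $\mathbf{x}'$, located at $1+\epsilon$, could misreport $2$ and obtain the outcome $f(\mathbf{x})=1+\epsilon$, which gives her cost $0$. So $cost(1+\epsilon,f(\mathbf{x}'))=0$, which forces $f(\mathbf{x}')=1+\epsilon$. Agent $1$ then pays $\min\{1,\epsilon\}=\epsilon$, so $MC(\mathbf{x}',f)=\epsilon$. Combining with Step 2 gives $\epsilon<\frac13+\frac{2\epsilon}{3}$, i.e. $\epsilon<1$. That alone is no contradiction, so the inequalities need a closer look.

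\textbf{Step 4 (main obstacle).} Recheck the optimum for $\mathbf{x}'$. A facility placed at $1+\frac{\epsilon}{2}$ gives each agent cost $\frac{\epsilon}{2}$, so the optimal value is indeed $\frac{\epsilon}{2}$, while $f$ achieves $\epsilon$. The ratio of improvements on $\mathbf{x}'$ is therefore
\[\frac{1+\epsilon/2}{1},\]
which tends to $\frac32$ only as $\epsilon\to1$. So the naive two-instance chain alone does not yield the bound. The fix is to exploit Step 1 more tightly, balancing the two instances against each other:
\begin{itemize}
\item On $\mathbf{x}=(1,2)$, an output $1+\epsilon$ has maximum cost $\max\{\epsilon,1-\epsilon\}$. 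The ratio constraint $2-\max\{\epsilon,1-\epsilon\}>1$ is too weak, so replace $(1,2)$ by a scaled family $(1,t)$, with $t$ tuned so that the constraint on $\mathbf{x}$ pushes $\epsilon$ toward the right end of the interval.
\item On $\mathbf{x}'$, the same output is then forced to have an improvement ratio close to $\frac32$.
\end{itemize}
I would parametrize $\mathbf{x}=(1,t)$, compute both ratios as functions of $t$ and $\epsilon$, and choose $t$ so that the infimum over $\epsilon$ of the worse of the two ratios equals $\frac32$. If a gap remains, I would move agent $2$ outward in several steps, iterating the strategy-proofness argument as in Theorem \ref{thm2}, and take limits. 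Carrying out this balancing computation is where I expect the real work to lie.
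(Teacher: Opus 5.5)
There is a genuine gap. Your Steps 1--3 are correct and match the paper's opening: it also obtains $f(1,2)=1+\epsilon$ with $\epsilon\in(0,1)$ and then pins $f(1,1+\epsilon)=1+\epsilon$. But the proposal stops without a contradiction, and the repair you sketch cannot produce one.

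The problem is that you test the improvement ratio on the pinned profile itself. For a profile $(a,y)$ with $0<a<y$ and output $y$, the improvement ratio is $\frac{y+a}{2a}$ if $y\le 2a$, $\frac{y+a}{2(y-a)}$ if $2a<y<3a$, and $1$ if $y\ge 3a$. It reaches $\frac32$ only at $y=2a$ exactly. Assuming the ratio is below $\frac32$ on $(1,t)$ only confines $f(1,t)$ to an open set containing the optimum, so it can never force the output to be exactly $2$. Concretely:
\begin{itemize}
\item For $1<t<3$ the mechanism can output the optimum $\frac{1+t}{2}\in(1,2)$.
\item For $t\ge 3$ it can output $t$ itself, and then the pinned profile is the original one.
\end{itemize}
In both cases both of your ratios stay strictly below $\frac32$. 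Balancing over $t$, this scheme never certifies more than about $1.21$, so neither tuning $t$ nor passing to limits closes the gap.

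What is missing is a second use of strategy-proofness. The pinned profile must serve as the truthful profile from which the \emph{other} agent makes a profitable deviation, chosen so that the deviated profile is a rescaled copy of $(1,2)$. Your Step 1 is scale-invariant: on $(a,2a)$, a ratio below $\frac32$ forces the output into the open interval $(a,2a)$. The argument then splits on $\epsilon$.

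\textbf{Case $\epsilon>\frac13$.} Starting from $f(1,1+\epsilon)=1+\epsilon$, agent $1$ reports $\frac{1+\epsilon}{2}$. The output on $(\frac{1+\epsilon}{2},1+\epsilon)$ lies within distance less than $\max\{\frac{1-\epsilon}{2},\epsilon\}=\epsilon$ of $1$. This beats her truthful cost $\epsilon$.

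\textbf{Case $\epsilon\le\frac13$.} Pin agent $1$ instead: $f(1+\epsilon,2)=1+\epsilon$, since she could report $1$ and get cost $0$. Now let agent $2$ report $2+2\epsilon$. The output on $(1+\epsilon,2+2\epsilon)$ lies within less than $\max\{1-\epsilon,2\epsilon\}=1-\epsilon$ of $2$. This beats her truthful cost $1-\epsilon$.

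Your closing idea of moving agent $2$ outward is in the spirit of the second case. However, it needs agent $1$ pinned first, and it does not cover $\epsilon>\frac13$, where the profitable deviation is agent $1$ moving inward. This three-profile, two-deviation argument with the split at $\epsilon=\frac13$ is exactly the paper's proof.
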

\begin{proof}
 Assume there exists a deterministic strategy-proof mechanism $f$ with approximation ratio less than $\frac{3}{2}$. Consider an instance $\mathbf{x}=(x_{1},x_{2})=(1,2)$. It holds that $L(\mathbf{x})-MC(\mathbf{x},OPT)\leq \frac{3}{2}\cdot\left[L(\mathbf{x})-MC(\mathbf{x},f)\right]$, then $f(\mathbf{x})\in(1,2)$. Let $f(\mathbf{x})=1+\epsilon$ where $\epsilon\in(0,1)$.

  If $\epsilon\in\left(\frac{1}{3},1\right)$. Consider the instance $\mathbf{x'}=(x_{1},x_{2}')=(1,1+\epsilon)$, note that $OPT(\mathbf{x'})=1+\frac{\epsilon}{2}$ and $MC(\mathbf{x'},OPT)=\frac{\epsilon}{2}$. By strategy-proofness, we have $cost(x_{2}',f(\mathbf{x'}))\leq cost(x_{2}',f(\mathbf{x}))=0$. Therefore, $f(\mathbf{x'})=1+\epsilon$. Now consider the instance $\mathbf{x''}=(x_{1}',x_{2}')=(\frac{1+\epsilon}{2},1+\epsilon)$. It holds that $L(\mathbf{x''})-MC(\mathbf{x''},OPT)\leq \frac{3}{2}\cdot\left[L(\mathbf{x''})-MC(\mathbf{x''},f)\right]$, then $f(\mathbf{x''})\in \left(\frac{1+\epsilon}{2},1+\epsilon\right)$. We have $cost(x_{1},f(\mathbf{x''}))<\max\left\{\frac{1-\epsilon}{2},\epsilon\right\}=\epsilon, cost(x_{1},f(\mathbf{x'}))=1+\epsilon-1=\epsilon$. Therefore, agent $1$ can benefit by reporting $x_{1}$ to $x_{1}'$, which contradicts the strategy-proofness of $f$.

  If $\epsilon\in\big(0,\frac{1}{3}\big]$. Consider the instance $\mathbf{x'}=(x_{1}',x_{2})=(1+\epsilon,2)$. By strategy-proofness, we have $cost(x_{1}',f(\mathbf{x'}))\leq cost(x_{1}',f(\mathbf{x}))=0$. Therefore, $f(\mathbf{x'})=1+\epsilon$. Now consider the instance $\mathbf{x''}=(x_{1}',x_{2}')=(1+\epsilon,2+2\epsilon)$. It holds that $f(\mathbf{x''})\in \left(1+\epsilon,2+2\epsilon\right)$. We have $cost(x_{2},f(\mathbf{x}''))<\max\left\{2\epsilon,1-\epsilon\right\}=1-\epsilon, cost(x_{2},f(\mathbf{x'}))=2-(1+\epsilon)=1-\epsilon$. Therefore, agent $2$ can benefit by reporting $x_{2}$ to $x_{2}'$, which contradicts the strategy-proofness of $f$.
\end{proof}
\begin{theorem}
Mechanism \ref{mec1} is $1.5$-improvement under the maximum cost objective.
\end{theorem}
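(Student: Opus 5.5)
We will argue directly from the definition, using the optimal value from Theorem \ref{thm1} and the case split of Theorem \ref{thm4}. Denote Mechanism \ref{mec1} by $f$ and, by symmetry, consider an instance with $|x_{n}|\geq|x_{1}|$. Then $L(\mathbf{x})=x_{n}=:L\geq 0$ and $MC(\mathbf{x},0)=L$. It suffices to show
\[
L-MC(\mathbf{x},OPT)\leq \tfrac{3}{2}\bigl(L-MC(\mathbf{x},f)\bigr).
\]
Throughout, write $b=b(\mathbf{x})\leq L/3$, $l=l(\mathbf{x})>L/3$ and $d=L-l<2L/3$.

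\textbf{Step 1 (Case 3 of Mechanism \ref{mec1}).} Suppose $x_{1}<0\leq x_{n}$ and $2|x_{1}|>x_{n}$. The computation in Theorem \ref{thm4} gives $MC(\mathbf{x},f)=|x_{1}|=MC(\mathbf{x},OPT)$. The two improvements therefore coincide, and the ratio is $1$.

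\textbf{Step 2 (the case $f(\mathbf{x})=x_{n}$).} The main work is a sharper upper bound on $MC(\mathbf{x},f)$ than the one used in Theorem \ref{thm4}. With the new facility at $L$, an agent at $x_{i}\in[0,L]$ pays $\min\{x_{i},L-x_{i}\}$. We then bound costs by location:
\begin{itemize}
\item agents in $[0,L/3]$ pay at most $b$;
\item agents at or to the right of $l$ pay at most $\min\{l,d\}$;
\item negative agents (present only when $2|x_{1}|\leq x_{n}$) pay at most $s:=|x_{1}|\leq L/2$.
\end{itemize}
Hence $MC(\mathbf{x},f)\leq \max\{s,b,\min\{l,d\}\}$, where $s=0$ if all agents are nonnegative. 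Theorem \ref{thm1} gives the matching lower bound $MC(\mathbf{x},OPT)=\max\{s,b,d/2\}$.

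\textbf{Step 3 (comparing the improvements).} We split according to which term attains the maximum in the bound on $MC(\mathbf{x},f)$.
\begin{itemize}
\item If the maximum is $s$ or $b$, then $MC(\mathbf{x},OPT)$ is at least that same quantity. So the mechanism is optimal up to that term, and the ratio is $1$.
\item If the maximum is $\min\{l,d\}$ and $d\leq L/2$, the improvement of $f$ is at least $L-d$, while the optimum improves by at most $L-d/2$. The inequality $L-d/2\leq\frac{3}{2}(L-d)$ is equivalent to $d\leq L/2$, which holds.
\item If the maximum is $\min\{l,d\}$ and $d>L/2$, then $l<L/2$, so the mechanism's cost is at most $l$ and its improvement is at least $L-l=d$. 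The optimum improves by at most $L-d/2$. The inequality $L-d/2\leq\frac{3}{2}d$ is equivalent to $d\geq L/2$, which again holds.
\end{itemize}
Combining the three subcases gives the bound $\frac{3}{2}$.

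\textbf{Main obstacle.} The delicate point is the regime $d\in(L/2,2L/3)$. There the crude bound $MC(\mathbf{x},f)\leq L-l$ from Theorem \ref{thm4} is too weak, and we must exploit that agents just right of $L/3$ are actually closer to $0$ than to $L$. We must also check that the extra term $s$ in the mixed-sign case never breaks the comparison; this holds because $s$ enters $MC(\mathbf{x},f)$ and $MC(\mathbf{x},OPT)$ identically.

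Finally, the bound $\frac{3}{2}$ is tight by Theorem \ref{tthm1}.
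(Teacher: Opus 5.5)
Your case structure matches the paper's: the branches of Mechanism~\ref{mec1}, then $l(\mathbf{x})\geq L/2$ versus $l(\mathbf{x})<L/2$ (your $d\leq L/2$ versus $d>L/2$), then whether $b$ or the agents to the right of $L/3$ dominate. Step~1 and the first two bullets of Step~3 are correct.

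The gap is the second bullet of Step~2. With the new facility at $L$, an agent at $x_i\geq l$ pays $\min\{x_i,L-x_i\}$, and this exceeds $l$ for every $x_i\in(l,L-l)$. That interval is nonempty exactly when $d>L/2$, which is the regime you flag as the main obstacle. For example, take $\mathbf{x}=(0,4,5,10)$. Then $b=0$, $l=4$, $d=6$, so your bound gives $MC(\mathbf{x},f)\leq 4$, but the agent at $5$ pays $5$. So the third bullet of Step~3, which gets an improvement of at least $L-l=d$ from $MC(\mathbf{x},f)\leq l$, rests on a false inequality. The paper's Case~1.2 has the same slip. Its step from $\frac{L-(L-l)/2}{L-d(\mathbf{x})}$ to $\frac{L+l}{2L-2l}$, with $d(\mathbf{x})$ the agent position nearest $L/2$, needs $d(\mathbf{x})\leq l$. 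That fails unless this agent sits at $l$ itself.

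The repair is short, and your conclusion survives it. The correct bound for agents at or to the right of $l$ is $\min\{x_i,L-x_i\}\leq\min\{L/2,\,d\}$.
\begin{itemize}
\item If $d\leq L/2$, this equals $d$, which is also your $\min\{l,d\}$, so your second bullet stands unchanged.
\item If $d>L/2$, combine it with $s\leq L/2$ and $b\leq L/3$ to get $MC(\mathbf{x},f)\leq L/2$. Then $f$ improves by at least $L/2$, while the optimum improves by at most $L-d/2$. The inequality $L-d/2\leq\tfrac{3}{2}\cdot\tfrac{L}{2}$ is again equivalent to $d\geq L/2$.
\end{itemize}
So in the second regime the ratio is at most $(L+l)/L<3/2$; in the example above it equals $14/10$. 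The value $3/2$ is attained at the boundary $l=L/2$, for instance $\mathbf{x}=(0,1,2)$. With this one correction to Step~2, your argument proves the theorem.
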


\begin{proof}
 Denote Mechanism \ref{mec1} by $f$. Without loss of generality, consider any instance $\mathbf{x}=(x_{1},x_{2},...,x_{3})$ with $L(x)=x_{n}\geq|x_{1}|$.

\textbf{Case 1}. $0\leq x_{1}\leq x_{n}$.
$$MC(\mathbf{x},OPT)=\max\left\{b(\mathbf{x}),\frac{L(\mathbf{x})-l(\mathbf{x})}{2}\right\}.$$

\textbf{Case 1.1}. $l(\mathbf{x})\geq\frac{L(\mathbf{x})}{2}$.

If $b(\mathbf{x})\geq L(\mathbf{x})-l(\mathbf{x})$,
\begin{align*}
&\frac{MC(\mathbf{x},0)-MC(\mathbf{x},OPT)}{MC(\mathbf{x},0)-MC(\mathbf{x},f)}\\&=
\frac{L(\mathbf{x})-MC(\mathbf{x},OPT)}{L(\mathbf{x})-MC(\mathbf{x},f)}\\&=\frac{L(\mathbf{x})-\max\left\{b(\mathbf{x}),\frac{L(\mathbf{x})-l(\mathbf{x})}{2}\right\}}{L(\mathbf{x})-\max\{b(\mathbf{x}),L(\mathbf{x})-l(\mathbf{x})\}}\\&=\frac{L(\mathbf{x})
-b(\mathbf{x})}{L(\mathbf{x})-b(\mathbf{x})}=1.
\end{align*}

If $b(\mathbf{x})< L(\mathbf{x})-l(\mathbf{x})$,
\begin{align*}
    \frac{MC(\mathbf{x},0)-MC(\mathbf{x},OPT)}{MC(\mathbf{x},0)-MC(\mathbf{x},f)}&=\frac{L(\mathbf{x})-MC(\mathbf{x},OPT)}{L(\mathbf{x})-MC(\mathbf{x},f)}\\
    &=\frac{L(\mathbf{x})-\max\left\{b(\mathbf{x}),\frac{L(\mathbf{x})-l(\mathbf{x})}{2}\right\}}{L(\mathbf{x})-\max\{b(\mathbf{x}),L(\mathbf{x})-l(\mathbf{x})\}} \\
    &\leq \frac{L(\mathbf{x})-\frac{L(\mathbf{x})-l(\mathbf{x})}{2}}{l(\mathbf{x})}\\
    &\leq\frac{L(\mathbf{x})+l(\mathbf{x})}{2l(\mathbf{x})} \\&
    \leq \frac{3}{2}.
\end{align*}

\textbf{Case 1.2}. $l(\mathbf{x})<\frac{L(\mathbf{x})}{2}$. Let $d(\mathbf{x})$ be the nearest position of agents to $\frac{L(\mathbf{x})}{2}$.
\begin{align*}
    \frac{MC(\mathbf{x},0)-MC(\mathbf{x},OPT)}{MC(\mathbf{x},0)-MC(\mathbf{x},f)}&=\frac{L(\mathbf{x})-MC(\mathbf{x},OPT)}{L(\mathbf{x})-MC(\mathbf{x},f)}\\&=\frac{L(\mathbf{x})-\max\left\{b(\mathbf{x}),\frac{L(\mathbf{x})-l(\mathbf{x})}{2}\right\}}{L(\mathbf{x})-\max\{b(\mathbf{x}),d(\mathbf{x})\}}\\
    &\leq\frac{L(\mathbf{x})-\frac{L(\mathbf{x})-l(\mathbf{x})}{2}}{L(\mathbf{x})-d(\mathbf{x})}\\
    &\leq\frac{L(\mathbf{x})+l(\mathbf{x})}{2L(\mathbf{x})-2l(\mathbf{x})}\\
    &\leq\frac{3}{2}.
\end{align*}

\textbf{Case 2}. $x_{1}<0\leq x_{n}$ and $2|x_{1}|\leq x_{n}$.
$$MC(\mathbf{x},OPT)=\max\left\{|x_{1}|,b(\mathbf{x}),\frac{L(\mathbf{x})-l(\mathbf{x})}{2}\right\}.$$

\textbf{Case 2.1}. $l(\mathbf{x})\geq\frac{L(\mathbf{x})}{2}$.

If $b(\mathbf{x})\geq L(\mathbf{x})-l(\mathbf{x})$,
\begin{align*}
    \frac{MC(\mathbf{x},0)-MC(\mathbf{x},OPT)}{MC(\mathbf{x},0)-MC(\mathbf{x},f)}&=\frac{L(\mathbf{x})-MC(\mathbf{x},OPT)}{L(\mathbf{x})-MC(\mathbf{x},f)}\\&=\frac{L(\mathbf{x})-\max\left\{b(\mathbf{x}),\frac{L(\mathbf{x})-l(\mathbf{x})}{2}\right\}}{L(\mathbf{x})-\max\{b(\mathbf{x}),L(\mathbf{x})-l(\mathbf{x})\}}\\&=\frac{L(\mathbf{x})-b(\mathbf{x})}{L(\mathbf{x})-b(\mathbf{x})}=1.
\end{align*}

If $b(\mathbf{x})< L(\mathbf{x})-l(\mathbf{x})$,
\begin{align*}
    \frac{MC(\mathbf{x},0)-MC(\mathbf{x},OPT)}{MC(\mathbf{x},0)-MC(\mathbf{x},f)}&=\frac{L(\mathbf{x})-MC(\mathbf{x},OPT)}{L(\mathbf{x})
-MC(\mathbf{x},f)}\\&=\frac{L(\mathbf{x})-\max\left\{b(\mathbf{x}),\frac{L(\mathbf{x})-l(\mathbf{x})}{2}\right\}}{L(\mathbf{x})-\max\{b(\mathbf{x}),L(\mathbf{x})-l(\mathbf{x})\}} \\
&\leq \frac{L(\mathbf{x})-\frac{L(\mathbf{x})-l(\mathbf{x})}{2}}{l(\mathbf{x})}\\
&\leq\frac{L(\mathbf{x})+l(\mathbf{x})}{2l(\mathbf{x})}
\\&\leq \frac{3}{2}.
\end{align*}

\textbf{Case 2.2}. $l(\mathbf{x})<\frac{L(\mathbf{x})}{2}$. Let $d(\mathbf{x})$ be the nearest position of agents to $\frac{L(\mathbf{x})}{2}$.
\begin{align*}
    \frac{MC(\mathbf{x},0)-MC(\mathbf{x},OPT)}{MC(\mathbf{x},0)-MC(\mathbf{x},f)}&=\frac{L(\mathbf{x})-MC(\mathbf{x},OPT)}{L(\mathbf{x})-MC(\mathbf{x},f)}\\&=\frac{L(\mathbf{x})-\max\left\{b(\mathbf{x}),\frac{L(\mathbf{x})-l(\mathbf{x})}{2}\right\}}{L(\mathbf{x})-\max\{b(\mathbf{x}),d(\mathbf{x})\}}\\
    &\leq \frac{L(\mathbf{x})-\frac{L(\mathbf{x})-l(\mathbf{x})}{2}}{L(\mathbf{x})-d(\mathbf{x})}\\
    &\leq\frac{L(\mathbf{x})+l(\mathbf{x})}{2L(\mathbf{x})-2l(\mathbf{x})}\\
    &\leq\frac{3}{2}.
\end{align*}

\textbf{Case 3}. $x_{1}<0\leq x_{n}$ and $2|x_{1}|> x_{n}$, then $f(\mathbf{x})=2x_{1}$.
\begin{align*}
    \frac{MC(\mathbf{x},0)-MC(\mathbf{x},OPT)}{MC(\mathbf{x},0)-MC(\mathbf{x},f)}&=\frac{L(\mathbf{x})-MC(\mathbf{x},OPT)}{L(\mathbf{x})-MC(\mathbf{x},f)}\leq\frac{L(\mathbf{x})-|x_{1}|}{L(\mathbf{x})-|x_{1}|}=1.
\end{align*}
The proof is complete.
\end{proof}

\begin{theorem}\label{tthm2}
    Any randomized strategy-proof mechanism has an improvement ratio of at least $1.01$ for the maximum cost.
\end{theorem}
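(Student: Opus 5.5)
We argue by contradiction, adapting the two-instance argument of Theorem \ref{thm5} and the probability-mass-splitting technique of Theorem \ref{thm9}. Suppose a randomized strategy-proof mechanism $f$ has improvement ratio $\alpha<1.01$. The improvement ratio bound gives, for every instance, $MC(\mathbf{x},0)-MC(\mathbf{x},f)\geq \frac{1}{\alpha}\bigl(MC(\mathbf{x},0)-MC(\mathbf{x},OPT)\bigr)$. Equivalently,
$$MC(\mathbf{x},f)\leq MC(\mathbf{x},0)-\tfrac{1}{1.01}\bigl(MC(\mathbf{x},0)-MC(\mathbf{x},OPT)\bigr).$$
Since $MC(\mathbf{x},0)=\max_i|x_i|$ is fixed, this is an additive-type guarantee.

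We work in the special setting with two agents, e.g.\ $\mathbf{x}=(1,2)$. Here $MC(\mathbf{x},0)=2$, $OPT=1.5$, and $MC(\mathbf{x},OPT)=0.5$. The constraint therefore forces $MC(\mathbf{x},f)\leq 2-1.5/1.01\approx 0.5149$.

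\textbf{Step 1.} Since $\max_i cost(x_i,Y)\geq 0.5$ for every $Y$, with equality only at $Y=1.5$, the distribution $f(\mathbf{x})$ must be concentrated near $1.5$. I would quantify this as follows: $\max_i cost(x_i,Y)\geq 0.5+|Y-1.5|$ on $[1,2]$, and it is larger elsewhere. Hence $E|Y-1.5|\leq 0.0149$.

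\textbf{Step 2.} Convert this into an estimate on an individual agent's cost, e.g.\ agent 2 at location $2$. We get $cost(2,f(\mathbf{x}))\geq 0.5-0.0149$, and also $cost(2,f(\mathbf{x}))\leq 0.5+0.0149$.

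\textbf{Step 3.} Choose a deviation instance $\mathbf{x}'=(1,x_2')$. I would try $x_2'=1.5$ or a slightly smaller value, or alternatively move agent 1 in the spirit of Theorem \ref{tthm1}. For $\mathbf{x}'$, the improvement-ratio constraint again pins $f(\mathbf{x}')$ near $OPT(\mathbf{x}')$. The aim is to make the true agent at $2$ strictly better off when she reports $x_2'$, contradicting strategy-proofness. If a single step does not give a gap, I would chain two or three instances as in Theorem \ref{tthm1}, e.g.\ $(1,2)\to(1,1+\epsilon)\to(\cdot,\cdot)$. Each step loses only $O(0.01)$ in the concentration error while the geometry shifts the optimum by a constant amount.

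\textbf{Main obstacle.} The main difficulty is Step 3: choosing the instances so that the forced concentrations give strictly incompatible cost bounds for the deviating agent. Note that moving agent 2 outward, say to $x_2'=2+\delta$, pushes the optimum to $(1+2+\delta)/2$, which is farther from $2$. This suggests the manipulation should instead go in the direction that shrinks the optimal interval. A cleaner choice is to exploit the asymmetry that the prelocated facility creates for an agent near $0$. Take $\mathbf{x}=(a,2)$ with $a$ slightly above $2/3$, so that $b$ switches between agent $a$ being served by $0$ and by the new facility. Then a small misreport of $a$ changes $OPT$ discontinuously (from $(a+2)/2$ to $4/3$). The constant $1.01$ then comes out of optimizing $a$ and the deviation so that the slack $\approx(1-1/\alpha)\cdot MC(\mathbf{x},0)$ is smaller than the induced cost jump. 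I would set up these inequalities with parameters, as in Theorem \ref{thm9}, and solve numerically for the threshold.
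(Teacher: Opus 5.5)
\textbf{There is a genuine gap: Step 3, where the contradiction has to come from, is never carried out, and your analysis of it steers away from the deviation that works.}

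Your claim that reporting $x_2'=2+\delta$ moves the optimum farther from $2$ is false. For $0<\delta<1$, $OPT(1,2+\delta)=\frac{3+\delta}{2}$, which is at distance $\frac{1-\delta}{2}<\frac12$ from $2$. Exaggerating outward is exactly the profitable manipulation; it is the usual reason the midpoint rule is not strategy-proof. The inward reports you suggest ($x_2'=1.5$, or anything that ``shrinks the optimal interval'') move the optimum away from the deviator, so they can never contradict strategy-proofness.

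The $b$-threshold idea does not help as sketched either. On the side of the threshold where an agent is served by $0$, the optimal new location is a whole interval: for $(a',2)$ with $a'<\frac23$, every $Y\in[2-a',2+a']$ is optimal. Near-optimality therefore no longer pins down the outcome, and a mechanism may put $Y$ at the end of that interval that is worse for the deviator. So the discontinuity of $OPT$ forces no gain.

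Two smaller problems remain. The pointwise bound in Step 1 fails off $[0,2]$: for $Y\in(2,3]$ the maximum cost is $1<Y-1$, and for $Y<0$ it is $2<2-Y$. Also, converting the $MC$-slack into a bound on one agent's cost for far-away $Y$ costs a constant factor that your $O(0.01)$ bookkeeping ignores.

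\textbf{How to close it along your lines.} Take $\mathbf{x}=(1,2)$ and $\mathbf{x}'=(1,\frac52)$, and let $\alpha$ be the improvement ratio.
First, $cost(1,Y)+cost(2,Y)\ge 1$ for every $Y$. Hence $cost(2,f(\mathbf{x}))\ge 1-MC(\mathbf{x},f)\ge\frac{3}{2\alpha}-1$.
Second, for $\mathbf{x}'$ we have $MC(\mathbf{x}',0)=\frac52$ and $MC(\mathbf{x}',OPT)=\frac34$.
A piecewise check gives $cost(2,Y)\le\frac14+5\bigl(MC(\mathbf{x}',Y)-\frac34\bigr)$ for all $Y$. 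The factor $5$ is needed on $[\frac52,\frac72]$, where $MC(\mathbf{x}',Y)=1$.
It follows that $cost(2,f(\mathbf{x}'))\le\frac14+\frac{35}{4}\bigl(1-\frac1\alpha\bigr)$.
Strategy-proofness then requires $\frac{41}{4\alpha}\le 10$, i.e.\ $\alpha\ge\frac{41}{40}>1.01$.

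\textbf{Comparison with the paper.} The paper's proof stays far from the prelocated facility. It reruns the argument of Theorem~\ref{thm5} on $(M+1,M+2)\to(M,M+2)$ with $M=6+\sqrt{39}$ to get $MC(\mathbf{x}',f)\ge\frac32-\frac3M$. It chooses $M$ large enough that $y_0$ is irrelevant, but small enough that the gain is not washed out by $MC(\mathbf{x}',0)=M+2$. Your near-facility route instead uses the fact that an improvement ratio close to $1$ acts almost like an approximation-ratio bound whenever $MC(\mathbf{x},0)/MC(\mathbf{x},OPT)$ is bounded. It is elementary once the right deviation is chosen, but the proposal as written does not contain that step.
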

\begin{proof}
   Let $f$ be any randomized strategy-proof mechanism. Let $M>0$ be sufficiently large. Consider an instance $\mathbf{x}=(x_{1},x_{2})=(7+\sqrt{39},8+\sqrt{39})$. Obviously, $\sum_{i\in\{1,2\}} cost(x_{i},f(\mathbf{x}))=E_{Y\sim f( \mathbf{x})}\left[cost(x_{1},Y)+cost(x_{2},Y)\right]\geq1$. Without loss of generality, assume that  $cost(x_{1},f( \mathbf{x}))\geq\frac{1}{2}$.

   Now consider another instance $ \mathbf{x'}=(x_{1}',x_{2}')=(6+\sqrt{39},8+\sqrt{39})$. Note that $OPT( \mathbf{x'})=7+\sqrt{39}$ and $MC( \mathbf{x'},OPT)=1$. By strategy-proofness, we have $cost(7+\sqrt{39},f( \mathbf{x'}))=cost(x_{1},f(\mathbf{x'}))\geq cost(x_{1},f( \mathbf{x}))\geq \frac{1}{2}$. Otherwise, the agent located at $x_{1}$ in $\mathbf{x}$ can benefit by misreporting her location as $x_{1}'=6+\sqrt{39}$. The maxmum cost of $f$ w.r.t. $\mathbf{x'}$ is
\begin{align*}
    MC( \mathbf{x'},f) =& E_{Y^{'}\sim f( \mathbf{x}')}\left[\max_{i\in\{1,2\}}cost(x_{i}',Y')\right],\\
    =&Pr\{Y'\leq12+2\sqrt{39}\}\cdot E_{Y^{'}\sim f( \mathbf{x}')}\left[\max_{i\in\{1,2\}}cost(x_{i}',Y')|Y'\leq12+2\sqrt{39}\right]\\
    &+Pr\{Y'>12+2\sqrt{39}\}\cdot E_{Y^{'}\sim f( \mathbf{x}')}\left[\max_{i\in\{1,2\}}cost(x_{i}',Y')|Y'>12+2\sqrt{39}\right].
\end{align*}

Note that $\max_{i\in\{1,2\}}cost(x_{i}',Y')=1+cost(M+1,Y')$ when $Y'\leq12+2\sqrt{39}$ and $\max_{i\in\{1,2\}}cost(x_{i}',Y')>6+\sqrt{39}\geq cost(7+\sqrt{39},Y')-1$ when $Y'>12+2\sqrt{39}$. We analyze $MC(\mathbf{x'},f)$ according to the following cases.

\textbf{Case 1} $Pr\{Y'>12+2\sqrt{39}\}\geq\frac{3}{12+2\sqrt{39}}$.
\begin{align*}
MC( \mathbf{x'},f)&\geq Pr\{Y'>12+2\sqrt{39}\}\cdot E_{Y^{'}\sim f( \mathbf{x}')}\left[\max_{i\in\{1,2\}}cost(x_{i}',Y')|Y'>12+2\sqrt{39}\right]\\
&\geq\frac{3}{12+2\sqrt{39}}\cdot (6+\sqrt{39})=\frac{3}{2}.
\end{align*}

\textbf{Case 2} $Pr\{Y'<12+2\sqrt{39}\}\geq\frac{3}{12+2\sqrt{39}}$.
\begin{align*}
    MC( \mathbf{x'},f) \geq&Pr\{Y'\leq12+2\sqrt{39}\}\cdot E_{Y^{'}\sim f( \mathbf{x}')}\left[1+cost(7+\sqrt{39},Y')|Y'\leq12+2\sqrt{39}\right] \\
    &+Pr\{Y'>12+2\sqrt{39}\}\cdot E_{Y^{'}\sim f( \mathbf{x}')}\left[1+cost(7+\sqrt{39},Y')-2|Y'>12+2\sqrt{39}\right]\\
    =& 1+cost(7+\sqrt{39},Y')-2Pr\{Y'>12+2\sqrt{39}\}\\
    \geq& \frac{3}{2}-\frac{3}{6+\sqrt{39}}.
\end{align*}

Therefore,
$$\frac{MC(\mathbf{x},0)-MC(\mathbf{x},OPT)}{\max_{i\in N}{|x_i|}-MC(\mathbf{x},f)}\geq\frac{26+4\sqrt{39}}{25+4\sqrt{39}+\frac{6}{6+\sqrt{39}}}=\frac{598+104\sqrt{39}}{1235}>1.01.$$
\end{proof}
\begin{theorem}\label{tthm3}
Any deterministic strategy-proof mechanism has an improvement ratio of at least $1.5$ for the social cost.
\end{theorem}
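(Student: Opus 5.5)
The plan is to adapt the two-agent argument of Theorem \ref{thm8} to the improvement-ratio measure. I assume, for contradiction, that $f$ is a deterministic strategy-proof mechanism whose improvement ratio for the social cost is less than $\frac{3}{2}$. I then exhibit a pair of instances on which strategy-proofness fails. Throughout, $SC(\mathbf{x},0)=\sum_{i\in N}|x_i|$ is the social cost with only the prelocated facility. The ratio bound says that $f$ must achieve more than $\frac{2}{3}$ of the optimal improvement $SC(\mathbf{x},0)-SC(\mathbf{x},OPT)$.

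\textbf{Step 1.} Take $\mathbf{x}=(x_1,x_2)=(-1,1)$. Here $SC(\mathbf{x},0)=2$ and $SC(\mathbf{x},OPT)=1$, since the new facility can be placed at either agent. So the optimal improvement is $1$. Let $y=f(\mathbf{x})$. By the reflection symmetry of the instance, we may assume $y\le 0$; otherwise mirror every instance below. Then the agent at $1$ is served by $0$, so $cost(x_2,y)=1$. This is the only information about $f(\mathbf{x})$ that will be needed.

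\textbf{Step 2.} Consider $\mathbf{x'}=(-1,1.5)$, in which agent 2 has moved to $1.5$. Here $SC(\mathbf{x'},0)=2.5$. The optimum places the facility at $1.5$, giving $SC(\mathbf{x'},OPT)=1$, so the optimal improvement is $1.5$. The assumed ratio bound forces the improvement of $f$ to exceed $1.5\cdot\frac{2}{3}=1$, i.e.\ $SC(\mathbf{x'},f(\mathbf{x'}))<1.5$.

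\textbf{Step 3.} Let $y'=f(\mathbf{x'})$. If $y'\le 0$, the agent at $1.5$ pays $1.5$, so the social cost is at least $1.5$, which is a contradiction. Hence $y'>0$, and the agent at $-1$ pays exactly $1$. Therefore $cost(1.5,y')<0.5$. Since $|1.5|>0.5$, this means $|1.5-y'|<0.5$, i.e.\ $y'\in(1,2)$.

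\textbf{Step 4.} Because $y'\in(1,2)$, we get $cost(1,y')=\min\{1,|1-y'|\}<1=cost(x_2,f(\mathbf{x}))$. So in the instance $\mathbf{x}$, agent 2 with true location $1$ strictly gains by reporting $1.5$, which contradicts strategy-proofness.

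The only delicate point is Step 2. One must choose the second instance so that the threshold on the improvement of $f$ is strong enough to force $y'>0$ and also to pin $y'$ inside the window where agent 2's true cost drops below $1$. The choice $x_2'=1.5$ makes the bound land exactly at $1.5$, which simultaneously excludes $y'\le 0$ and confines $y'$ to $(1,2)$.
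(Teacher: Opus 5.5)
Your proof is correct and follows the paper's argument: the same instances $(-1,1)$ and $(-1,1.5)$, the same derivation that the ratio bound forces $SC(\mathbf{x'},f)<1.5$, and the same profitable misreport by agent 2. Your Step 3 makes explicit the step $y'\in(1,2)$, which the paper leaves implicit when it writes that $cost(x_2,y')<1$ follows.
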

\begin{proof}
  Assume there exists a deterministic strategy-proof mechanism $f$ with approximation ratio less than 1.5. Consider an instance $\mathbf{x}=(x_{1},x_{2})=(-1,1).$ Without loss of generality, let $f(\mathbf{x})=y\leq0$, we have $cost(x_{2},y)=1$. Consider the instance $\mathbf{x}'=(x_{1},x_{2}')=(-1,1.5)$ and $f(\mathbf{x}')=y'$. Note that $SC(\mathbf{x'},OPT)=1$, we have $\sum_{i\in\{1,2\}}|x_{i}|-SC(\mathbf{x'},OPT)<1.5\cdot\left[sum_{i\in\{1,2\}}|x_{i}|-SC(\mathbf{x}',y')\right]$, then $SC(\mathbf{x}',f)<1.5$. It can lead to the $cost(x_{2},y')<1$. Agent 2 can benefit by reporting $ x_{2}$ to $x_{2}'$, which contradicts with the strategy-proofness of $f$.
\end{proof}
\begin{lemma}\label{llem1}
  For any instance $\mathbf{x}=(x_{1},...,x_{n})$ with $x_{1}\leq x_{2}\leq ...\leq x_{n}$, define
  $$N_{1}=\left\{ i\big| |x_{i}-OPT_{SC}(\mathbf{x})|\leq |x_{i}|\text{ }and\text{ }x_{i}\leq OPT_{SC}(\mathbf{x}) \right\},$$
  $$N_{2}=\left\{ i\big| |x_{i}-OPT_{SC}(\mathbf{x})|\leq |x_{i}|\text{ }and\text{ }x_{i}> OPT_{SC}(\mathbf{x}) \right\}.$$
  We have $|N_{1}|=|N_{2}|.$
\end{lemma}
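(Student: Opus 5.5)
The plan is to prove the statement by a first-order (local perturbation) optimality argument around $y^{*}=OPT_{SC}(\mathbf{x})$, in the spirit of the fact that a median minimizes a sum of distances.

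\textbf{Step 1: an upper bound on the perturbed social cost.} Fix a small $\epsilon>0$ and compare $SC(\mathbf{x},y^{*})$ with $SC(\mathbf{x},y^{*}\pm\epsilon)$. Use the agent sets exactly as in the statement of Lemma \ref{llem1}. Every agent outside $N_{1}\cup N_{2}$ strictly prefers the prelocated facility. For $\epsilon$ small enough she still does, so her cost does not change.

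For agents in $N_{1}\cup N_{2}$, use that each agent's cost is $\min\{|x_{i}|,|x_{i}-y|\}$. This is at most $|x_{i}-y|$, with equality at $y^{*}$ for these agents. Hence we may upper bound every such agent's new cost by her distance to the moved facility. Shifting to $y^{*}+\epsilon$ therefore gives
$$SC(\mathbf{x},y^{*}+\epsilon)\leq SC(\mathbf{x},y^{*})+\epsilon\big(|N_{1}|-|N_{2}|\big),$$
since agents with $x_{i}\leq y^{*}$ move $\epsilon$ farther and agents with $x_{i}>y^{*}$ move $\epsilon$ closer. Symmetrically, shifting to $y^{*}-\epsilon$ gives
$$SC(\mathbf{x},y^{*}-\epsilon)\leq SC(\mathbf{x},y^{*})+\epsilon\big(|N_{2}|+|N_{1}^{=}|-|N_{1}^{<}|\big),$$
where $N_{1}^{=}$ collects the agents located exactly at $y^{*}$ and $N_{1}^{<}=N_{1}\setminus N_{1}^{=}$. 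These agents now move $\epsilon$ away instead of closer.

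\textbf{Step 2: compare with optimality.} Since $y^{*}$ is optimal, neither perturbed value can be smaller than $SC(\mathbf{x},y^{*})$. The first inequality then yields $|N_{2}|\leq|N_{1}|$. The second yields $|N_{1}^{<}|\leq|N_{2}|+|N_{1}^{=}|$. So $|N_{1}|$ and $|N_{2}|$ are sandwiched, and equality follows exactly when ties at $y^{*}$ are absorbed appropriately. This is the same reasoning that shows $OPT$ is a median of the agents served by the new facility.

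\textbf{Main obstacle: handling ties.} The delicate step is the treatment of agents sitting exactly at $y^{*}$, together with agents that are indifferent ($|x_{i}-y^{*}|=|x_{i}|$). Literally, a single agent at $1$ gives $|N_{1}|=1$ and $|N_{2}|=0$. So I would first fix a convention for $OPT_{SC}(\mathbf{x})$, which is needed for the lemma to hold as an identity. The natural choice is to pick, among all optimal locations, the midpoint of the interval of optimal locations, or equivalently a median convention. Then I would argue that the optimal set is an interval on which $SC$ is linear with slope $|N_{1}^{<}|-|N_{2}|$. At the chosen point the slope must vanish unless the point is an agent location, and in that case the count is adjusted by the agents at $y^{*}$. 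If this cannot be made exact, the fallback is the inequality version $\big||N_{1}|-|N_{2}|\big|\leq|N_{1}^{=}|$. I expect this weaker form is all that the subsequent $n/2$ improvement-ratio bound actually needs, and I would check that downstream use before committing.

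For the indifferent agents, the upper bound in Step 1 already remains valid, because taking the minimum can only lower their cost. So they cause no difficulty beyond the choice of which set they are counted in.
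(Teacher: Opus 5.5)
You are right that the statement cannot be proved as written, and your Step 1 computations are correct. Your first inequality, $|N_2|\le|N_1|$, is exactly the valid half of the paper's argument, namely its case $|N_1|<|N_2|$. The paper's main claim is that $|N_1|>|N_2|$ always lets some $p$ beat $OPT_{SC}(\mathbf{x})$, and this is what fails. Your second inequality shows why: moving left changes the social cost by $\epsilon\bigl(|N_2|+|N_1^{=}|-|N_1^{<}|\bigr)$, which need not be negative once agents sit at $OPT_{SC}(\mathbf{x})$. What Steps 1--2 actually establish is $\bigl||N_1^{<}|-|N_2|\bigr|\le|N_1^{=}|$, equivalently $0\le|N_1|-|N_2|\le 2|N_1^{=}|$. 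In particular, $|N_1|=|N_2|$ holds whenever no agent is located at $OPT_{SC}(\mathbf{x})$. That is the correct replacement for the lemma, and you should state it that way rather than hoping for ``equality after ties are absorbed''.

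The rescue attempts in your last part do not work.

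First, no tie-breaking convention can restore the identity, because the optimum in your own counterexample (one agent at $1$) is unique. The same holds for $\mathbf{x}=(1.5,2,2)$: here $OPT_{SC}(\mathbf{x})=2$ is the unique optimum, yet $|N_1|=3$ and $|N_2|=0$. Moreover, $SC(\mathbf{x},\cdot)$ is a sum of minima and is not convex, so the set of optimal locations need not be an interval. For $\mathbf{x}=(-1,1)$ it is $\{-1,1\}$. Even with all agents on one side, nine agents at $1$ and one at $10$ give $\{1,10\}$. So ``the midpoint of the interval of optimal locations'' is not even well defined.

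Second, your fallback $\bigl||N_1|-|N_2|\bigr|\le|N_1^{=}|$ does not follow from Step 2, because you substituted $N_1$ for $N_1^{<}$. It is also false: in $\mathbf{x}=(1.5,2,2)$ we have $|N_1|-|N_2|=3>2=|N_1^{=}|$.

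Finally, Theorem \ref{tthm5} uses this lemma as exact equalities, namely $|N_2|+|N_3|=|N_4|$ and $|N_2|=|N_3|+|N_4|$. The corrected inequality is therefore not a drop-in replacement there. That argument would have to be redone, not assumed to go through.
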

\begin{proof}
   For any instance $\mathbf{x}=(x_{1},...,x_{n})$, if $|N_1|>|N_2|$, then it is easy to prove that there exists a location $p$ such that 
   $$SC(\textbf{x},p)<SC(\textbf{x},OPT_{SC}),$$
   which is contradicts the optimality. The case of $|N_1|<|N_2|$ can be proved similarly.
\end{proof}
By Lemma \ref{llem1}, the following Lemma \ref{llem2} naturally follows:
\begin{lemma}\label{llem2}
      Define $\mathbf{\widetilde{x}}=\left\{ x_{i} \large{|} |x_{i}-OPT_{SC}(\mathbf{x})|\leq |x_{i}|\right\}=(x_{k_{1}},x_{k_{2}},...,x_{k_{m}}),x_{k_{1}}\leq x_{k_{2}}\leq...\leq x_{k_{m}}.$ We have $$OPT_{SC}(\mathbf{x})=med(\mathbf{\widetilde{x}})=x_{\lceil\frac{k_{m}}{2}\rceil}.$$
\end{lemma}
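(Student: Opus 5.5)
Write $y^*=OPT_{SC}(\mathbf{x})$. The plan is to read Lemma \ref{llem2} as a median characterization. Among the agents who are (weakly) served by the new facility at $y^*$, namely those in $\widetilde{\mathbf{x}}$, exactly half lie at or to the left of $y^*$ and half lie to the right. Lemma \ref{llem1} supplies this balance, since $\widetilde{\mathbf{x}}$ is exactly $N_1\cup N_2$ and $|N_1|=|N_2|$. So the proof has two steps: first show that $y^*$ itself can be taken to be an agent location in $\widetilde{\mathbf{x}}$, then use the balance to identify which one.

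\textbf{Step 1 ($y^*$ is an agent location in $\widetilde{\mathbf{x}}$).} Hold fixed the set of agents served by the new facility. On that set the social cost is $\sum_{i\in\widetilde{\mathbf{x}}}|x_i-y|+\sum_{i\notin\widetilde{\mathbf{x}}}|x_i|$, which is piecewise linear in $y$. A piecewise-linear function attains its minimum at a breakpoint, and the breakpoints here are the points $x_i$ with $i\in\widetilde{\mathbf{x}}$. Moving $y$ along this function does not increase the true cost, because every agent's true cost is the minimum of her two distances and so is at most the expression above. Hence some optimal $y^*$ coincides with some $x_{k_j}$.

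\textbf{Step 2 (identifying the median).} Reuse the perturbation behind Lemma \ref{llem1}. Shift $y^*$ by a small $\delta$ in either direction, holding the served set fixed. The first-order change in cost is $\delta\,(|N_1|-|N_2|)$ up to the agents sitting exactly at $y^*$, and optimality forbids a strict decrease. With $|N_1|=|N_2|=m/2$ (so $m$ is even), the agent at $y^*$ belongs to $N_1$. Since $N_1$ consists of the $m/2$ leftmost elements of $\widetilde{\mathbf{x}}$, $y^*$ is the largest of them, $x_{k_{m/2}}$. This is the lower median of $\widetilde{\mathbf{x}}$, which is what the statement's index $\lceil\cdot/2\rceil$ is meant to denote once read as a position within $\widetilde{\mathbf{x}}$. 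Any point between the two middle elements is also optimal, so we are choosing the leftmost representative, matching the convention $OPT=x_{\lfloor (n+1)/2\rfloor}$ used in Lemma \ref{lem6}.

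\textbf{Main obstacle.} Ties are the delicate part. An agent with $|x_i-y^*|=|x_i|$ is counted in $\widetilde{\mathbf{x}}$, yet she can switch to the prelocated facility under a perturbation. I would handle this by noting that such an agent's cost is $\min\{|x_i|,|x_i-y|\}$, which is concave in $y$ near $y^*$. Her contribution to the one-sided derivatives is therefore no larger than it would be if she stayed with the new facility, so the balance inequality from Lemma \ref{llem1} still pins $y^*$ to the median.
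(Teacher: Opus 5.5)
\textbf{Genuine gap.} Your Step 2 depends on the exact balance $|N_1|=|N_2|$ from Lemma \ref{llem1}. That is also all the paper's one-line derivation uses. But the lemma is false as stated, so the step fails.

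\emph{Counterexample to the balance.} Take $\mathbf{x}=(2,3,4)$. On $[0,4]$ the social cost is $|2-y|+|3-y|+|4-y|$, and elsewhere it is at least $3$. So $OPT_{SC}(\mathbf{x})=3$ is the unique optimum, with value $2$. All three agents satisfy $|x_i-3|\le|x_i|$, so $\widetilde{\mathbf{x}}=(2,3,4)$ and $|N_1|=2\neq 1=|N_2|$. Even $\mathbf{x}=(1)$ gives $|N_1|=1$, $|N_2|=0$.

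\emph{Consequences for your argument.} Your inference ``so $m$ is even'' is therefore wrong. The odd case, where the conclusion $OPT_{SC}(\mathbf{x})=x_{k_{\lceil m/2\rceil}}$ does hold, is never treated. For even $m$, your own closing remark exposes a second problem: if the whole middle segment is optimal, neither the balance nor the conclusion can hold at every optimum. For $\mathbf{x}=(1,2)$, every $y\in[1,2]$ is optimal. At $y=2$ one gets $\widetilde{\mathbf{x}}=(1,2)$, $|N_1|=2$, $|N_2|=0$, and lower median $1\neq 2$. So the lemma needs a tie-breaking rule. You name one but never show that it produces the lower median of its \emph{own} $\widetilde{\mathbf{x}}$. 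Your concavity remark about indifferent agents also gives only the inequalities $\#\{i\in\widetilde{\mathbf{x}}:x_i<y^*\}\le m/2$ and $\#\{i\in\widetilde{\mathbf{x}}:x_i>y^*\}\le m/2$, never an equality.

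\textbf{Repair.} The fix is already in your Step 1, applied without moving $y$. Let $y^*$ be optimal and build $g(y)=\sum_{i\in\widetilde{\mathbf{x}}}|x_i-y|+\sum_{i\notin\widetilde{\mathbf{x}}}|x_i|$ from the served set at $y^*$. Then $g(y^*)=SC(\mathbf{x},y^*)\le SC(\mathbf{x},y)\le g(y)$ for all $y$. So $y^*$ minimizes $\sum_{i\in\widetilde{\mathbf{x}}}|x_i-y|$ and lies in $[x_{k_{\lceil m/2\rceil}},x_{k_{\lfloor m/2\rfloor+1}}]$.

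\emph{Odd $m$.} This interval is the single point $x_{k_{\lceil m/2\rceil}}$, for every optimum.

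\emph{Even $m$.} Choose $y^*$ to be the leftmost optimum. This exists because the optimal set is closed, and it is bounded unless all agents sit at $0$: for $|y|>2\max_i|x_i|$ the cost equals $SC(\mathbf{x},0)$, which is beaten by placing the facility at any nonzero agent. Suppose $y^*>x_{k_{m/2}}$. Since $g$ is constant on the median interval, $SC(\mathbf{x},x_{k_{m/2}})\le g(x_{k_{m/2}})=g(y^*)=SC(\mathbf{x},y^*)$, contradicting leftmost-ness. Hence $y^*=x_{k_{m/2}}$.

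\emph{Why this avoids a further slip.} This argument never needs the served set at a new point. Your ``move to a breakpoint'' step glosses over exactly that: moving $y$ to $y'$ can add to $\widetilde{\mathbf{x}}$ the agents sitting at $y'/2$, so the $N_1,N_2$ of Step 2 are no longer those of the original $y^*$.

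\textbf{On the index.} You are right to read it as $x_{k_{\lceil m/2\rceil}}$. The literal $x_{\lceil k_m/2\rceil}$ already fails for $\mathbf{x}=(0.1,10,11,12)$, where $OPT_{SC}(\mathbf{x})=11$ but $x_{\lceil 4/2\rceil}=10$.
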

\begin{theorem}\label{tthm5}
Mechanism \ref{mec1} is a deterministic strategy-proof $\frac{n}{2}$-improvement under the social cost objective.
\end{theorem}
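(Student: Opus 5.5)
Throughout I work in the special setting of Section \ref{sec4}, so $0\le x_1\le\dots\le x_n$, Mechanism \ref{mec1} outputs $f(\mathbf{x})=x_n$, and $SC(\mathbf{x},0)=\sum_i x_i$. Write $y^*=OPT_{SC}(\mathbf{x})$. The plan is to compare the two improvements agent by agent, since both are sums of per-agent gains $x_i-cost(x_i,\cdot)$.

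\emph{Step 1: the mechanism's improvement.} For $f(\mathbf{x})=x_n$, agent $i$ gains $x_i-\min\{x_i,x_n-x_i\}=(2x_i-x_n)^+$. Hence
\[
I_f:=SC(\mathbf{x},0)-SC(\mathbf{x},f)=\sum_{i}(2x_i-x_n)^+ .
\]
The term $i=n$ alone equals $x_n$, so $I_f\ge x_n$ and $I_f\ge x_n+\sum_{i\ne n}(2x_i-x_n)^+$.

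\emph{Step 2: the optimal improvement.} Only agents with $|x_i-y^*|\le|x_i|$ gain anything, and these are exactly $N_1\cup N_2$ of Lemma \ref{llem1}. An agent in $N_2$ (with $x_i>y^*$) gains $y^*$. An agent in $N_1$ (with $x_i\le y^*$) gains $2x_i-y^*$. Lemma \ref{llem1} gives $|N_1|=|N_2|=:k$, so
\[
I^*=k\,y^*+\sum_{i\in N_1}(2x_i-y^*)=2\sum_{i\in N_1}x_i .
\]
Moreover $2k\le n$, so $k\le n/2$. This identity is the key point, and it is where Lemma \ref{llem1} is essential.

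\emph{Step 3: comparison.} For each $i\in N_1$, use $2x_i\le x_n+(2x_i-x_n)^+$. This gives
\[
I^*\le k\,x_n+\sum_{i\in N_1}(2x_i-x_n)^+ .
\]
Suppose first that $n\notin N_1$. Then the last sum is at most $I_f-x_n$ by Step 1, so $I^*\le (k-1)x_n+I_f\le k\,I_f\le \frac n2 I_f$.

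Now suppose $n\in N_1$. Then $x_n\le y^*$, and since $N_2\neq\emptyset$ (as $k\ge1$) some agent lies strictly to the right of $y^*\ge x_n$, which is impossible. So this case cannot occur. I would also dispose of the degenerate case $k=0$ (then $I^*=0$), and note that ties in the choice of $y^*$ do not affect the argument.

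The main obstacle is Step 2, namely making rigorous that the optimal gain collapses to $2\sum_{i\in N_1}x_i$ with $|N_1|=|N_2|\le n/2$. That relies on Lemma \ref{llem1}, whose proof in the paper is only sketched. If needed, I would re-derive it by a local perturbation argument: moving $y^*$ by $\pm\delta$ changes the gain of $N_1$-agents and $N_2$-agents with opposite signs, at rates $\mp1$ and $\pm1$ per agent, which forces balance at the optimum. After that, Step 3 is a one-line per-agent inequality. Finally, I would exhibit $n/2$ copies of a near-tight configuration to indicate that the analysis is the right one, although the theorem only asks for the upper bound.
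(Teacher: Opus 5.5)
The gap is in Step~2. Lemma~\ref{llem1} is false as stated, so you have neither the identity $I^*=2\sum_{i\in N_1}x_i$ nor the bound $|N_1|\le n/2$.

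Your perturbation argument cannot force balance. The total gain $G(y)=\sum_i\min\{y,2x_i-y\}^+=SC(\mathbf{x},0)-SC(\mathbf{x},y)$ is piecewise linear, and its maximum is attained at a kink, i.e.\ at an agent's location. There the one-sided slopes only give $|\{i:x_i>y^*\}|\le|\{i:y^*/2<x_i\le y^*\}|$ and $|\{i:y^*/2\le x_i<y^*\}|\le|\{i:x_i\ge y^*\}|$. That is, $|N_2|\le|N_1|\le|N_2|+2\,|\{i:x_i=y^*\}|$.

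Two concrete counterexamples:
\begin{itemize}
\item Take $m\ge 2$ agents at $1$ and one agent at $2$ (so $n=m+1$). The unique optimum is $y^*=1$, with $|N_1|=m$ and $|N_2|=1$. Then $I^*=m+1\neq 2m=2\sum_{i\in N_1}x_i$.
\item Take $\mathbf{x}=(1,3)$. The unique optimum is $y^*=x_n=3$, so $n\in N_1$, $N_2=\emptyset$, and $I^*=3$.
\end{itemize}
So the case $n\in N_1$ that you rule out does occur. Your ``degenerate'' case $N_2=\emptyset$ does not give $I^*=0$. Ties play no role, since both optima are unique.

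Keeping only the valid inequality $|N_2|\le|N_1|$ still gives $I^*\le 2\sum_{i\in N_1}x_i$. But Step~3 then only yields a factor of about $|N_1|$, which is $n-1$ in the first example, so the argument does not close. The paper's own proof invokes the same lemma in its Cases~2.1 and~2.2, so the flaw is inherited, but it is still a genuine gap.

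A proof that avoids Lemma~\ref{llem1} works directly with $I^*=G(y^*)$. You may assume $0<y^*\le x_n$, since for $y>x_n$ every term equals $(2x_i-y)^+$ and is nonincreasing in $y$. Then split on $y^*$:
\begin{itemize}
\item If $y^*\le x_n/2$, every agent gains at most $y^*$. Hence $I^*\le \frac{n}{2}x_n\le\frac{n}{2}I_f$ by your Step~1.
\item If $y^*>x_n/2$, agent $n$ gains exactly $y^*$, and each $i<n$ gains at most $(2x_i-y^*)^+\le(2x_i-x_n)^++(x_n-y^*)$. Hence $I^*\le y^*+(n-1)(x_n-y^*)+(I_f-x_n)=I_f+(n-2)(x_n-y^*)\le I_f+\frac{n-2}{2}x_n\le\frac{n}{2}I_f$ for $n\ge 2$.
\end{itemize}
The first instance above ($I^*=n$, $I_f=2$) shows the bound is tight. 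For $n=1$ the ratio is $1$, so the statement should be read with $n\ge 2$.
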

\begin{proof}
  Denote Mechanism 1 by $f$. Without loss of generality, consider any instance $\mathbf{x}=(x_{1},x_{2},...,x_{3})$ with $L(x)=x_{n}\geq|x_{1}|$,

\textbf{Case 1}. $x_{1}\geq\frac{L(\mathbf{x})}{2}$. By Lemma \ref{llem2}, we have $\mathbf{x}=\mathbf{\widetilde{x}}$,
\begin{align*}
SC( \mathbf{x},OPT) &= \sum_{i\leq\lceil\frac{n}{2}\rceil}[OPT(x)-x_{i}]+\sum_{i>\lceil\frac{n}{2}\rceil}[x_{i}-OPT(\mathbf{x})],\\
SC( \mathbf{x},f) &= \sum_{i\in N}(L(\mathbf{x})-x_{i}),
\end{align*}
\begin{align*}
&\frac{SC(\mathbf{x},0)-SC(\mathbf{x},OPT)}{SC(\mathbf{x},0)-SC(\mathbf{x},f)}\\
&=\frac{\sum_{i\in N}x_{i}-SC(\mathbf{x},OPT)}{\sum_{i\in N}x_{i}-SC(\mathbf{x},f)} \\&= \frac{\sum_{i\in N}x_{i}-\left\{\sum_{i\leq\lceil\frac{n}{2}\rceil}[OPT(x)-x_{i}]+\sum_{i>\lceil\frac{n}{2}\rceil}[x_{i}-OPT(\mathbf{x})]\right\}}{\sum_{i\in N}x_{i}-\left[\sum_{i\in N}(L(\mathbf{x})-x_{i})\right]}\\
&= \frac{\sum_{i\leq\lceil\frac{n}{2}\rceil}(2x_{i}-OPT(\mathbf{x}))+\sum_{i>\lceil\frac{n}{2}\rceil}OPT(\mathbf{x})}{\sum_{i\in N}(2x_{i}-L(\mathbf{x}))}\\
&= \frac{\sum_{i<\lceil\frac{n}{2}\rceil}2x_{i}}{\sum_{i<\lceil\frac{n}{2}\rceil}2x_{i}+\sum_{i\geq\lceil\frac{n}{2}\rceil}2x_{i}-n\cdot L(\mathbf{x})}\\
&= 1+\frac{n\cdot L(\mathbf{x})-\sum_{i\geq\lceil\frac{n}{2}\rceil}2x_{i}}{\sum_{i<\lceil\frac{n}{2}\rceil}2x_{i}+\sum_{i\geq\lceil\frac{n}{2}\rceil}2x_{i}-n\cdot L(\mathbf{x})}\\
&\leq 1+\frac{n\cdot L(\mathbf{x})-2L(\mathbf{x})-\frac{n-2}{2}\cdot L(\mathbf{x})}{\frac{n}{2}\cdot L(\mathbf{x})+\frac{n-2}{2}\cdot L(\mathbf{x})+2L(\mathbf{x})-n\cdot L(\mathbf{x})}\\&=\frac{n}{2}.
\end{align*}

\textbf{Case 2}. $x_{1}<\frac{L(\mathbf{x})}{2}$. If $x_{1}\geq\frac{OPT(\mathbf{x})}{2}$, it is evident that $\mathbf{x}=\mathbf{\widetilde{x}}$, the proof is similar to that of Case 1. If $x_{1}<\frac{OPT(\mathbf{x})}{2}$, define
\begin{align*}
  N_{1} &= \left\{i\big|x_{1}\leq x_{i}\leq\frac{OPT(\mathbf{x})}{2}\right\}, \\
  N_{2} &=
  \begin{cases}
  \left\{i\big|\frac{OPT(\mathbf{x})}{2}< x_{i}\leq\frac{L(\mathbf{x})}{2}\right\}, & \mbox{if } OPT(\mathbf{x})\geq\frac{L(\mathbf{x})}{2} \\
  \left\{i\big|\frac{OPT(\mathbf{x})}{2}< x_{i}\leq OPT(\mathbf{x}),i\leq \lceil\frac{k_{m}}{2}\rceil\right\}, & \mbox{if } OPT(\mathbf{x})<\frac{L(\mathbf{x})}{2}.
  \end{cases}
\\
  N_{3} &=
  \begin{cases}
  \left\{i\big|\frac{L(\mathbf{x})}{2}< x_{i}\leq OPT(\mathbf{x}),i\leq \lceil\frac{k_{m}}{2}\rceil\right\}, & \mbox{if } OPT(\mathbf{x})\geq\frac{L(\mathbf{x})}{2} \\
  \left\{i\big|OPT(\mathbf{x})\geq x_{i}\leq\frac{L(\mathbf{x})}{2},i>\lceil\frac{k_{m}}{2}\rceil\right\}, & \mbox{if } OPT(\mathbf{x})<\frac{L(\mathbf{x})}{2}.
  \end{cases}
\\
  N_{4} &= N-N_{1}-N_{2}-N_{3}
\end{align*}

We have
$$SC(\mathbf{x},OPT)=\sum_{i\in N_{1}}x_{i}+\sum_{i\notin N_{1}}|x_{i}-OPT(\mathbf{x})|.$$

\textbf{Case 2.1}. $OPT(\mathbf{x})\geq\frac{L(\mathbf{x})}{2}$. By Lemma \ref{llem1}, we have $|N_{2}|+|N_{3}|=|N_{4}|$,
$$SC(\mathbf{x},f) = \sum_{i\in N_{1}\cup N_{2}}x_{i}+\sum_{i\in N_{3}\cup N_{4}}(L(\mathbf{x})-x_{i}).$$
\begin{align*}
&\frac{SC(\mathbf{x},0)-SC(\mathbf{x},OPT)}{SC(\mathbf{x},0)-SC(\mathbf{x},f)}\\&=\frac{\sum_{i\in N}x_{i}-SC(\mathbf{x},OPT)}{\sum_{i\in N}x_{i}-SC(\mathbf{x},f)} \\&= \frac{\sum_{i\in N}x_{i}-\sum_{i\in N_{1}}x_{i}-\sum_{i\notin N_{1}}|x_{i}-OPT(\mathbf{x})|}{\sum_{i\in N}x_{i}-\sum_{i\in N_{1}\cup N_{2}}x_{i}-\sum_{i\in N_{3}\cup N_{4}}(L(\mathbf{x})-x_{i})}\\
&= \frac{\sum_{i\in N}x_{i}-\sum_{i\in N_{1}}x_{i}-\sum_{i\in N_{2}\cup N_{3}}(OPT(\mathbf{x})-x_{i})-\sum_{i\in N_{4}}(x_{i}-OPT(\mathbf{x}))}{\sum_{i\in N_{3}\cup N_{4}}\left(2x_{i}-L(\mathbf{x})\right)}\\
&= \frac{\sum_{i\in N_{2}\cup N_{3}}(2x_{i}-OPT(\mathbf{x}))+\sum_{i\in N_{4}}OPT(\mathbf{x})}{\sum_{i\in N_{3}\cup N_{4}}\left(2x_{i}-L(\mathbf{x})\right)}    \\
&\leq \frac{\sum_{i\in N_{3}}2x_{i}+|N_{2}|\cdot L(\mathbf{x})}{\sum_{i\in N_{3}}2x_{i}+\sum_{i\in N_{4}}2x_{i}-|N_{3}\cup N_{4}|\cdot L(\mathbf{x})}\\
&\leq \frac{\sum_{i\in N_{3}}2x_{i}+|N_{2}|\cdot L(\mathbf{x})}{\sum_{i\in N_{3}}2x_{i}+|N_{2}|\cdot L(\mathbf{x})+\sum_{i\in N_{4}}2x_{i}-2|N_{4}|\cdot L(\mathbf{x})}\\
&\leq \frac{\sum_{i\in N_{3}}2x_{i}+|N_{2}|\cdot L(\mathbf{x})}{\sum_{i\in N_{3}}2x_{i}+|N_{2}|\cdot L(\mathbf{x})+(|N_{4}|-1)\cdot L(\mathbf{x})+2L(\mathbf{x})-2|N_{4}|\cdot L(\mathbf{x})}\\
&\leq 1+\frac{2|N_{4}|\cdot L(\mathbf{x})-(|N_{4}|-1)\cdot L(\mathbf{x})-2L(\mathbf{x})}{\sum_{i\in N_{3}}2x_{i}+|N_{2}|\cdot L(\mathbf{x})+(|N_{4}|-1)\cdot L(\mathbf{x})+2L(\mathbf{x})-2|N_{4}|\cdot L(\mathbf{x})}\\
&\leq 1+\frac{2|N_{4}|-(|N_{4}|-1)-2}{|N_{3}|+|N_{2}|+|N_{4}|-1+2-2|N_{4}|}\\&=|N_{4}|\leq \frac{n-1}{2}\leq\frac{n}{2}.
\end{align*}
\textbf{Case 2.2}. $OPT(\mathbf{x})<\frac{L(\mathbf{x})}{2}$. By Lemma \ref{llem1}, we have $|N_{2}|=|N_{3}|+|N_{4}|$,
\begin{align*}
SC(\mathbf{x},f) &= \sum_{i\in N_{1}\cup N_{2}\cup N_{3}}x_{i}+\sum_{i\in N_{4}}(L(\mathbf{x})-x_{i}),\\
\frac{SC(\mathbf{x},0)-SC(\mathbf{x},OPT)}{SC(\mathbf{x},0)-SC(\mathbf{x},f)} &= \frac{\sum_{i\in N}x_{i}-\sum_{i\in N_{1}}x_{i}-\sum_{i\notin N_{1}}|x_{i}-OPT(\mathbf{x})|}{\sum_{i\in N}x_{i}-\sum_{i\in N_{1}\cup N_{2}\cup N_{3}}x_{i}-\sum_{i\in N_{4}}(L(\mathbf{x})-x_{i})}\\
&= \frac{\sum_{i\in N_{2}}2x_{i}}{\sum_{i\in N_{4}}\left(2x_{i}-L(\mathbf{x})\right)}\leq \frac{|N_{2}|\cdot L(\mathbf{x})}{L(\mathbf{x})}\leq\frac{n-1}{2}\leq\frac{n}{2}.
\end{align*}
\end{proof}
\begin{theorem}\label{tthm4}
Mechanism \ref{mec3} is a randomized strategy-proof $\frac{15}{13}$-improvement under the maximum cost objective.
\end{theorem}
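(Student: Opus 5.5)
Strategy-proofness of Mechanism \ref{mec3} is already established in Theorem \ref{thm11}, so only the improvement bound is needed. In the special setting $MC(\mathbf{x},0)=L(\mathbf{x})=L$. Write $b=b(\mathbf{x})$, $l=l(\mathbf{x})$ and $O=MC(\mathbf{x},OPT)=\max\{b,\frac{L-l}{2}\}$. The goal is
$$L-O\le \tfrac{15}{13}\,\bigl(L-MC(\mathbf{x},f)\bigr),\qquad\text{equivalently}\qquad MC(\mathbf{x},f)\le \tfrac{2}{15}L+\tfrac{13}{15}O .$$
So the task is an affine upper bound on $MC(\mathbf{x},f)$ in terms of $L$ and $O$, rather than the multiplicative bound of Theorem \ref{thm11}. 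Since $O\le L/3$ always (because $b\le L/3$ and $l>L/3$), the target is weakest when $O$ is small relative to $L$.

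I would reuse the case split and the expected-cost bounds from the approximation part of Theorem \ref{thm11}.

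\textbf{Case 1: $b\ge L-l$.} The proof of Theorem \ref{thm11} shows every output has maximum cost at most $b=O$. The ratio is then $1$.

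\textbf{Case 3: $b<L-l$ and $l>\frac{2L}{3}$.} Here $L-l<L/3$, so every output has maximum cost at most $2O$, and the middle point achieves $O$. This gives $MC(\mathbf{x},f)\le \frac{1}{6}(L-l)+\frac{1}{3}O+\frac{1}{2}(L-l)\le \frac{5}{3}O$. It then suffices to have $\frac{5}{3}O\le\frac{2}{15}L+\frac{13}{15}O$, i.e. $O\le L/6$. This holds because $O=\frac{L-l}{2}<\frac{L}{6}$ or $O=b<L-l<L/3$; the latter needs a closer look. If $b>L/6$, I would refine the bound using that outputs at $l$ and at $L$ then cost at most $\max\{b,L-l\}$, with $L-l<L/3$. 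I expect a direct linear check over the triangle $b<L-l<L/3$ to close it.

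\textbf{Case 2: $b<L-l$ and $l=\frac{L}{3}+\Delta\le\frac{2L}{3}$.} Take the bound from Theorem \ref{thm11}:
$$MC\le \tfrac{L}{18}+\tfrac13\max\{b,\tfrac L2-\Delta\}+\tfrac L3-\tfrac\Delta2 ,\qquad O=\max\{b,\tfrac L3-\tfrac\Delta2\}.$$
Normalize $L=1$. Then $L-O$ and $L-MC$ are explicit piecewise-linear functions of $(b,\Delta)$ on a polygon, and the ratio $\frac{1-O}{1-MC}$ is a ratio of affine functions on each piece. Its supremum is therefore attained at a vertex of the pieces. I would enumerate the candidate vertices: $\Delta\to0$, $\Delta=\frac13$, and the breakpoints $b=\frac12-\Delta$ and $b=\frac13-\frac\Delta2$. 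In the subcase $b<\frac12-\Delta$ with $O=\frac13-\frac\Delta2$, the computation of Theorem \ref{thm11} gives $MC=\frac53 O$. The ratio becomes $\frac{1-O}{1-\frac53O}$, which is increasing in $O$, with $O\le\frac13$. I expect the extremal point to be around $\Delta$ small and $b$ at a breakpoint, yielding exactly $\frac{15}{13}$. The tight example $(4,6)$ has $L=6$, $O=1$, $MC=\frac53$, giving $\frac{5}{13/3}=\frac{15}{13}$, so the relevant extremum is $O/L=\frac16$.

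\textbf{Main obstacle.} The worry is that the Case 2 bound at $O$ near $L/3$ gives a ratio $\frac{2/3}{4/9}=1.5$, which is too large. I would therefore sharpen the expected-cost bound by computing exactly which agent determines the maximum cost at each output. When $l$ is near $L/3$, the output $A=\frac{2L}{3}$ leaves $l$ at cost about $L/3$, but the midpoint and $L$ fare better. I would also use that $b$ caps $O$ from below. I expect that an exact evaluation of $MC$, namely $\frac16\max\{b,\frac{L}{3}\}+\frac13\max\{b,\cdot\}+\frac12\max\{b,L-l\}$, with the true maxima rather than the loose bound, brings the vertex values down to at most $\frac{15}{13}$. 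This refined vertex enumeration is where I expect the real work, and the possible failure, to lie.
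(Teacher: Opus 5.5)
Your Cases 1 and 3 are fine. Case 3 is where $\frac{15}{13}$ is attained, at $l=\frac{2L}{3}$, and your linear check for $b>\frac{L}{6}$ closes, since $\frac{2}{9}L+\frac{b}{3}\le\frac{2}{15}L+\frac{13}{15}b$ iff $b\ge\frac{L}{6}$.

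The gap is the Case 2 step you flag yourself, and it cannot be closed, because the statement is false. Take $\mathbf{x}=(1,2)$; add an agent at $0$ if you want $b(\mathbf{x})$ to be defined. Here $L=2$ and $l=1<\frac{2L}{3}$, so Mechanism \ref{mec3} outputs $\frac43,\frac53,2$ with probabilities $\frac16,\frac13,\frac12$. The corresponding maximum costs are $\frac23$ (agent at $2$), $\frac23$ (agent at $1$) and $1$ (agent at $1$). Hence $MC(\mathbf{x},f)=\frac19+\frac29+\frac12=\frac56$. Since $MC(\mathbf{x},0)=2$ and $MC(\mathbf{x},OPT)=\frac12$,
\[
\frac{MC(\mathbf{x},0)-MC(\mathbf{x},OPT)}{MC(\mathbf{x},0)-MC(\mathbf{x},f)}=\frac{3/2}{7/6}=\frac{9}{7}>\frac{15}{13}.
\]
In your notation this instance has $\Delta=\frac{L}{6}$, $b<\frac{L}{2}-\Delta$ and $O=\frac{L}{4}$, and your bound $MC(\mathbf{x},f)\le\frac53 O$ holds with equality. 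More generally, suppose $\frac{L}{2}\le l<\frac{2L}{3}$ and $b\le\frac{L-l}{2}$. Then the three outputs have maximum cost exactly $\frac{L}{3}$, $\frac{5L}{6}-l$ and $L-l$, so $MC(\mathbf{x},f)=\frac53 O$ exactly. The ratio $\frac{L-O}{L-\frac53 O}$ then exceeds $\frac{15}{13}$ for every $O\in(\frac{L}{6},\frac{L}{4}]$, so no sharper evaluation can rescue this range.

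Your plan for $O$ near $\frac{L}{3}$ has a second problem: it evaluates costs only at $b,l,L$. The output depends only on $b,l,L$, but $MC(\mathbf{x},f)$ depends on every agent in $[l,L]$. Adding agents at $\frac{5L}{12}$ and $\frac{L}{2}$ drives the ratio to $\frac65>\frac{15}{13}$ as $l\downarrow\frac{L}{3}$.

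The paper's own proof breaks at the same place. In its Case 2.2 it bounds $L-b$ by $\frac{L}{2}+\Delta$, but the case hypothesis $b<\frac{L}{2}-\Delta$ gives $L-b>\frac{L}{2}+\Delta$, the opposite inequality. It then claims $\frac{9L+18\Delta}{8L+15\Delta}\le\frac98$, although this expression is increasing in $\Delta$. Its Case 2.1 evaluates at the infeasible value $b=\frac{L}{2}$. That case is nevertheless bounded by $\frac{15}{13}$, because $\frac{L-b}{\frac{31}{36}L-\frac56 b}$ is decreasing in $b$ and $b\ge\frac{L}{6}$ there.

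Carrying out your vertex enumeration correctly, with extra agents allowed in $[l,L]$, gives supremum exactly $\frac97$, attained at $(1,2)$. So what can be proved is that Mechanism \ref{mec3} is a $\frac97$-improvement mechanism, not $\frac{15}{13}$.
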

\begin{proof}
  Denote Mechanism \ref{mec2} by $f$. Without loss of generality, consider any instance $\mathbf{x}=(x_{1},x_{2},...,x_{3})$ with $L(\mathbf{x})=x_{n}\geq x_{1} \geq0$. We have
  $$MC(\mathbf{x},OPT)=\max\left\{b(\mathbf{x}),\frac{L(\mathbf{x})-l(\mathbf{x})}{2}\right\}.$$

\textbf{Case 1}. $b(\mathbf{x})\geq L(\mathbf{x})-l(\mathbf{x})$.
$$\frac{L(\mathbf{x})-MC(\mathbf{x},OPT)}{L(\mathbf{x})-MC(\mathbf{x},f)}\leq\frac{L(\mathbf{x})-b(\mathbf{x})}{L(\mathbf{x})-b(\mathbf{x})}=1<\frac{15}{13}.$$

\textbf{Case 2}. $b(\mathbf{x})<L(\mathbf{x})-l(\mathbf{x})$ and $l(\mathbf{x})<\frac{2L(\mathbf{x})}{3}$.

Let $l(\mathbf{x}) = \frac{L(\mathbf{x})}{3}+ \Delta \leq \frac{2L(\mathbf{x})}{3}, \Delta \in (0, \frac{L(\mathbf{x})}{3}]$. We have
  \begin{align*}
  MC(\mathbf{x},OPT)&=\max\left\{b(\mathbf{x}),\frac{L(\mathbf{x})-l(\mathbf{x})}{2}\right\}=\max\left\{b(\mathbf{x}),\frac{L(\mathbf{x})}{3}-\frac{\Delta}{2}\right\},\\
  MC( \mathbf{x},f) &\leq \frac{1}{6}\cdot \frac{L( \mathbf{x})}{3}+\frac{1}{3}\cdot\max\left\{b( \mathbf{x}),cost(l( \mathbf{x}),\frac{5L(\mathbf{x})}{6} )\right\}+\frac{1}{2}\cdot (L( \mathbf{x})-l( \mathbf{x}))\\
  &= \frac{L(\mathbf{x})}{18}+\frac{1}{3}\cdot\max\left\{b( \mathbf{x}),\frac{L( \mathbf{x})}{2}-\Delta\right\}+\frac{L(\mathbf{x})}{3}-\frac{\Delta}{2}.
  \end{align*}

\textbf{Case 2.1} $b( \mathbf{x}) \geq \frac{L(\mathbf{x})}{2} - \Delta \in \left[\frac{L(\mathbf{x})}{6}, \frac{L(\mathbf{x})}{2}\right)$.
\begin{align*}
&\frac{L(\mathbf{x})-MC(\mathbf{x},OPT)}{L(\mathbf{x})-MC(\mathbf{x},f)} \\&
\leq\frac{L(\mathbf{x})-\max\left\{b(\mathbf{x}),\frac{L(\mathbf{x})}{3}-\frac{\Delta}{2}\right\}}{L(\mathbf{x})-\left[\frac{L(\mathbf{x})}{18}+\frac{1}{3}b(\mathbf{x})+\frac{L(\mathbf{x})}{3}-\frac{\Delta}{2}\right]} \\&
=\frac{L(\mathbf{x})-\max\left\{b(\mathbf{x}),\frac{L(\mathbf{x})}{3}-\frac{\Delta}{2}\right\}}{\frac{11}{18}L(\mathbf{x})-\frac{b(\mathbf{x})}{3}+\frac{\Delta}{2}}\\
\leq&\frac{L(\mathbf{x})-b(\mathbf{x})}{\frac{11}{18}L(\mathbf{x})-\frac{b(\mathbf{x})}{3}+\frac{L(\mathbf{x})}{4}-\frac{b(\mathbf
x)}{2}}\\&
=\frac{L(\mathbf{x})-b(\mathbf{x})}{\frac{31}{36}L(\mathbf{x})-\frac{5}{6}b(\mathbf{x})}\\&
\leq\frac{1-\frac{1}{2}}{\frac{31}{35}-\frac{5}{6}\cdot\frac{1}{2}}<\frac{15}{13}.
\end{align*}

\textbf{Case 2.2} $b( \mathbf{x}) < \frac{L(\mathbf{x})}{2} - \Delta$.
\begin{align*}
&\frac{L(\mathbf{x})-MC(\mathbf{x},OPT)}{L(\mathbf{x})-MC(\mathbf{x},f)}\\&\leq\frac{L(\mathbf{x})-\max\left\{b(\mathbf{x}),\frac{L(\mathbf{x})}{3}-\frac{\Delta}{2}\right\}}{L(\mathbf{x})-\left\{\frac{L(\mathbf{x})}{18}+\frac{1}{3}\left[\frac{L(\mathbf{x})}{2}-\Delta\right]+\frac{L(\mathbf{x})}{3}-\frac{\Delta}{2}\right\}}\\&=\frac{L(\mathbf{x})-\max\left\{b(\mathbf{x}),\frac{L(\mathbf{x})}{3}-\frac{\Delta}{2}\right\}}{\frac{4}{9}L(\mathbf{x})+\frac{5}{6}\Delta}\\
&\leq\frac{L(\mathbf{x})-b(\mathbf{x})}{\frac{4}{9}L(\mathbf{x})+\frac{5}{6}\Delta}\\&\leq\frac{L(\mathbf{x})-\frac{L(\mathbf{x})}{2}+\Delta}{\frac{4}{9}L(\mathbf{x})+\frac{5}{6}\Delta}\\&=\frac{9L(\mathbf{x})+18\Delta}{8L(\mathbf{x})+15\Delta}\leq\frac{9}{8}<\frac{15}{13}.
\end{align*}

\textbf{Case 3}. $b(\mathbf{x})<L(\mathbf{x})-l(\mathbf{x})$ and $l(\mathbf{x})\geq\frac{2L(\mathbf{x})}{3}$.
\begin{align*}
MC( \mathbf{x},OPT) &= \max\left\{b( \mathbf{x}), \frac{L( \mathbf{x}) - l( \mathbf{x})}{2}\right\}, \\
MC( \mathbf{x},f) &= \frac{2}{3} \cdot \max\left\{b( \mathbf{x}), L( \mathbf{x}) - l( \mathbf{x})\right\}+\frac{1}{3}\max\left\{b(\mathbf{x}),\frac{L(\mathbf{x})-l(\mathbf{x})}{2}\right\},\\
\frac{L(\mathbf{x})-MC(\mathbf{x},OPT)}{L(\mathbf{x})-MC(\mathbf{x},f)} &= \frac{L(\mathbf{x})-\max\left\{b( \mathbf{x}), \frac{L( \mathbf{x}) - l( \mathbf{x})}{2}\right\}}{L(\mathbf{x})-\frac{2}{3}\left[L( \mathbf{x}) - l( \mathbf{x})\right]-\frac{1}{3}\max\left\{b(\mathbf{x}),\frac{L(\mathbf{x})-l(\mathbf{x})}{2}\right\}}\\
&=\frac{L(\mathbf{x})-\max\left\{b(\mathbf{x}),\frac{L(\mathbf{x})-l(\mathbf{x})}{2}\right\}+2l(\mathbf{x})-2l(\mathbf{x})}{\frac{L(\mathbf{x})}{3}-\frac{1}{3}\max\left\{b(\mathbf{x}),\frac{L(\mathbf{x})-l(\mathbf{x})}{2}\right\}+\frac{2l(\mathbf{x})}{3}}\\
&\leq 3-\frac{6l(\mathbf{x})}{\frac{L(\mathbf{x})}{2}+\frac{5l(\mathbf{x})}{2}}=3-\frac{12l(\mathbf{x})}{L(\mathbf{x})+5l(\mathbf{x})}\leq3-\frac{8}{1+\frac{10}{3}}=\frac{15}{13}.
\end{align*}
\end{proof}
\section{Conclusions and Open Problems}\label{sec5}


This paper studied strategy-proof mechanisms for locating a new homogeneous facility when a facility was already prelocated, with agents on a real line or a circle. Agents’ private locations determined their cost as the distance to the nearest facility. We considered two settings: the general setting where agents could be on both sides of the prelocated facility, and the special setting where all agents lay on the same side.

For the line in the general setting, we designed a deterministic mechanism achieving a tight 2-approximation for the maximum cost, and proved a lower bound of $1.5 - \epsilon$ for any randomized mechanism. For social cost, the deterministic mechanisms had an upper bound of $n$ and lower bounds of 1.5 (deterministic) and 1.0425 (randomized). In the special setting, we provided a randomized mechanism that achieved an approximation ratio $5/3$ for maximum cost, and a deterministic mechanism achieved $n-1$ for the social cost. On the circle, we provided a deterministic mechanism with 2-approximation for the maximum cost.

We also proposed the improvement ratio, a performance measure specially designed for mechanisms in facility location problems with prelocated facilities. For the maximum cost, we established tight bounds of 1.5 for deterministic mechanisms in the general setting, and an upper bound of $15/13$ for randomized mechanisms in the special setting. For social cost in the special setting, we gave an upper bound of $n/2$.

For future work, a natural direction was to narrow down the gap between the upper bound and the lower bound for strategy-proof mechanisms in our model. Our model could also be extended to other settings with various facility preferences, such as obnoxious facilities, or where agents had different preferences between the prelocated facility and the newly established one. It was also worthwhile to investigate locating new facilities under a certain constraint on facilities.

\bmhead{Acknowledgements}
 This research was supported in part by the National Natural Science Foundation of China (12201590, 12171444) and Natural Science Foundation of Shandong Province (ZR2024MA031).
 
\bmhead{Data availability}
None.

\section*{Declarations}
\bmhead{Conflict of interest}
The authors declare that they have no conflict of interest.



\backmatter


\begin{thebibliography}{22}
\ifx \bisbn   \undefined \def \bisbn  #1{ISBN #1}\fi
\ifx \binits  \undefined \def \binits#1{#1}\fi
\ifx \bauthor  \undefined \def \bauthor#1{#1}\fi
\ifx \batitle  \undefined \def \batitle#1{#1}\fi
\ifx \bjtitle  \undefined \def \bjtitle#1{#1}\fi
\ifx \bvolume  \undefined \def \bvolume#1{\textbf{#1}}\fi
\ifx \byear  \undefined \def \byear#1{#1}\fi
\ifx \bissue  \undefined \def \bissue#1{#1}\fi
\ifx \bfpage  \undefined \def \bfpage#1{#1}\fi
\ifx \blpage  \undefined \def \blpage #1{#1}\fi
\ifx \burl  \undefined \def \burl#1{\textsf{#1}}\fi
\ifx \doiurl  \undefined \def \doiurl#1{\url{https://doi.org/#1}}\fi
\ifx \betal  \undefined \def \betal{\textit{et al.}}\fi
\ifx \binstitute  \undefined \def \binstitute#1{#1}\fi
\ifx \binstitutionaled  \undefined \def \binstitutionaled#1{#1}\fi
\ifx \bctitle  \undefined \def \bctitle#1{#1}\fi
\ifx \beditor  \undefined \def \beditor#1{#1}\fi
\ifx \bpublisher  \undefined \def \bpublisher#1{#1}\fi
\ifx \bbtitle  \undefined \def \bbtitle#1{#1}\fi
\ifx \bedition  \undefined \def \bedition#1{#1}\fi
\ifx \bseriesno  \undefined \def \bseriesno#1{#1}\fi
\ifx \blocation  \undefined \def \blocation#1{#1}\fi
\ifx \bsertitle  \undefined \def \bsertitle#1{#1}\fi
\ifx \bsnm \undefined \def \bsnm#1{#1}\fi
\ifx \bsuffix \undefined \def \bsuffix#1{#1}\fi
\ifx \bparticle \undefined \def \bparticle#1{#1}\fi
\ifx \barticle \undefined \def \barticle#1{#1}\fi
\bibcommenthead
\ifx \bconfdate \undefined \def \bconfdate #1{#1}\fi
\ifx \botherref \undefined \def \botherref #1{#1}\fi
\ifx \url \undefined \def \url#1{\textsf{#1}}\fi
\ifx \bchapter \undefined \def \bchapter#1{#1}\fi
\ifx \bbook \undefined \def \bbook#1{#1}\fi
\ifx \bcomment \undefined \def \bcomment#1{#1}\fi
\ifx \oauthor \undefined \def \oauthor#1{#1}\fi
\ifx \citeauthoryear \undefined \def \citeauthoryear#1{#1}\fi
\ifx \endbibitem  \undefined \def \endbibitem {}\fi
\ifx \bconflocation  \undefined \def \bconflocation#1{#1}\fi
\ifx \arxivurl  \undefined \def \arxivurl#1{\textsf{#1}}\fi
\csname PreBibitemsHook\endcsname

\bibitem[\protect\citeauthoryear{Procaccia and Tennenholtz}{2013}]{bib1}
\begin{barticle}
\bauthor{\bsnm{Procaccia}, \binits{A.D.}},
\bauthor{\bsnm{Tennenholtz}, \binits{M.}}:
\batitle{Approximate mechanism design without money}.
\bjtitle{ACM Transactions on Economics and Computation (TEAC)}
\bvolume{1}(\bissue{4}),
\bfpage{1}--\blpage{26}
(\byear{2013})
\end{barticle}
\endbibitem

\bibitem[\protect\citeauthoryear{Lu et~al.}{2010}]{bib3}
\begin{bchapter}
\bauthor{\bsnm{Lu}, \binits{P.}},
\bauthor{\bsnm{Sun}, \binits{X.}},
\bauthor{\bsnm{Wang}, \binits{Y.}},
\bauthor{\bsnm{Zhu}, \binits{Z.A.}}:
\bctitle{Asymptotically optimal strategy-proof mechanisms for two-facility
  games}.
In: \bbtitle{Proceedings of the 11th ACM Conference on Electronic Commerce},
pp. \bfpage{315}--\blpage{324}
(\byear{2010})
\end{bchapter}
\endbibitem

\bibitem[\protect\citeauthoryear{Fotakis and Tzamos}{2014}]{bib4}
\begin{barticle}
\bauthor{\bsnm{Fotakis}, \binits{D.}},
\bauthor{\bsnm{Tzamos}, \binits{C.}}:
\batitle{On the power of deterministic mechanisms for facility location games}.
\bjtitle{ACM Transactions on Economics and Computation (TEAC)}
\bvolume{2}(\bissue{4}),
\bfpage{1}--\blpage{37}
(\byear{2014})
\end{barticle}
\endbibitem

\bibitem[\protect\citeauthoryear{Alon et~al.}{2009}]{bib2}
\begin{botherref}
\oauthor{\bsnm{Alon}, \binits{N.}},
\oauthor{\bsnm{Feldman}, \binits{M.}},
\oauthor{\bsnm{Procaccia}, \binits{A.D.}},
\oauthor{\bsnm{Tennenholtz}, \binits{M.}}:
Strategyproof approximation mechanisms for location on networks.
CoRR,abs/0907.2049
(2009)
\end{botherref}
\endbibitem

\bibitem[\protect\citeauthoryear{Cheng et~al.}{2013}]{bib21}
\begin{barticle}
\bauthor{\bsnm{Cheng}, \binits{Y.}},
\bauthor{\bsnm{Yu}, \binits{W.}},
\bauthor{\bsnm{Zhang}, \binits{G.}}:
\batitle{Strategy-proof approximation mechanisms for an obnoxious facility game
  on networks}.
\bjtitle{Theoretical Computer Science}
\bvolume{497},
\bfpage{154}--\blpage{163}
(\byear{2013})
\end{barticle}
\endbibitem

\bibitem[\protect\citeauthoryear{Ye et~al.}{2015}]{bib20}
\begin{bchapter}
\bauthor{\bsnm{Ye}, \binits{D.}},
\bauthor{\bsnm{Mei}, \binits{L.}},
\bauthor{\bsnm{Zhang}, \binits{Y.}}:
\bctitle{Strategy-proof mechanism for obnoxious facility location on a line}.
In: \bbtitle{International Computing and Combinatorics Conference},
pp. \bfpage{45}--\blpage{56}.
\bpublisher{Springer},
\blocation{Beijing}
(\byear{2015})
\end{bchapter}
\endbibitem

\bibitem[\protect\citeauthoryear{Serafino and Ventre}{2015}]{bib18}
\begin{bchapter}
\bauthor{\bsnm{Serafino}, \binits{P.}},
\bauthor{\bsnm{Ventre}, \binits{C.}}:
\bctitle{Truthful mechanisms without money for non-utilitarian heterogeneous
  facility location}.
In: \bbtitle{Proceedings of the AAAI Conference on Artificial Intelligence},
vol. \bseriesno{29}
(\byear{2015})
\end{bchapter}
\endbibitem

\bibitem[\protect\citeauthoryear{Serafino and Ventre}{2016}]{bib5}
\begin{barticle}
\bauthor{\bsnm{Serafino}, \binits{P.}},
\bauthor{\bsnm{Ventre}, \binits{C.}}:
\batitle{Heterogeneous facility location without money}.
\bjtitle{Theoretical Computer Science}
\bvolume{636},
\bfpage{27}--\blpage{46}
(\byear{2016})
\end{barticle}
\endbibitem

\bibitem[\protect\citeauthoryear{Yuan et~al.}{2016}]{bib6}
\begin{bchapter}
\bauthor{\bsnm{Yuan}, \binits{H.}},
\bauthor{\bsnm{Wang}, \binits{K.}},
\bauthor{\bsnm{Fong}, \binits{K.C.}},
\bauthor{\bsnm{Zhang}, \binits{Y.}},
\bauthor{\bsnm{Li}, \binits{M.}}:
\bctitle{Facility location games with optional preference}.
In: \bbtitle{ECAI 2016},
pp. \bfpage{1520}--\blpage{1527}.
\bpublisher{IOS Press},
\blocation{Amsterdam}
(\byear{2016})
\end{bchapter}
\endbibitem

\bibitem[\protect\citeauthoryear{Li et~al.}{2021}]{bib7}
\begin{bchapter}
\bauthor{\bsnm{Li}, \binits{M.}},
\bauthor{\bsnm{Lu}, \binits{P.}},
\bauthor{\bsnm{Yao}, \binits{Y.}},
\bauthor{\bsnm{Zhang}, \binits{J.}}:
\bctitle{Strategyproof mechanism for two heterogeneous facilities with constant
  approximation ratio}.
In: \bbtitle{Proceedings of the Twenty-Ninth International Conference on
  International Joint Conferences on Artificial Intelligence},
pp. \bfpage{238}--\blpage{245}
(\byear{2021})
\end{bchapter}
\endbibitem

\bibitem[\protect\citeauthoryear{Deligkas et~al.}{2023}]{bib8}
\begin{barticle}
\bauthor{\bsnm{Deligkas}, \binits{A.}},
\bauthor{\bsnm{Filos-Ratsikas}, \binits{A.}},
\bauthor{\bsnm{Voudouris}, \binits{A.A.}}:
\batitle{Heterogeneous facility location with limited resources}.
\bjtitle{Games and Economic Behavior}
\bvolume{139},
\bfpage{200}--\blpage{215}
(\byear{2023})
\end{barticle}
\endbibitem

\bibitem[\protect\citeauthoryear{Fong et~al.}{2018}]{bib9}
\begin{bchapter}
\bauthor{\bsnm{Fong}, \binits{C.K.K.}},
\bauthor{\bsnm{Li}, \binits{M.}},
\bauthor{\bsnm{Lu}, \binits{P.}},
\bauthor{\bsnm{Todo}, \binits{T.}},
\bauthor{\bsnm{Yokoo}, \binits{M.}}:
\bctitle{Facility location games with fractional preferences}.
In: \bbtitle{Proceedings of the AAAI Conference on Artificial Intelligence},
vol. \bseriesno{32}
(\byear{2018})
\end{bchapter}
\endbibitem

\bibitem[\protect\citeauthoryear{Feldman and Wilf}{2013}]{bib14}
\begin{bchapter}
\bauthor{\bsnm{Feldman}, \binits{M.}},
\bauthor{\bsnm{Wilf}, \binits{Y.}}:
\bctitle{Strategyproof facility location and the least squares objective}.
In: \bbtitle{Proceedings of the Fourteenth ACM Conference on Electronic
  Commerce},
pp. \bfpage{873}--\blpage{890}
(\byear{2013})
\end{bchapter}
\endbibitem

\bibitem[\protect\citeauthoryear{Fotakis and Tzamos}{2013}]{bib15}
\begin{bchapter}
\bauthor{\bsnm{Fotakis}, \binits{D.}},
\bauthor{\bsnm{Tzamos}, \binits{C.}}:
\bctitle{Strategyproof facility location for concave cost functions}.
In: \bbtitle{Proceedings of the Fourteenth ACM Conference on Electronic
  Commerce},
pp. \bfpage{435}--\blpage{452}
(\byear{2013})
\end{bchapter}
\endbibitem

\bibitem[\protect\citeauthoryear{Chen et~al.}{2021}]{bib12}
\begin{barticle}
\bauthor{\bsnm{Chen}, \binits{X.}},
\bauthor{\bsnm{Hu}, \binits{X.}},
\bauthor{\bsnm{Tang}, \binits{Z.}},
\bauthor{\bsnm{Wang}, \binits{C.}}:
\batitle{Tight efficiency lower bounds for strategy-proof mechanisms in
  two-opposite-facility location game}.
\bjtitle{Information Processing Letters}
\bvolume{168},
\bfpage{106098}
(\byear{2021})
\end{barticle}
\endbibitem

\bibitem[\protect\citeauthoryear{Tang et~al.}{2020}]{bib10}
\begin{bchapter}
\bauthor{\bsnm{Tang}, \binits{Z.}},
\bauthor{\bsnm{Wang}, \binits{C.}},
\bauthor{\bsnm{Zhang}, \binits{M.}},
\bauthor{\bsnm{Zhao}, \binits{Y.}}:
\bctitle{Mechanism design for facility location games with candidate
  locations}.
In: \bbtitle{International Conference on Combinatorial Optimization and
  Applications},
pp. \bfpage{440}--\blpage{452}.
\bpublisher{Springer},
\blocation{Dallas, Texas}
(\byear{2020})
\end{bchapter}
\endbibitem

\bibitem[\protect\citeauthoryear{Xu et~al.}{2021}]{bib13}
\begin{barticle}
\bauthor{\bsnm{Xu}, \binits{X.}},
\bauthor{\bsnm{Li}, \binits{B.}},
\bauthor{\bsnm{Li}, \binits{M.}},
\bauthor{\bsnm{Duan}, \binits{L.}}:
\batitle{Two-facility location games with minimum distance requirement}.
\bjtitle{Journal of Artificial Intelligence Research}
\bvolume{70},
\bfpage{719}--\blpage{756}
(\byear{2021})
\end{barticle}
\endbibitem

\bibitem[\protect\citeauthoryear{Zou and Li}{2015}]{bib11}
\begin{bchapter}
\bauthor{\bsnm{Zou}, \binits{S.}},
\bauthor{\bsnm{Li}, \binits{M.}}:
\bctitle{Facility location games with dual preference}.
In: \bbtitle{Proceedings of the 2015 International Conference on Autonomous
  Agents and Multiagent Systems},
pp. \bfpage{615}--\blpage{623}
(\byear{2015})
\end{bchapter}
\endbibitem

\bibitem[\protect\citeauthoryear{Chan et~al.}{2021}]{bib22}
\begin{bchapter}
\bauthor{\bsnm{Chan}, \binits{H.}},
\bauthor{\bsnm{Filos-Ratsikas}, \binits{A.}},
\bauthor{\bsnm{Li}, \binits{B.}},
\bauthor{\bsnm{Li}, \binits{M.}},
\bauthor{\bsnm{Wang}, \binits{C.}}:
\bctitle{Mechanism design for facility location problems: A survey}.
In: \bbtitle{30th International Joint Conference on Artificial Intelligence},
pp. \bfpage{4356}--\blpage{4365}
(\byear{2021}).
\bcomment{International Joint Conferences on Artificial Intelligence
  Organization}
\end{bchapter}
\endbibitem

\bibitem[\protect\citeauthoryear{Chan and Wang}{2023}]{bib16}
\begin{bchapter}
\bauthor{\bsnm{Chan}, \binits{H.}},
\bauthor{\bsnm{Wang}, \binits{C.}}:
\bctitle{Mechanism design for improving accessibility to public facilities}.
In: \bbtitle{Proceedings of the 2023 International Conference on Autonomous
  Agents and Multiagent Systems},
pp. \bfpage{2116}--\blpage{2124}
(\byear{2023})
\end{bchapter}
\endbibitem

\bibitem[\protect\citeauthoryear{Chan et~al.}{2024}]{bib17}
\begin{bchapter}
\bauthor{\bsnm{Chan}, \binits{H.}},
\bauthor{\bsnm{Fu}, \binits{X.}},
\bauthor{\bsnm{Li}, \binits{M.}},
\bauthor{\bsnm{Wang}, \binits{C.}}:
\bctitle{Mechanism design for reducing agent distances to prelocated
  facilities}.
In: \bbtitle{Proceedings of the 23rd International Conference on Autonomous
  Agents and Multiagent Systems},
pp. \bfpage{2180}--\blpage{2182}
(\byear{2024})
\end{bchapter}
\endbibitem

\bibitem[\protect\citeauthoryear{Qin et~al.}{2024}]{bib19}
\begin{bchapter}
\bauthor{\bsnm{Qin}, \binits{Z.}},
\bauthor{\bsnm{Chan}, \binits{H.}},
\bauthor{\bsnm{Wang}, \binits{C.}},
\bauthor{\bsnm{Zhang}, \binits{Y.}}:
\bctitle{Mechanism design for building optimal bridges between regions}.
In: \bbtitle{Annual Conference on Theory and Applications of Models of
  Computation},
pp. \bfpage{332}--\blpage{343}.
\bpublisher{Springer},
\blocation{Singapore}
(\byear{2024})
\end{bchapter}
\endbibitem

\end{thebibliography}
\end{document}